\documentclass[11pt,onecolumn]{article}

\usepackage[margin=1in]{geometry}
\usepackage{setspace}
\usepackage{parskip}
\usepackage{authblk}
\usepackage{amsmath,amssymb,amsfonts,amsthm}
\usepackage{graphicx}
\usepackage{caption}
\usepackage{dcolumn}
\usepackage{bm}
\usepackage[colorlinks=true,linkcolor=blue,citecolor=blue]{hyperref}
\usepackage{braket}
\usepackage{xcolor}
\usepackage{array}
\usepackage{booktabs}
\usepackage[numbers,sort&compress]{natbib}
\usepackage[expansion=false]{microtype}
\usepackage{tikz}
\usetikzlibrary{shadows,calc,arrows.meta,positioning,fit,backgrounds}

\newcommand{\Kop}{\mathcal{K}}

\newcommand{\tr}{\mathrm{Tr}}
\newcommand{\id}{\mathbf{1}}
\newcommand{\diag}{\mathrm{diag}}

\newcommand{\alphaop}{\alpha^{\mathrm{op}}}
\newcommand{\betaop}{\beta^{\mathrm{op}}}
\newcommand{\tauop}{\tau^{\mathrm{op}}}
\newcommand{\Jop}{J^{\mathrm{op}}_m}
\newcommand{\tbetaop}{\tilde\beta^{\mathrm{op}}}

\newtheorem{theorem}{Theorem}
\newtheorem{proposition}[theorem]{Proposition}
\newtheorem{lemma}[theorem]{Lemma}

\newtheorem{remark}{Remark}

\begin{document}

\title{Quantum Krylov Learning}

\author[1,2,3,4]{Shunji Matsuura}
\author[4]{Yoji Kawamura}
\author[2,5,6]{Joseph Salfi}
\author[1,7,8]{Satoshi Iso}

\affil[1]{RIKEN Center for Interdisciplinary Theoretical and Mathematical Sciences (iTHEMS), RIKEN, Wako, Saitama 351-0198, Japan}
\affil[2]{Department of Electrical and Computer Engineering, University of British Columbia, Vancouver, BC V6T 1Z4, Canada}
\affil[3]{Department of Physics, University of Guelph, Guelph, ON N1G 2W1, Canada}
\affil[4]{Center for Mathematical Science and Advanced Technology, Japan Agency for Marine-Earth Science and Technology, Yokohama 236-0001, Japan}
\affil[5]{Department of Physics and Astronomy, University of British Columbia, Vancouver, BC V6T 1Z4, Canada}
\affil[6]{Stewart Blusson Quantum Matter Institute, University of British Columbia, Vancouver, BC V6T 1Z4, Canada}
\affil[7]{KEK Theory Center, Institute of Particle and Nuclear Studies, Oho 1-1, Tsukuba, Ibaraki 305-0801, Japan}
\affil[8]{Graduate University for Advanced Studies (SOKENDAI), Oho 1-1, Tsukuba, Ibaraki 305-0801, Japan}

\date{\today}

\maketitle

\begin{abstract}
The Hamiltonian of a natural quantum system is usually determined only indirectly.
In conventional approaches, microscopic models are proposed first and then tested
by comparing their predicted observables with experimental measurements.
A more direct approach is to extract information about the underlying Hamiltonian
from measurements performed on the physical system itself.
In many experimentally relevant settings, however, direct access to the full system is not available,
and measurements can be performed only through controllable probe degrees of freedom.
This situation arises naturally in quantum sensing and boundary spectroscopy.
Existing Hamiltonian-learning protocols under such restricted access
typically require either a special symmetry or a known parametric form of the Hamiltonian,
and many also rely on controlled operations beyond free evolution.
We introduce \emph{Quantum Krylov Learning (QKL)}, a restricted-access
Hamiltonian-learning framework that avoids these requirements.
The QKL uses single-temperature probe autocorrelators and their operator Krylov structure
to extract model-independent spectral information about the unknown system.
Rather than fitting a parametrized Hamiltonian, it reconstructs
the operator Krylov Jacobi matrix
which represents the dynamical modes accessible to the probe.
The method assumes only that the unobserved part of the system is at, or near,
infinite temperature.
Under this condition, we show that the operator Krylov Jacobi matrix can be reconstructed directly from
the autocorrelation function of a probe observable,
without oracle access to the Hamiltonian.
The required operations are only free evolution under the unknown $H$,
together with probe state preparation and measurement.
The learning core therefore requires no symmetry, no parametric model,
and no controlled-unitary gates.
We show a Lieb--Robinson bound for the operator Krylov chain
and use it to derive a data-driven criterion for the required observation window.
We also establish an end-to-end sample complexity guarantee.
Under generic conditions on the relevant spectral gaps and spectral weights,
a total number of measurement shots polynomial in $1/\varepsilon$
is sufficient to achieve spectral precision $\varepsilon$,
with explicit polynomial degrees.
We verify QKL numerically on an XXZ spin ring with a transverse field.
The fully data-driven pipeline reproduces the leading oracle Jacobi coefficients
to within a few percent, the higher coefficients accessible within the present
observation window to $10$--$34\%$, and the transport-time scale $\tauop$
to $0.9$--$6.6\%$.

\end{abstract}

\section{Introduction}
\label{sec:intro}

The microscopic Hamiltonian of a natural quantum system is usually obtained only indirectly.
Experiments measure observables such as order parameters, correlation functions,
or responses to external fields, and these measurements are compared with predictions
derived from a candidate microscopic model.
The Hamiltonian itself is therefore not determined directly from the experimental data.

A more direct approach to determining a microscopic Hamiltonian has been developed within quantum information theory, where it is commonly known as Hamiltonian learning. 
Rather than proposing a microscopic model first and testing its predictions against measured observables, Hamiltonian-learning methods aim to extract properties of the Hamiltonian directly from the observed dynamics of the system \cite{burgarth2009,burgarth2009gateway,difranco2009,wiebe2014,haah2023,huang2023,bairey2019,lapasar2012,hangleiter2024}. Many existing approaches, however, assume a level of experimental control characteristic of programmable quantum devices. In particular, they often rely on the ability to prepare a broad class of initial states, to apply controlled gate operations in addition to the free evolution generated by the unknown Hamiltonian, and to perform measurements in chosen bases. These assumptions are natural in the setting of a quantum processor, but they are generally unavailable when the goal is to learn the Hamiltonian of a naturally occurring quantum system. In such settings, state preparation, external control, and measurement access may all be strongly restricted, so that many of the standard assumptions underlying Hamiltonian learning no longer apply.

In many experimentally relevant settings, the microscopic degrees of freedom
of the system cannot be addressed or measured individually.
Information is instead obtained through controllable probe degrees of freedom
that remain coupled to the otherwise inaccessible system.
This restricted-access setting appears naturally in quantum sensing and
boundary spectroscopy.
Examples include nitrogen-vacancy and donor-spin sensors coupled to
unknown nuclear-spin environments, as well as local probes of correlated materials.
The measured probe dynamics then provide the available information about
the Hamiltonian governing the full interacting system.

Hamiltonian learning under such restricted access has also been studied
both theoretically and experimentally.
Identifiability conditions for local-probe settings are known
\cite{burgarth2012,sone2017}.
The protocol of Burgarth, Maruyama, and Nori
\cite{burgarth2009} reconstructs Heisenberg-chain couplings
from the survival amplitude of a boundary excitation,
but relies on excitation-number conservation,
$[H,\hat N]=0$.
Time-trace methods extract Hamiltonian parameters from probe dynamics
when the inaccessible subsystem is maximally mixed
\cite{sone2017,zhang2014}.
Quantum probe tomography
\cite{chen2025probe} reconstructs the parameters of a chosen Hamiltonian model
from probe measurements using Gibbs states prepared at several controlled inverse temperatures.
These approaches show that Hamiltonian information can be recovered under restricted access,
but each relies on additional structure or control beyond probe preparation,
free evolution, and probe readout.

Here we ask what can be learned when these assumptions are removed
and the available operations are limited to probe preparation,
free evolution under the unknown Hamiltonian, and probe measurement.

Krylov subspace methods provide a natural language for this problem.
They organize the dynamics generated by a Hamiltonian into the sequence of
operators reached from an experimentally accessible probe observable
~\cite{Klymko2022,Yoshioka2022,CortesGray2022,Shen2023,Kirby2024}.
We use this structure to develop a model-independent framework
for Hamiltonian learning under restricted quantum access.

In this paper,
we develop a new framework for restricted access Hamiltonian learning,
in which an infinite-temperature probe autocorrelation $C_\alpha(t)$ is used as a probe-only observable.
The autocorrelation can be measured using only probe preparation, free evolution under the unknown Hamiltonian, and probe readout.
The quantum-probe autocorrelation measured at a single fixed bath temperature
(ideally $\beta = 0$ for the exact construction below)
is first converted into the probe spectral measure $\mu_\alpha$
and its operator Krylov Jacobi representation $\Jop$,
which governs the dynamical evolution of the probe.
We call this framework \emph{Quantum Krylov Learning (QKL)}.
The spectral reconstruction requires measurements at only a single fixed bath temperature, rather than a scan over several temperatures as in quantum probe tomography.

The infinite-temperature condition plays a specific role in this construction.
The observable used in the protocol is a correlation of operators at different times.
This is written as a trace over the whole system involving the state of the bath.
When the state of the bath is a maximally mixed state,
every bath configuration contributes equally.
A uniform weight then drops out of the comparison,
and what remains depends only on the system's own dynamics.
At a finite temperature, the bath configurations are no longer equally likely,
and this bias reappears as an extra, unknown weighting on the measured frequencies.
Section~\ref{sec:setup} makes this weighting precise (Eq.~\eqref{eq:weighting_operator}),
and Sec.~\ref{sec:finite_temp} bounds the resulting deviation at finite temperature.

The efficiency of the protocol is tied to spectral resolution.
A generic many-body system can have a number of distinct spectral contributions
that grows exponentially with system size,
with correspondingly shrinking gaps and weights between them.
Resolving every one of these contributions would require a number of measurements
that grows with that same exponential cost,
since finer gaps and smaller weights are intrinsically harder to resolve.
Our sample-complexity result avoids this cost
by fixing the spectral resolution in advance,
rather than shrinking it as the system grows.
At any such fixed resolution, the number of spectral features that must be resolved
stays bounded independently of system size,
so the number of measurement shots needed is polynomial in $1/\varepsilon$
(Theorem~\ref{thm:sample_complexity} in Sec.~\ref{sec:sample_complexity}).
The leading Jacobi coefficients and the transport time $\tauop(m)$
depend only on low moments of the spectral measure.
These low moments are already captured by a fixed-resolution, coarse-grained reconstruction,
so they can be estimated accurately without resolving every individual Bohr frequency.

The central result underlying QKL is the following.
When the autocorrelation $C_\alpha(t)$ is sampled faster than twice
the highest Bohr frequency (the Nyquist criterion) over a sufficiently long window,
the operator Krylov Jacobi matrix $\Jop$ is reconstructed exactly
from probe-only measurements of $C_\alpha(t)$,
with no oracle access to the Hamiltonian
(Theorem~\ref{thm:oracle_free} in Sec.~\ref{sec:oracle_free}).
We numerically verify this exact equivalence between the oracle and oracle-free constructions
in Sec.~\ref{sec:numerics_thm3} (Fig.~\ref{fig:oracle_free}).
Figure~\ref{fig:pipeline} shows the QKL pipeline from probe data to the Jacobi matrix.
For finite observation windows and finite measurement shots,
the same reconstruction remains controlled,
with the sample-complexity bound discussed above
(Theorem~\ref{thm:sample_complexity} in Sec.~\ref{sec:sample_complexity}).
We demonstrate this finite-data robustness numerically
in Sec.~\ref{sec:numerics_robust} (Fig.~\ref{fig:robust}).
The remaining results, namely,
the Jacobi structure, the Lieb--Robinson bound, and the transport formula,
support the construction or use its output.

\begin{figure}[tbp]
\centering
\includegraphics[width=\textwidth]{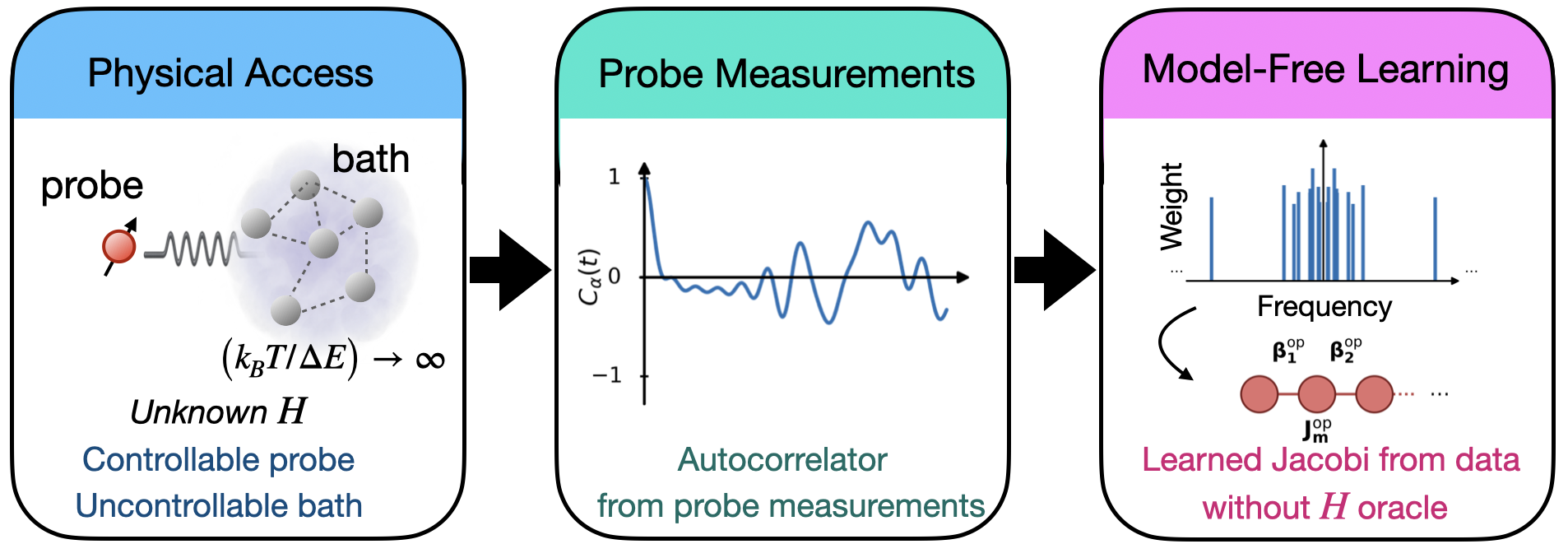}
\caption{Overview of Quantum Krylov Learning (QKL).
  \textbf{Physical access.} 
 A fully controlled probe is coupled to an uncontrolled and unmeasured bath.
The probe-bath system evolves under an unknown total Hamiltonian $H$.
The bath is assumed to be at effectively infinite temperature. More precisely it is in the regime $k_B T \gg \Delta E$, where $\Delta E$ denotes the characteristic
internal energy scale of the bath.
  \textbf{Probe time trace.} For a chosen probe operator $\sigma_\alpha$, the autocorrelation
$C_\alpha(t)$ is extracted from measurements of the probe after free time
evolution.
Appropriate initial probe states are chosen according to $\sigma_\alpha$,
so that only single-probe measurements are required.
  \textbf{Model-free learning.} The measured autocorrelation $C_\alpha(t)$ is converted directly from data
into its spectral measure
$\mu_\alpha=\{(\omega_\nu,p_\nu)\}$, supported on the Bohr frequencies of
the unknown Hamiltonian $H$.
The Stieltjes/Lanczos procedure then maps this spectral measure to the
operator Krylov Jacobi matrix $\Jop$, without requiring an oracle for $H$.
The resulting Krylov hopping coefficients $\betaop_n$ characterize the
effective Krylov chain seen by the probe and determine the data-driven
observation window $\tauop(m)$ associated with Krylov depth $m$.
Throughout the protocol, only free time evolution and single-probe
measurements are required, with no controlled gates, no symmetry
assumptions, and no parametric model of $H$.}
\label{fig:pipeline}
\end{figure}

The only physical requirement of the construction is
that the uncontrolled, unmeasured bath be at, or near, infinite temperature.
This is a natural state on many sensing platforms.
Examples are an NV-center sensor, which commonly operates at room temperature,
and a donor-spin sensor, which typically operates at cryogenic temperatures for high-fidelity readout.
In either case, the nuclear-spin bath's internal (dipolar) coupling scale is of order 1~kHz or less,
far smaller than the platform's operating temperature,
so the infinite-temperature condition holds regardless of which regime is used.
It can otherwise be emulated by classical bitstring randomization on programmable hardware.
We explicitly examine this requirement.
Section~\ref{sec:finite_temp} bounds the error incurred
when the bath temperature is high but finite,
and shows that for NV-center and donor-spin platforms
this error is subdominant to finite-window and shot-noise effects,
while for the bitstring-randomization route
the infinite-temperature condition is realized exactly.

Reference~\cite{matsuura2026state} develops the state Krylov analog
of the operator Krylov Jacobi chain constructed above.
For the resulting Jacobi chain $J_m$,
it proves a Lieb--Robinson (LR) bound with velocity $v_{\mathrm{LR}} = 2$
and a transport formula $t_\ast(m) \approx \tau_m = \sum_{n<m} 1/\beta_n$
with a universal slope $a \approx 1$, confirmed numerically for $N=6$--$12$.
The operator Krylov Jacobi matrix $\Jop$ constructed here
is mathematically a chain of the same kind.
Because it is a real symmetric tridiagonal matrix
with hopping coefficients $\betaop_n$ in place of $\beta_n$,
an analogous Lieb--Robinson bound and transport formula hold for it by the same argument,
with velocity $v_{\mathrm{LR}}^{\mathrm{op}} = 2$.
We record the precise statements and a short translation of the proof in Appendix~\ref{app:jacobi},
and verify the transport formula directly in operator space (Sec.~\ref{sec:numerics}).

Throughout, we use the term \emph{oracle-free} in a precise sense.
A protocol is oracle-free if, beyond the measured time series,
no property of $H$, no matrix element, no eigenvector, and no directly applied commutator enters.
The only allowed operations are single-probe state preparation,
free time evolution under the unknown $H$, and single-probe measurement.
This is strictly weaker than having oracle access to $[H, Q_j]$,
and strictly stronger than assuming nothing at all
(we assume that a finite-dimensional Hamiltonian governs the dynamics).

The paper is organized as follows.
Section~\ref{sec:setup} defines the problem setup and the single-probe observables.
Section~\ref{sec:jacobi} introduces the operator Krylov subspace, establishes its Jacobi structure,
and presents the Lieb--Robinson bound and the transport formula.
Section~\ref{sec:oracle_free} details the oracle-free construction of the Jacobi matrix from the infinite-temperature autocorrelation.
Section~\ref{sec:implementation} describes the experimental protocol and methods for realizing the required bath conditions.
Section~\ref{sec:sample_complexity} analyzes the robustness of the protocol against finite sampling and finite temperature.
Section~\ref{sec:numerics} provides numerical verification of the pipeline on an XXZ spin ring.
The XXZ-type Hamiltonian used as our numerical benchmark is also relevant to real devices.
Hole-spin qubits in Si and Ge nanostructures naturally realize anisotropic, XXZ-like exchange
via spin--orbit coupling~\cite{hetenyi2020exchange},
recently measured and controlled in a Si FinFET~\cite{geyer2024anisotropic}.
Because such qubits can also be arranged in ring geometries,
they are a promising hardware target for the protocol developed here.
Section~\ref{sec:discussion} discusses the results, comparing them with related Hamiltonian learning methods and identifying open directions.
Section~\ref{sec:conclusion} concludes the paper.
All full derivations, proofs, and the complete numerical study are collected in the appendices.

\section{Setup and Observables}
\label{sec:setup}

We first make the problem setting precise.
Consider a quantum system on a Hilbert space
\[
\mathcal{H}=\mathcal{H}_p\otimes\mathcal{H}_b,
\]
where $\mathcal{H}_p$ denotes an experimentally accessible probe
and $\mathcal{H}_b$ the remaining, inaccessible part of the system.
We decompose the total Hamiltonian as
\begin{equation}
H = H_p + H_b + H_{pb},
\label{eq:H_decomposition}
\end{equation}
where $H_p$ acts only on the probe, $H_b$ acts only on the bath,
and $H_{pb}$ describes the probe--bath interaction.
The probe Hamiltonian $H_p$ is assumed to be known,
and the probe itself is fully controllable.
The experimenter can prepare chosen initial states,
apply local gate operations, and perform measurements in chosen bases.
The learning protocol developed below, however, requires only probe-state preparation,
free evolution under the full Hamiltonian $H$, and probe readout.
By contrast, both $H_b$ and $H_{pb}$ are unknown,
and the bath cannot be individually initialized, controlled, or measured.
Information about these unknown terms is therefore obtained only through their effect on the probe dynamics.

The only macroscopic condition imposed on the bath is its temperature.
For the exact construction developed below, we take
\[
\rho_b=\frac{\id_b}{d_b},
\]
corresponding to the infinite-temperature state.
The role of this condition is made explicit below,
and deviations at finite temperature are analyzed in Sec.~\ref{sec:finite_temp}.

The probe $\mathcal{H}_p$ has dimension $d_p$ and
the inaccessible part $\mathcal{H}_b$ has dimension $d_b$, so $D = d_p d_b$.
The dynamics are generated by a time-independent Hamiltonian $H = H^\dagger$,
and unless stated later we assume no locality, symmetry, or parametric form.
The probe is a general qudit, and we do not assume $d_p = 2$ below.

Let $A_p = A_p^\dagger = \sum_{i=1}^{d_p} a_i|i\rangle \langle i|_p$
be a Hermitian probe observable and $A = A_p \otimes \id_b$.
For each eigenstate $|i\rangle_p$, prepare the quantum state of the whole system
$\rho_i = |i\rangle \langle i|_p \otimes \id_b / d_b$
(the unobserved part $\mathcal{H}_b$ at infinite temperature),
and measure $s_A(t|\rho_i) = \tr [\rho_iA(t)]$ (with $A(t) = e^{iHt}Ae^{-iHt}$).
A direct computation gives the probe-only measurement formula
\begin{equation}
C_A(t) := \frac{1}{D} \tr [A(t)A] = \frac{1}{d_p} \sum_{i=1}^{d_p} a_i \, s_A(t|\rho_i),
\label{eq:CA_def}
\end{equation}
that is, the infinite-temperature autocorrelation of any Hermitian probe observable
is a real linear combination of signals obtained from probe-eigenstate preparation and probe-only readout,
and requires neither controlled time evolution nor readout of the unobserved part $\mathcal{H}_b$.
The trade-off is that $\mathcal{H}_b$ must realize the maximally mixed state,
either physically (a high-temperature environment)
or by classical bitstring randomization on programmable hardware (Sec.~\ref{sec:implementation}).

The rest of the paper works, for concreteness, with an $N$-qubit realization.
Here $\mathcal{H} = (\mathbb{C}^2)^{\otimes N}$, $D = 2^N$, and the probe is a single qubit with
$A_p = \sigma^\alpha_p$, $\alpha \in \{x,y,z\}$ ($d_p = 2$, eigenvalues $a_\pm = \pm 1$).
Equation~\eqref{eq:CA_def} reduces to the two-state difference
\begin{equation}
C_\alpha(t) = \frac{1}{2} \bigl[ s_\alpha(t|\rho_+) - s_\alpha(t|\rho_-) \bigr],
\label{eq:two_state}
\end{equation}
and all statements below are written in this qubit notation.
The general qudit statement is recovered
under the replacement of $\sigma^\alpha_p$ and $C_\alpha(t)$ by $A_p$ and $C_A(t)$.
We do not assume $[H, \hat{N}] = 0$.
Any symmetry-breaking perturbations are allowed.

The probe Bloch vector
\begin{equation}
s_\alpha(t) = \tr \bigl[ \rho_0 \, \sigma^\alpha_p(t) \bigr], \qquad
\sigma^\alpha_p(t) = e^{iHt} \sigma^\alpha_p e^{-iHt},
\label{eq:bloch}
\end{equation}
is the primary observable, measured at discrete times $t_k = k \Delta t$.
Expanding in the energy eigenbasis $H|E_j\rangle = E_j |E_j\rangle$,
\begin{equation}
s_\alpha(t) = \sum_{j,k} c^\alpha_{jk} \, e^{i(E_j-E_k)t}, \qquad
c^\alpha_{jk} = \langle E_k|\rho_0|E_j\rangle \, \langle E_j|\sigma^\alpha_p|E_k\rangle.
\label{eq:spectral}
\end{equation}
The observable frequencies are the energy differences $\omega_{jk} = E_j - E_k$,
that is, the Liouvillian eigenvalues, not the energies themselves.
The residues $c^\alpha_{jk}$ are in general complex, which is different from the survival amplitude
where all spectral weights $w_j = |\langle E_j|v_0\rangle|^2$
for an initial state $|v_0\rangle$ are non-negative real numbers.

The central result requires the bath to be at, or near, infinite temperature,
$\rho_{\mathrm{bath}} = \id_{2\ldots N}/2^{N-1}$.
The reason for this choice is visible in the two-state difference,
\begin{equation}
  \tfrac{1}{2} \bigl[ s_\alpha(t|\rho_+) - s_\alpha(t|\rho_-) \bigr]
= \tfrac{1}{2} \tr \bigl[ (\sigma^\alpha_p \otimes \rho_{\mathrm{bath}}) \, \sigma^\alpha_p(t) \bigr],
\label{eq:weighting_operator}
\end{equation}
where the bath state is the \emph{weighting operator} of the correlation function.
Only $\rho_{\mathrm{bath}} \propto \id_b$ makes that weighting the Hilbert--Schmidt trace,
and only then does the measured correlation depend solely on the Hamiltonian dynamics,
rather than on an additional unknown bath-state weighting.
Any $e^{-\beta H_{\mathrm{bath}}}/Z$ with $\beta > 0$ depends on the Hamiltonian
being learned and mixes unknown thermal weights
into the measured object (Sec.~\ref{sec:finite_temp}).
Fixing this macroscopic temperature is the only handle retained on $\mathcal{H}_b$.
This regime requires neither measuring $\mathcal{H}_b$ nor any further control of it.

\section{Operator Krylov Subspace and the Jacobi Matrix}
\label{sec:jacobi}

We work in the Liouville space $\mathfrak{L}(\mathcal{H})$ of linear operators on $\mathcal{H}$,
equipped with the Hilbert--Schmidt inner product $(A|B) = \tr (A^\dagger B)/D$,
and we use round brackets $|A)$, $(A|$ to distinguish these from state-space kets and bras.
The Liouvillian superoperator is $\mathcal{L}(O) = [H,O]$,
the Heisenberg evolution is $O(t) = e^{i\mathcal{L}t}O$, and
$s_\alpha(t) = D\,(\rho_0|e^{i\mathcal{L}t}\sigma^\alpha_p)$.

In the vectorized representation,
$\mathcal{L} = [H,\cdot\,]$ corresponds to $H\otimes\id - \id\otimes H^\mathrm{T}$,
the difference of two commuting Hermitian matrices,
so $\mathcal{L}$ is Hermitian in the Hilbert--Schmidt inner product,
\begin{equation}
(A|\mathcal{L}(B)) = (\mathcal{L}(A)|B),
\label{eq:L_hermitian}
\end{equation}
with eigenvalues equal to the Bohr frequencies,
$\mathcal{L}(|E_j\rangle\langle E_k|) = \omega_{jk}|E_j\rangle\langle E_k|$, $\omega_{jk}=E_j-E_k$.
Hermiticity makes $e^{i\mathcal{L}t}$ unitary and, importantly,
reduces the Arnoldi process for $\mathcal{L}$ to the Lanczos process~\cite{lanczos1950,saad2003},
producing a real symmetric tridiagonal (Jacobi) matrix.
This relies on the system being closed.
For an open-system Liouvillian with dissipators,
the generator is generically non-Hermitian
and the output is a Hessenberg rather than a Jacobi matrix.

The signal generates the operator Krylov subspace~\cite{parker2019,nandy2025krylov}
\begin{equation}
\mathcal{K}_m(\mathcal{L},\sigma^\alpha_p) = 
\mathrm{span}\bigl\{\sigma^\alpha_p,\,[H,\sigma^\alpha_p],\,[H,[H,\sigma^\alpha_p]],\,\ldots,\,\mathcal{L}^{m-1}\sigma^\alpha_p\bigr\},
\label{eq:krylov_subspace}
\end{equation}
whose full dimension is at most $D^2 = 4^N$,
compared with at most $D = 2^N$ for the state Krylov space.
The orthonormal Krylov basis $\{Q_n\}$ of $\mathcal{K}_m$ is obtained from this sequence of operators
by Gram--Schmidt orthonormalization with respect to the Hilbert--Schmidt inner product
$(A|B) = \tr (A^\dagger B)/D$.
The seed is $Q_0 = \sigma^\alpha_p/\|\sigma^\alpha_p\|_{\mathrm{HS}}$,
and each subsequent $Q_{n+1}$ is obtained from $[H,Q_n]$
by subtracting its components along the already constructed $Q_n$ and $Q_{n-1}$ and normalizing.
This orthonormalization is exactly the Lanczos process for $\mathcal{L}$ seeded at $Q_0$,
which gives the following structure.

\begin{theorem}[Operator Krylov Jacobi structure]
\label{thm:jacobi}
Starting from $Q_0 = \sigma^\alpha_p/\|\sigma^\alpha_p\|_{\mathrm{HS}}$,
the Lanczos process for $\mathcal{L}$ in the Hilbert--Schmidt (HS) inner product gives the recurrence
\begin{equation}
\mathcal{L}(Q_j) = \betaop_{j+1}\,Q_{j+1} + \alphaop_j\,Q_j + \betaop_j\,Q_{j-1},
\label{eq:lanczos}
\end{equation}
with real coefficients $\alphaop_j = (Q_j|\mathcal{L}(Q_j))$ and $\betaop_j > 0$.
The resulting matrix
\begin{equation}
\Jop = \begin{pmatrix}
0         & \betaop_1 &           &        \\
\betaop_1 & 0         & \betaop_2 &        \\
          & \betaop_2 & 0         & \ddots \\
          &           & \ddots    & \ddots
\end{pmatrix}
\label{eq:jacobi_matrix}
\end{equation}
is real symmetric tridiagonal, and its diagonal entries vanish identically,
that is, $\alphaop_n = 0$ for all $n$ and for any Hamiltonian, without assuming any symmetry.
\end{theorem}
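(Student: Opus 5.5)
The proof has two parts: the standard Lanczos argument gives the three-term recurrence, and a symmetry argument forces the diagonal entries to vanish.

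\textbf{Three-term recurrence.} Since $\mathcal{L}$ is Hermitian in the HS inner product, Eq.~\eqref{eq:L_hermitian}, the standard Lanczos argument applies. Set $\betaop_0 Q_{-1}=0$. Define $R_{j+1}=\mathcal{L}(Q_j)-\alphaop_j Q_j-\betaop_j Q_{j-1}$ with $\alphaop_j=(Q_j|\mathcal{L}(Q_j))$. Set $\betaop_{j+1}=\|R_{j+1}\|_{\mathrm{HS}}$ and $Q_{j+1}=R_{j+1}/\betaop_{j+1}$; the process terminates at the first $j$ with $R_{j+1}=0$, i.e.\ at the Krylov dimension.

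Orthogonality to all earlier $Q_k$ with $k\le j-2$ is shown by induction. Hermiticity gives $(Q_k|\mathcal{L}(Q_j))=(\mathcal{L}(Q_k)|Q_j)$. By the recurrence, $\mathcal{L}(Q_k)$ lies in the span of $Q_{k-1},Q_k,Q_{k+1}$, all of which are orthogonal to $Q_j$.

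Reality of $\alphaop_j$ follows from Hermiticity: $(Q_j|\mathcal{L}Q_j)=(\mathcal{L}Q_j|Q_j)=\overline{(Q_j|\mathcal{L}Q_j)}$. The same computation shows that the coefficient of $Q_{j-1}$ in $\mathcal{L}(Q_j)$ equals $(Q_{j-1}|\mathcal{L}Q_j)=(\mathcal{L}Q_{j-1}|Q_j)=\betaop_j$, which is real and positive. This gives Eq.~\eqref{eq:lanczos} and the real symmetric tridiagonal form of $\Jop$.

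\textbf{Vanishing of $\alphaop_n$.} This is the substantive part. The key observation is that $Q_0$ is Hermitian and $\mathcal{L}$ maps Hermitian operators to anti-Hermitian ones, since $[H,O]^\dagger=-[H,O^\dagger]$. I will prove by induction that $Q_n$ is Hermitian for even $n$ and anti-Hermitian for odd $n$, i.e.\ $Q_n^\dagger=(-1)^nQ_n$.

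For any $Q$ with $Q^\dagger=\epsilon Q$ and $\epsilon=\pm1$, we have $(Q|\mathcal{L}Q)=\tr(Q^\dagger[H,Q])/D=\epsilon\,\tr(Q[H,Q])/D$. Expanding, $\tr(QHQ-QQH)=\tr(HQ^2)-\tr(HQ^2)=0$ by cyclicity. Hence $\alphaop_n=0$ for every $n$, with no symmetry assumption on $H$.

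The induction step uses these zeros directly. If $\alphaop_j=0$, then $R_{j+1}=\mathcal{L}(Q_j)-\betaop_jQ_{j-1}$. Here $\mathcal{L}(Q_j)$ has parity $(-1)^{j+1}$ by the commutator identity. Also $Q_{j-1}$ has parity $(-1)^{j-1}=(-1)^{j+1}$, and $\betaop_j$ is real, so scaling by it preserves parity. So $R_{j+1}$, and hence $Q_{j+1}$, has parity $(-1)^{j+1}$.

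The main subtlety is interleaving the two inductions correctly: $\alphaop_j=0$ is needed to establish the parity of $Q_{j+1}$, and that parity is in turn needed for $\alphaop_{j+1}=0$. Running both statements in a single induction on $j$ resolves this.
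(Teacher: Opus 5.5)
Your proof is correct and is essentially the paper's own trace argument: standard Lanczos for the HS-Hermitian $\mathcal{L}$ gives the real symmetric tridiagonal form with $\betaop_j>0$, and a single interleaved induction on the adjoint parity $Q_n^\dagger=(-1)^nQ_n$, combined with cyclicity of the trace, forces $\alphaop_n=0$. The paper also gives an equivalent spectral-measure version of the same fact: evenness of $\mu_\alpha$ gives $\pi_n(-\omega)=(-1)^n\pi_n(\omega)$, and hence $\int\omega\,\pi_n^2\,d\mu_\alpha=0$.
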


The proof is in Appendix~\ref{app:jacobi}.
Tridiagonality is the standard property of Lanczos applied to a Hermitian generator~\cite{saad2003}.
The vanishing of $\alphaop_n$ follows from the Hermiticity of the probe seed.
The Krylov vectors alternate in adjoint parity, $Q_n^\dagger = (-1)^n Q_n$,
and cyclicity of the trace then forces
$\alphaop_n = (Q_n|\mathcal{L}Q_n) = 0$ unconditionally.
This vanishing is verified numerically in Sec.~\ref{sec:numerics_structure}.

The Lanczos coefficients depend on $H$ only through a single scalar measure.
Define the operator Krylov spectral measure
\begin{equation}
d\mu_\alpha(\omega) = \frac{1}{D\,\|\sigma^\alpha_p\|_{\mathrm{HS}}^2}
\sum_{j,k} |M^\alpha_{jk}|^2\,\delta(\omega-\omega_{jk})\,d\omega, \qquad
M^\alpha_{jk} = \langle E_j|\sigma^\alpha_p|E_k\rangle.
\label{eq:measure}
\end{equation}
This is a normalized non-negative point measure supported on the Bohr frequencies,
with weights proportional to $|M^\alpha_{jk}|^2$.
Here and throughout,
$\|\sigma^\alpha_p\|_{\mathrm{HS}}^2 := (\sigma^\alpha_p|\sigma^\alpha_p) = \tr [(\sigma^\alpha_p)^2]/D = 1$
is taken with respect to the \emph{normalized} inner product $(A|B) = \tr (A^\dagger B)/D$,
which makes $\mu_\alpha$ a probability measure.
With the unnormalized Hilbert--Schmidt norm the prefactor would read $1 / \tr [(\sigma^\alpha_p)^2]$.
Because $\sigma^\alpha_p$ is Hermitian,
$|M^\alpha_{kj}| = |M^\alpha_{jk}|$ and $\omega_{kj} = -\omega_{jk}$.
Consequently, $\mu_\alpha$ is an even measure, meaning $\mu_\alpha(-S) = \mu_\alpha(S)$.
The Lanczos iterates take the form
$Q_n \propto \sum_{j,k} r_n(\omega_{jk})\,M^\alpha_{jk}\,|E_j\rangle\langle E_k|$
with real polynomials $r_n$, so every inner product entering the recursion is
\begin{equation}
(Q_n|f(\mathcal{L})|Q_{n'})
= \int r_n(\omega) \, f(\omega) \, r_{n'}(\omega) \, d\mu_\alpha(\omega).
\label{eq:inner_prod_main}
\end{equation}
Neither the eigenvectors of $H$ nor the phases of $M^\alpha_{jk}$ appear.
The Jacobi matrix $\Jop$ is a function of $\mu_\alpha$ alone,
and the recursion is the Stieltjes procedure for the orthonormal polynomials
$\{\pi_n\}$ of $\mu_\alpha$ (Appendix~\ref{app:measure_jacobi}).
In particular,
$\alphaop_n = \int \omega\, \pi_n^2 \, d\mu_\alpha = 0$,
because $\mu_\alpha$ is even.

The Jacobi matrix $\Jop$ defines a one-dimensional tight-binding chain in operator space,
with the site index $n$ as coordinate and $\betaop_n$ as hopping amplitude.
With the weighted operator Krylov distance
$\rho^{\mathrm{op}}(m,n) = \sum_{j=\min(m,n)+1}^{\max(m,n)} 1/\betaop_j$,
this chain obeys a Lieb--Robinson bound~\cite{lieb1972,nachtergaele2006}.

\begin{theorem}[Operator Krylov Lieb--Robinson bound]
\label{thm:LR}
For all $m,n$, all $t\in\mathbb{R}$ and all $\lambda>0$,
\begin{equation}
\bigl| (Q_m|e^{i\Jop t}|Q_n) \bigr|
\leq \exp\bigl[ -\lambda\,\rho^{\mathrm{op}}(m,n) + \Omega^{\mathrm{op}}_{M,\lambda} |t| \bigr],
\label{eq:LR}
\end{equation}
with $\Omega^{\mathrm{op}}_{M,\lambda} = \max_r \bigl[ f_\lambda(\betaop_{r-1}) + f_\lambda(\betaop_r) \bigr]$,
where $f_\lambda(\beta) = \beta\sinh(\lambda/\beta)$.
In the limit $\lambda \to 0$, the ratio $\Omega^{\mathrm{op}}_{M,\lambda}/\lambda$ approaches $2$.
This gives the Lieb--Robinson velocity $v_{\mathrm{LR}}^{\mathrm{op}} = 2$.
\end{theorem}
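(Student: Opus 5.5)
The plan is the standard exponential-weight (Davies / Combes--Thomas) conjugation argument, applied to the tridiagonal matrix $\Jop$ viewed as a tight-binding chain on the Krylov sites $n=0,1,2,\dots$. Since $(Q_m|e^{i\Jop t}|Q_n)$ is just the $(m,n)$ matrix element of $e^{i\Jop t}$ in the site basis, the bound concerns the Jacobi matrix alone. The Hamiltonian enters only through the coefficients $\betaop_j>0$ of Theorem~\ref{thm:jacobi}.

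\textbf{Step 1 (setup).} Fix the target site $m$ and $\lambda>0$. Define the diagonal weight $W=\diag(e^{\lambda\rho^{\mathrm{op}}(m,r)})_r$, and note that $\rho^{\mathrm{op}}$ is additive along the chain: $\rho^{\mathrm{op}}(m,r\pm1)-\rho^{\mathrm{op}}(m,r)=\pm 1/\betaop_{\ast}$, where $\betaop_\ast$ is the hopping on the bond crossed. Then
\[
(Q_m|e^{i\Jop t}|Q_n)=e^{-\lambda\rho^{\mathrm{op}}(m,n)}\,\bigl(W e^{i\Jop t}W^{-1}\bigr)_{mn},
\]
because $W_{mm}=1$. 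Moreover $We^{i\Jop t}W^{-1}=e^{itJ_\lambda}$ with $J_\lambda=W\Jop W^{-1}$. Its off-diagonal entries are $\betaop_r e^{\pm\lambda/\betaop_r}$ on bond $r$, with the sign fixed by whether the bond points away from or toward $m$. The diagonal stays zero.

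\textbf{Step 2 (split).} Write $J_\lambda=\Jop+K_\lambda$. On each bond, $K_\lambda$ has entries $\betaop_r(e^{\pm\lambda/\betaop_r}-1)$. Split these into a symmetric part $\betaop_r(\cosh(\lambda/\betaop_r)-1)$ and an antisymmetric part $\pm\betaop_r\sinh(\lambda/\betaop_r)=\pm f_\lambda(\betaop_r)$. Since $\Jop+\text{(symmetric real)}$ is Hermitian, the Hermitian part of $iJ_\lambda$ is exactly $i$ times the antisymmetric real part, i.e.\ a matrix $A$ with entries $\pm f_\lambda(\betaop_r)$ on bonds. A Grönwall/log-norm estimate then gives
\[
\|e^{itJ_\lambda}\|\le e^{|t|\,\|A\|},
\]
since $\frac{d}{dt}\|\psi\|^2=2\,\mathrm{Re}(\psi|iJ_\lambda\psi)$ and only the non-skew part of $iJ_\lambda$ contributes. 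By the Schur test (maximum row sum of a tridiagonal matrix), $\|A\|\le \max_r[f_\lambda(\betaop_{r-1})+f_\lambda(\betaop_r)]=\Omega^{\mathrm{op}}_{M,\lambda}$. For a finite Krylov chain of length $M$, take the convention $\betaop_0=\betaop_M=0$ so that $f_\lambda(0)=0$ at the ends. Bounding the matrix element by the operator norm yields \eqref{eq:LR} for all $t$; negative $t$ is handled identically.

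\textbf{Step 3 (velocity).} $f_\lambda(\beta)/\lambda=\sinh(\lambda/\beta)/(\lambda/\beta)\to1$ as $\lambda\to0$ for each fixed $\beta>0$. With finitely many coefficients bounded away from zero, the maximum of two such terms divided by $\lambda$ tends to $2$. Writing the bound as $\exp[-\lambda(\rho^{\mathrm{op}}-(\Omega/\lambda)|t|)]$ identifies $v^{\mathrm{op}}_{\mathrm{LR}}=2$ in the weighted distance.

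\textbf{Expected obstacle.} The main delicacy is Step 2. Getting exactly $\sinh$ rather than the cruder $e^{\lambda/\beta}-1$ requires peeling off the symmetric $\cosh$ part into the Hermitian generator, so that it contributes no growth. It also requires checking the sign pattern on bonds adjacent to $m$, which is why the bound has a maximum over pairs of adjacent bonds. One must also note that the uniform-in-$r$ limit in Step 3 requires finitely many $\betaop_r$, or a positive lower bound on them.
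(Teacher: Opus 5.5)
Your argument follows the paper's own route: conjugate $\Jop$ by the exponential of the weighted Krylov distance, bound the conjugated generator by a Schur row-sum estimate, and integrate with Gr\"onwall. Your $\cosh/\sinh$ split of the conjugated hoppings is exactly what gives $f_\lambda(\beta)=\beta\sinh(\lambda/\beta)$ rather than $\beta(e^{\lambda/\beta}-1)$, which the paper's sketch leaves implicit.

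Two small repairs are needed. First, Step~1 has a sign error: with $W=\diag(e^{\lambda\rho^{\mathrm{op}}(m,r)})$ one has $(WXW^{-1})_{mn}=e^{-\lambda\rho^{\mathrm{op}}(m,n)}X_{mn}$, so you should use $W^{-1}e^{i\Jop t}W$; this costs nothing, because your Step~2 norm bound holds equally with $-\lambda$. Second, the end convention $f_\lambda(\betaop_0)=f_\lambda(\betaop_M)=0$ must be read as ``no bond,'' not as a value of $f_\lambda$. In fact $f_\lambda(\beta)$ diverges as $\beta\to0^+$, so evaluating it literally would make $\Omega^{\mathrm{op}}_{M,\lambda}$ infinite.
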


The proof (Appendix~\ref{app:jacobi}) is the same as
its state Krylov counterpart in Ref.~\cite{matsuura2026state}.
Each $1/\betaop_j$ is the time to cross one bond,
so $\rho^{\mathrm{op}}(m,n)$ is the time to cross the interval
and Eq.~\eqref{eq:LR} is a light cone.
The amplitude is exponentially small until
$|t| \gtrsim \rho^{\mathrm{op}}(m,n)/v_{\mathrm{LR}}^{\mathrm{op}}$.
The velocity is $2$ because each site couples in two directions.
Estimating the error of the $m$-step description
then gives the operator Krylov transport formula.
Consider the truncation error of the $m$-step description,
$\epsilon^{\mathrm{op}}_m(T) = \max_{t\leq T}|s^{(m)}_\alpha(t)-s_\alpha(t)|$.
The break time $t_{\ast}^{\mathrm{op}}(m)$ is the time at which this error
first exceeds a fixed threshold.
Because $\epsilon^{\mathrm{op}}_m(T)$ stays small for $T\lesssim\tauop(m)$,
the break time tracks $\tauop(m)$,
\begin{equation}
t_{\ast}^{\mathrm{op}}(m) \approx a\,\tauop(m) + b, \qquad
\tauop(m) = \sum_{n=1}^{m-1} \frac{1}{\betaop_n}.
\label{eq:transport}
\end{equation}
The $m \times m$ Jacobi matrix contains the hoppings $\betaop_1,\dots,\betaop_{m-1}$,
so the sum runs to $m-1$.
This convention matches the numerical figures,
whose axes read $\sum_{n<m} 1/\betaop_n$.
Here, the break time is defined from the \emph{round-trip} return quantity $C_\alpha - C_\alpha^{(m)}$.
The corresponding signal must propagate from the origin to the truncation boundary and then return.
It therefore travels approximately twice the Krylov distance $\rho^{\mathrm{op}}(m-1,0) = \tauop(m)$
at the Lieb--Robinson velocity $v_{\mathrm{LR}}^{\mathrm{op}} = 2$.
This leads to a break time proportional to $\tauop(m)$, with slope $a \approx 1$.
By contrast, a one-way matrix element such as $(Q_{m-1}|e^{i\Jop t}|Q_0)$
only needs to propagate from the origin to the boundary.
It therefore becomes appreciable at $t \approx \rho^{\mathrm{op}}(m-1,0)/2 = \tauop(m)/2$,
corresponding to slope $a \approx 0.5$.
See Appendix~\ref{app:jacobi}.
Because $\{\betaop_n\}$ becomes experimentally accessible through the oracle-free construction below,
Eq.~\eqref{eq:transport} serves as a data-driven criterion for the observation window.
From the measured time series alone, and with no model input,
the experimenter learns how much of the window carries information at a given Krylov depth.
We verify Eq.~\eqref{eq:transport} directly in operator space in Sec.~\ref{sec:numerics_structure}.
The intercept $b$ is a small offset
reflecting the initial propagation delay before reaching the threshold.

\section{Oracle-Free Construction of the Jacobi Matrix}
\label{sec:oracle_free}

In this section, we present the central algorithmic result of the paper, that is, 
the exact reconstruction of the operator Krylov Jacobi matrix from measurable probe data.
We first connect the infinite-temperature autocorrelation
to the characteristic function of the spectral measure (Sec.~\ref{sec:oracle_free_measure}),
and then detail the two-stage classical post-processing pipeline
that achieves this oracle-free construction (Sec.~\ref{sec:oracle_free_construction}).

\subsection{Autocorrelation and the spectral measure}
\label{sec:oracle_free_measure}

The oracle Lanczos of Sec.~\ref{sec:jacobi} builds $\Jop$
by applying $\mathcal{L}(Q_j) = [H,Q_j]$ directly to successive Krylov operators.
However, this approach requires full knowledge of $H$.
We now show that the same Jacobi matrix can be built from a single probe-measurable time series.

Inserting the energy eigenbasis gives
$\tr [\sigma^\alpha_p(t)\sigma^\alpha_p] = \sum_{j,k}e^{i\omega_{jk}t}|M^\alpha_{jk}|^2$.
Comparing this expression with Eq.~\eqref{eq:measure},
we see that the infinite-temperature autocorrelation is the characteristic function~\footnote{
``Characteristic function'' is the probabilists' name for $\int e^{i\omega t} d\mu_\alpha(\omega)$.}
of the measure $\mu_\alpha$.
Therefore,
\begin{equation}
C_\alpha(t)
= \frac{1}{D} \tr \bigl[ \sigma^\alpha_p(t) \, \sigma^\alpha_p \bigr]
= \int e^{i\omega t}\,d\mu_\alpha(\omega)
= \sum_\nu p_\nu\,e^{i\omega_\nu t},
\qquad
p_\nu \geq 0,
\quad
\sum_\nu p_\nu = 1.
\label{eq:C_measure}
\end{equation}
Here each atomic weight $p_\nu$ is
the pooled contribution of all transitions $(j,k)$ degenerate at the frequency $\omega_\nu$,
\begin{equation}
p_\nu = \frac{1}{D\,\|\sigma^\alpha_p\|_{\mathrm{HS}}^2} \sum_{(j,k):\,\omega_{jk}=\omega_\nu} |M^\alpha_{jk}|^2, \qquad
M^\alpha_{jk} = \langle E_j|\sigma^\alpha_p|E_k\rangle,
\label{eq:pbeta_def}
\end{equation}
which is precisely the atomic weight of the spectral measure $\mu_\alpha$ in Eq.~\eqref{eq:measure}.
The index $\nu$ labels the distinct frequencies $\omega_\nu$
in the spectral measure and their corresponding weights $p_\nu$.
Because the weights are fixed by the squared magnitudes $|M^\alpha_{jk}|^2$,
we have $p_\nu \geq 0$ (indicating a positive measure).
Furthermore, the Hermiticity of $\sigma^\alpha_p$ ensures
$|M^\alpha_{kj}| = |M^\alpha_{jk}|$ and $\omega_{kj} = -\omega_{jk}$.
Therefore, $\mu_\alpha$ is an even measure satisfying $\mu_\alpha(-S) = \mu_\alpha(S)$.
Because $\mu_\alpha$ is even, the exponentials pair into cosines,
\begin{equation}
C_\alpha(t) = \sum_{\omega_\nu>0} 2p_\nu\cos(\omega_\nu t) + p_0,
\label{eq:C_cosine}
\end{equation}
which is real and even in $t$, with $C_\alpha(0) = 1$.
Here, $p_0$ denotes the weight of the atom at $\omega = 0$, if present.

This is the recursion method of condensed-matter physics~\cite{parker2019,viswanath1994} applied in reverse.
Instead of resumming $C(t)$ from a known $H$, we invert the measured $C_\alpha(t)$ of an unknown $H$ to obtain $\Jop$.
Appendix~\ref{app:recursion} gives a detailed account of the recursion method.
It explains how the method expresses the autocorrelation as a continued fraction through the Jacobi coefficients.
It also shows that the present construction is the inverse of that map.

The state-dependent signal $s_\alpha(t)$ can have a \emph{dark sector}.
A dark sector is a set of transitions that are physically present,
but do not appear in a particular signal.
For $s_\alpha(t)$,
this happens when a transition has zero weight for the chosen initial state.
Such a transition is physically present in the dynamics,
yet it is invisible in the measured signal.
The autocorrelation $C_\alpha(t)$, by contrast, has no dark sector relative to $\Jop$.
A transition is missing from $C_\alpha$ only when its weight $p_\nu$ vanishes.
However, a vanishing $p_\nu$ means precisely that the transition does not enter $\Jop$ either.
Every transition that enters $\Jop$ therefore appears in $C_\alpha$.
A dark sector does exist relative to $H$ itself.
It is made up of the transitions for which $\langle E_j|\sigma^\alpha_p|E_k\rangle = 0$.
These transitions remain undetectable in this channel at any temperature,
no matter what initial state is chosen.
Changing the initial state can reduce the dark sector of $s_{\alpha}(t)$, whereas changing the probe observable \(\alpha\) can change the sector that is invisible to the probe channel itself.

Under the infinite-temperature bath condition,
the autocorrelation is obtained with probe-only operations
through the two-state measurement of Eq.~\eqref{eq:two_state}.
Indeed, preparing
\begin{equation}
\rho_\pm
= |\pm_\alpha\rangle \langle\pm_\alpha|_p \otimes \frac{\id_{2\ldots N}}{2^{N-1}},
\end{equation}
gives
\begin{equation}
  \frac{1}{2} \left[ s_\alpha(t|\rho_+) - s_\alpha(t|\rho_-) \right]
= \frac{1}{2} \tr \left[ (\rho_{+} - \rho_{-}) \sigma_p^\alpha(t) \right]
= \frac{1}{D} \tr \left[ \sigma_p^\alpha(t) \sigma_p^\alpha \right]
= C_\alpha(t).
\label{eq:probe_autocorr_derivation}
\end{equation}

At $\beta > 0$, the bath factor is no longer $\propto\id$,
and the middle equality fails.
The signal can then no longer be identified with $\mu_\alpha$,
but the spectral information in it is not lost.
The Bohr frequencies are unchanged and only their weights are thermally distorted
(Sec.~\ref{sec:finite_temp}).

\subsection{Two-stage construction of the Jacobi matrix and the oracle-free theorem}
\label{sec:oracle_free_construction}

The spectral-learning core of QKL proceeds in two stages.
First, the spectral measure,
$\mu_\alpha = \sum_\nu p_\nu\delta_{\omega_\nu}$,
is extracted from the sampled values $C_\alpha(t_k)$
by a matrix-pencil (Prony-type) estimator~\cite{hua1990}.
Treating the samples as a sum of complex exponentials,
a generalized eigenvalue problem on a Hankel matrix recovers the atoms $\omega_\nu$ and weights $p_\nu$,
and a non-negative least-squares step~\cite{lawson1974} imposes the positivity $\mu_\alpha \geq 0$.

Second, the recovered $(\omega_\nu,p_\nu)$ are put into the Stieltjes/Lanczos recursion,
exactly as if they had come from an oracle diagonalization of $H$, to generate $\Jop$.
The second stage reconstructs $\Jop$ exactly from the atoms and weights of $\mu_\alpha$ alone.
The mathematical justification is given in Appendix~\ref{app:measure_jacobi}.
The concrete algorithmic and experimental details of the full pipeline
are presented in Sec.~\ref{sec:implementation}.

The equivalence follows from two constructions of $\Jop$.
The first route starts from $H$ itself
and builds $\Jop$ by repeatedly applying the commutator $[H,\cdot\,]$,
exactly as in the oracle construction of Sec.~\ref{sec:jacobi}.
The second route starts from the measured autocorrelation $C_\alpha$,
obtains the spectral measure $\mu_\alpha$ from it via Eq.~\eqref{eq:C_measure},
and then obtains $\Jop$ from $\mu_\alpha$ via Eq.~\eqref{eq:inner_prod_main}.
These two routes agree because,
by Eq.~\eqref{eq:inner_prod_main},
the commutator oracle enters the construction of $\Jop$ only through $\mu_\alpha$.
By Eq.~\eqref{eq:C_measure},
the measured $C_\alpha$ already carries $\mu_\alpha$ in full.
The measured time series therefore replaces the oracle exactly.
The following theorem states this precisely.

\begin{theorem}[Oracle-free Jacobi matrix]
\label{thm:oracle_free}
Let $H$ be any finite-dimensional Hamiltonian, with no symmetry assumed,
and let the bath be at infinite temperature.
Let $\hat\mu_\alpha$ be the discrete measure
reconstructed from the samples $\{C_\alpha(t_k)\}_{k=0}^{K}$,
with $t_k = k\Delta t$,
using a consistent exponential-analysis method.
Suppose that
(a) $\Delta t < \pi/\omega_{\max}$ (Nyquist),
(b) $K+1 \geq 2 \,|\mathrm{supp}\,\mu_\alpha|$,~\footnote{
The support $\mathrm{supp}\,\mu_\alpha$ denotes
the set of distinct Bohr frequencies with nonzero weight in $\mu_\alpha$.
Several transitions with the same frequency therefore contribute to a single element of this set,
and $|\mathrm{supp}\,\mu_\alpha|$ denotes the number of such distinct frequencies.}
and
(c) the data are noiseless.
Then $\hat\mu_\alpha = \mu_\alpha$ exactly.
Applying the Stieltjes/Lanczos procedure to $\hat\mu_\alpha$ therefore
reproduces the oracle Jacobi matrix $\Jop$ exactly for all $m \leq |\mathrm{supp}\,\mu_\alpha|$.
The entire construction requires only probe preparation, free time evolution, and probe measurement.
No property of $H$ other than the measured time series is used at any stage.
\end{theorem}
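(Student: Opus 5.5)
The proof has two stages: first identify the measure exactly from the samples, then show that the Stieltjes recursion on that measure returns the oracle Jacobi matrix.

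\emph{Stage 1: the samples determine $\mu_\alpha$.} By Eq.~\eqref{eq:probe_autocorr_derivation}, under the infinite-temperature bath, the probe-only two-state signal equals $C_\alpha(t)$. By Eq.~\eqref{eq:C_measure}, the samples are
\[
C_\alpha(t_k)=\sum_{\nu=1}^{r}p_\nu z_\nu^{k}, \qquad z_\nu=e^{i\omega_\nu\Delta t}, \qquad r=|\mathrm{supp}\,\mu_\alpha|.
\]
The argument then runs in three steps.
\begin{enumerate}
\item Condition (a) gives $|\omega_\nu\Delta t|<\pi$ for every $\nu$. The map $\omega\mapsto e^{i\omega\Delta t}$ is therefore injective on the support, so the nodes $z_\nu$ are distinct points on the unit circle. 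Each node also determines its frequency uniquely as $\omega_\nu=\arg z_\nu/\Delta t$ with the principal branch.
\item Condition (b) supplies $2r$ consecutive samples. Form the $r\times r$ Hankel matrices $H_0=[C_\alpha(t_{i+j})]$ and $H_1=[C_\alpha(t_{i+j+1})]$. They factor as $H_0=VPV^{\mathrm T}$ and $H_1=VPZV^{\mathrm T}$, where $V$ is the Vandermonde matrix of the $z_\nu$, $P=\diag(p_\nu)$ and $Z=\diag(z_\nu)$.
\item The matrix $V$ is invertible because the nodes are distinct, and $P$ is invertible because every $p_\nu>0$ on the support. The generalized eigenvalues of the pencil $(H_1,H_0)$ are then exactly the $z_\nu$. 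With the nodes known, the weights follow from the nonsingular Vandermonde linear system, which has a unique solution. This is what ``consistent exponential-analysis method'' should mean in the noiseless case.
\end{enumerate}
The exact solution already has $p_\nu\ge 0$, so the NNLS step returns it unchanged. Hence $\hat\mu_\alpha=\mu_\alpha$.

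\emph{Stage 2: the Stieltjes recursion reproduces $\Jop$.} By Eq.~\eqref{eq:inner_prod_main}, every quantity in the oracle Lanczos recursion of Theorem~\ref{thm:jacobi} is an integral against $\mu_\alpha$. Concretely, $Q_n\leftrightarrow\pi_n$ and $(Q_n|\mathcal L|Q_{n'})=\int\pi_n\,\omega\,\pi_{n'}\,d\mu_\alpha$. I will prove by induction on $n$ that the Krylov vector is $Q_n=\sum_{j,k}\pi_n(\omega_{jk})M^\alpha_{jk}|E_j\rangle\langle E_k|$, where $\pi_n$ is the $n$th orthonormal polynomial of $\mu_\alpha$, and that the coefficients agree: $\betaop_n$ equals the Stieltjes $\beta_n$.
\begin{itemize}
\item Base case: $Q_0=\sigma^\alpha_p$ corresponds to $\pi_0=1$, using normalization $\mu_\alpha(\mathbb R)=1$.
\item Inductive step: applying $\mathcal L$ multiplies each coefficient by $\omega_{jk}$. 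This sends $\pi_n$ to $\omega\pi_n$, and the same orthogonalization is performed in both pictures.
\end{itemize}
Both recursions therefore generate identical $\alphaop_n=0$ and $\betaop_n$. They continue as long as $\beta_n>0$, that is, as long as $\pi_0,\dots,\pi_n$ remain independent in $L^2(\mu_\alpha)$. This holds exactly for $n<r$, which gives agreement for all $m\le r$. Since $\hat\mu_\alpha=\mu_\alpha$, the data-driven Jacobi matrix equals the oracle one. The claim that only probe preparation, free evolution and readout are used follows because the inputs are just the samples $C_\alpha(t_k)$.

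\emph{Main obstacle.} I expect the main difficulty to be Stage 1's identifiability bookkeeping rather than the algebra. Nyquist must be read as the strict bound $|\omega|<\pi/\Delta t$ for every support point, including $-\omega_{\max}$, so that no two atoms alias. The count in (b) must be in terms of distinct frequencies: degenerate transitions are pooled into one $p_\nu$ by Eq.~\eqref{eq:pbeta_def}, and this pooling is harmless because $\Jop$ depends only on $\mu_\alpha$. I would also note that the model order $r$ can be read off as $\mathrm{rank}\,H_0$ when a larger Hankel matrix is used, so $r$ need not be known in advance.
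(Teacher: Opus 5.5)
Your proposal is correct and follows the paper's proof essentially step for step. Stage~1 is the paper's Claim~(I): Nyquist injectivity, the Vandermonde factorization of the Hankel pencil $H_1-zH_0$, and the Vandermonde solve for the weights. Stage~2 is Claim~(II) via the correspondence $Q_n\leftrightarrow\pi_n$ of Appendix~\ref{app:measure_jacobi}, where your explicit induction and the count $\betaop_n>0$ for $n<|\mathrm{supp}\,\mu_\alpha|$ spell out what the paper only asserts, and your closing remark on probe-only inputs is Claim~(III).
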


\paragraph{Sketch of the proof.}
The argument consists of three steps.

(I) The autocorrelation
$C_\alpha(t) = \sum_\nu p_\nu e^{i\omega_\nu t}$
is a superposition of oscillations at the distinct Bohr frequencies $\omega_\nu$.
Condition (a) prevents aliasing of the high-frequency components.
Conditions (b) and (c) provide sufficiently many noiseless samples
for the Prony / matrix-pencil method to determine
the frequencies $\omega_\nu$ and their weights $p_\nu$ uniquely.
Equivalently, the distinct spectral lines are fully resolved.
The measured time series therefore determines the spectral measure $\mu_\alpha$ exactly.

(II) As shown in Appendix~\ref{app:measure_jacobi},
the Jacobi coefficients $\betaop_n$ are determined entirely by the measure $\mu_\alpha$
through the recurrence of orthogonal polynomials.
Since $\hat\mu_\alpha = \mu_\alpha$,
the Jacobi matrix constructed from $\hat\mu_\alpha$
is exactly the same as the one obtained by oracle Lanczos.

(III) The autocorrelation $C_\alpha(t)$ itself is obtained by preparing the probe,
allowing the system to evolve freely, and measuring the probe, as in Eq.~\eqref{eq:two_state}.
No matrix element, eigenvector, or commutator of $H$ is required.

The complete proof of these three steps is given in Appendix~\ref{app:oracle_free_proof}.

In practice, condition (b) is relaxed.
The estimator resolves the dominant atoms within the observation window $T = K\Delta t$.
Since $\Jop$ depends on $\mu_\alpha$ only through its first $2m$ moments,
the leading Jacobi coefficients become accurate as the captured weight approaches one.

Equation~\eqref{eq:C_measure} provides a consistency check.
The estimated weights $\hat p_\nu$ must be real and non-negative,
must sum to $C_\alpha(0) = 1$, and must be symmetric under $\omega_\nu \to -\omega_\nu$.
These are falsifiable constraints.
Violations quantify shot noise and imperfect bath thermalization.
We impose non-negativity directly in the least-squares step,
where it acts as a strong regularizer that suppresses spurious lattice frequencies.
By contrast, the complex residues $c^\alpha_\nu$ of the state-dependent signal
do not satisfy analogous reality and non-negativity constraints.

\section{Experimental Protocol and Realization of the Infinite-Temperature Bath}
\label{sec:implementation}

We show that the construction described above can be implemented as an experiment.
We describe, in turn, the single-probe circuit,
the measurement of the autocorrelation $C_\alpha(t)$,
and the ways of realizing the infinite-temperature bath,
which is the only physical requirement of the method.
Further details of the experimental pipeline and spectral estimation
are provided in Appendix~\ref{app:construction}.

\subsection{Single-probe circuit}

For a pure initial state $\rho_0 = |\psi_0\rangle\langle\psi_0|$,
writing $|\psi(t)\rangle = e^{-iHt}|\psi_0\rangle$ gives
$s_\alpha(t) = \langle\psi(t)|\sigma^\alpha_p|\psi(t)\rangle$,
so the circuit has three stages.
(1) Prepare $|\psi_0\rangle = |\!\uparrow\rangle_1\otimes|\mathbf{0}\rangle_{2\ldots N}$.
(2) Evolve the system under $H$ for a time $t_k$
(in the learning context $H$ is unknown and the system evolves on its own,
whereas in an engineered simulation $e^{-iHt}$ is implemented by Trotterization).
(3) Perform single-qubit measurement on spin 1 in the $\{\sigma^x,\sigma^y,\sigma^z\}$ bases.
No ancilla, no controlled gate, and no thermofield-double construction are needed.
For a mixed initial state
$\rho_0 = \sum_\mu p_\mu|\psi^{(\mu)}_0\rangle\langle\psi^{(\mu)}_0|$,
repeat and average the circuit with $\lfloor p_\mu M\rfloor$ shots for each $|\psi^{(\mu)}_0\rangle$.
This multi state strategy also increases the Liouvillian coverage.

\subsection{Measuring the autocorrelation}

The oracle-free construction requires the autocorrelation
$C_\alpha(t)$ defined operationally in Eq.~\eqref{eq:two_state}.
Its measurement requires only a minimal extension of the single-probe circuit.
Prepare the probe in $|+_\alpha\rangle$ or $|-_\alpha\rangle$,
place the bath at high temperature
(or, on engineered hardware, apply an independently drawn random bit-flip $X^{b_j}$ to each bath qubit),
evolve freely to $t_k$, and measure $\sigma^\alpha$ on the probe.
The estimator is
$\hat C_\alpha(t_k) = \tfrac{1}{2}[\hat s_\alpha(t_k|\rho_+) - \hat s_\alpha(t_k|\rho_-)]$,
averaged over $n_b$ bitstrings of $M$ shots each.
The bitstring-sampling variance adds to shot noise as
$\mathrm{Var}[\hat C_\alpha] \le 1/(2n_bM) + \mathrm{Var}_b[m_\alpha(b,t)]/n_b$,
where $m_\alpha(b,t)$ is the exact conditional signal for bitstring $b$.
Thus averaging independent bitstrings gives the expected $n_b^{-1/2}$ suppression,
but no universal $2^{-(N-1)/2}$ system-size law
is assumed for the computational-basis product-state ensemble.
For the $N = 6$ benchmark,
this contribution is quantified empirically in Sec.~\ref{sec:numerics_robust}.

The protocol requires free coherent evolution over the full observation window $T_C$,
so the platforms invoked above must supply a coherence time $T_2 \gtrsim T_C$.
The numbers are platform-dependent but easily estimated.
For an NV center coupled to a nuclear-spin environment with hyperfine couplings
$J/(2\pi) \sim 100$~kHz, the window $T_C = 160/J$ of Sec.~\ref{sec:numerics}
corresponds to $\approx 0.25$~ms.
This is within reach of dynamical-decoupling-extended coherence times
(reported at the millisecond scale for NV-center platforms),
though not of the bare $T_2^\ast$~\cite{ryan2010}.
For donor spins with $J/(2\pi) \sim 1$~MHz,
it corresponds to $\approx 25~\mu$s,
comfortably inside $T_2$~\cite{kobayashi2021}.
When decoherence over the window is not negligible,
it broadens the recovered poles into damped exponentials,
corresponding to the open-system situation of Sec.~\ref{sec:disc_open}.
The closed-system analysis of this paper applies as long as that broadening
stays below the window resolution $2\pi/T_C$,
that is, as long as $T_2 \gtrsim T_C$.

\subsection{Three routes to the infinite-temperature bath}

We outline the realizations of an infinite-temperature bath.
(i) Physical high-temperature bath.
When the bath temperature exceeds all bath energy scales,
$\rho_{\mathrm{bath}} \simeq \id_{2\ldots N}/2^{N-1}$ holds without active preparation,
as for an NV-center or donor-spin sensor.
Even at the platform's actual operating temperature
(room temperature for NV, or sub-kelvin to a few kelvin for donor spins, whichever regime is used),
this is far above the $\sim$1~kHz nuclear dipolar scale of the bath~\cite{lapasar2012,hangleiter2024}.
(ii) Classical randomization.
On programmable devices the maximally mixed state is emulated by averaging over uniformly random bitstrings,
$\id_{2\ldots N}/2^{N-1} = \mathbb{E}_b[|b\rangle\langle b|]$,
requiring only single-qubit bit-flip gates.
(iii) Purification by an EPR (Einstein--Podolsky--Rosen) pair.
When the device can prepare entangled states,
realize $\rho_{\mathrm{bath}} = \id_{2\ldots N}/2^{N-1}$
as the reduced state of a maximally entangled register of $(N-1)$ qubits,
\begin{equation}
|\Phi\rangle = \frac{1}{\sqrt{2^{N-1}}} \sum_{b\in\{0,1\}^{N-1}} |b\rangle_{\mathrm{bath}}\otimes|b\rangle_{\mathrm{aux}},
\label{eq:epr_state}
\end{equation}
for which $\tr_{\mathrm{aux}}(|\Phi\rangle\langle\Phi|) = \id_{2\ldots N}/2^{N-1}$ holds exactly,
so that the infinite-temperature trace appears as a pure-state expectation on the enlarged system
and Theorem~\ref{thm:oracle_free} applies without modification.
This realization suits solid-state spin-qubit processors that can prepare many-body states on all qubits
while readout is available only on a subset.

\section{Robustness to Finite Data and Finite Temperature}
\label{sec:sample_complexity}

Theorem~\ref{thm:oracle_free} relies on two idealizations.
The first is that the input autocorrelation $C_\alpha(t)$ is measured at infinite temperature,
so that the unobserved part $\mathcal{H}_b$ is maximally mixed.
The second is that the time series is sampled without noise,
with sufficiently fine time resolution and over a sufficiently long observation window.
In a real experiment, deviations occur in both respects.
However, as we show in this section, both can be controlled quantitatively,
while the required resources remain polynomial in the target precision.

The statistical deviation, arising from finite shots and a finite observation window,
can be suppressed using a number of shots polynomial in $1/\varepsilon$
(Sec.~\ref{sec:sample_complexity_stat}, Theorem~\ref{thm:sample_complexity}).
The systematic deviation due to finite temperature
is bounded by a small quantity proportional to the inverse temperature $\beta$.
Part of this deviation, namely the odd sector defined in Sec.~\ref{sec:finite_temp},
can be diagnosed directly from the data and is automatically removed by the pipeline.
The remaining even-sector bias is small and can be bounded explicitly (Sec.~\ref{sec:finite_temp}).

The results of this section are not required for the numerical verification that follows.
Readers interested primarily in the core construction may proceed directly to Sec.~\ref{sec:numerics}.

\subsection{Statistical deviation from finite shots and a finite observation window}
\label{sec:sample_complexity_stat}

Theorem~\ref{thm:oracle_free} stated exact recovery from noiseless samples.
Here we evaluate the number of measurements needed under realistic finite shots.
There are two sources of deviation.
One is shot noise. By performing $M$ measurements at each time step,
the estimated autocorrelation $\hat C_\alpha(t_k)$ concentrates around the true value.
Since the $\sigma^\alpha$ readouts are bounded by $\pm1$,
Hoeffding's inequality guarantees the exponential concentration
$P[|\hat C_\alpha(t_k)-C_\alpha(t_k)| > \delta]\leq 2 \exp(-M\delta^2)$
(see Lemma~\ref{lem:hoeff_explicit} in Appendix~\ref{app:explicit}).
Thus $M\sim1/\delta^2$ measurements suppress the signal error $\delta$.
The second source of deviation is frequency separation.
Extracting the individual Bohr frequencies from the time series
depends on resolving two nearby frequencies.
The ease of resolving them is governed by the observation window $T$
and the minimum frequency spacing
$q = \min_{\nu \neq \nu'} | \omega_\nu - \omega_{\nu'} |$.
Successfully separating these frequencies generally requires $qT \gtrsim 1$.
When frequencies are closely spaced,
separation becomes more difficult,
and a small error in the signal is more readily amplified into a large error
in the estimated frequency.
This error amplification arises from the ill-conditioning of the linear problem that maps the time series to frequencies. It can be bounded quantitatively by $q$, so the total number of shots required to reach precision $\varepsilon$ remains polynomial in $1/\varepsilon$.

For a many-body Hamiltonian, these hypotheses must be read at a chosen resolution.
The full measure $\mu_\alpha$ of a generic $H$ has $O(D^2)$ atoms
whose minimum gap and minimum weight shrink exponentially with $N$
(the exact values for the model of Sec.~\ref{sec:numerics} are listed below the theorem),
so no estimator can resolve all of them from a realistic record.
Given a target resolution $q$ and a weight floor $p_{\min}$,
let $\mu^{(q)}_\alpha$ denote a resolved measure whose retained atoms have spacing at least $q$ and weights at least $p_{\min}$.
The theorem below is exact when the sampled signal is generated by such a resolved measure.
If $\mu^{(q)}_\alpha$ is instead used as a coarse-grained approximation to a denser physical measure $\mu_\alpha$,
the resulting deterministic time-domain approximation error must be included in the signal-error budget.
The theorem bounds the number of shots needed to recover the resolved measure.
Throughout this subsection and Appendix~\ref{app:sample_complexity},
frequencies are expressed in a fixed reference unit,
so $\varepsilon$ is a dimensionless numerical tolerance.
This is stated in the following theorem.

\begin{theorem}[End-to-end sample complexity]
\label{thm:sample_complexity}
Let the resolved measure $\mu^{(q)}_\alpha$ have $r$ positive Bohr frequencies, minimum spacing $q$,
bandwidth $\omega_{\max}$, and minimum weight $p_{\min} = \min_\nu p_\nu$.
There is an estimator, using the sampled probe autocorrelation alone and no property of $H$,
that recovers all of its frequencies and weights to precision $\varepsilon$
with probability at least $1-\eta$, using a total number of shots
\begin{equation}
N_{\mathrm{tot}} = 
\mathrm{poly}\left( r,\ \frac{\omega_{\max}}{q},\ \frac{1}{p_{\min}},\ \frac{1}{\varepsilon},\ \log\left(\frac{1}{\eta}\right) \right),
\label{eq:sample_complexity}
\end{equation}
which is polynomial in $1/\varepsilon$.
\end{theorem}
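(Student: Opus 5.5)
The plan is to split the estimator into three independent stages: (i) shot-noise concentration at each sample time, (ii) a deterministic stability bound for super-resolution spectral estimation from noisy samples of a sparse positive even measure, and (iii) a union bound that assembles the two. The signal is $C_\alpha(t) = p_0 + \sum_{\nu=1}^{r} 2p_\nu\cos(\omega_\nu t)$, with at most $2r+1$ atoms, spacing $q$ and bandwidth $\omega_{\max}$.

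First we fix the sampling design. We take $\Delta t < \pi/\omega_{\max}$ as in Theorem~\ref{thm:oracle_free}(a), and an observation window $T = K\Delta t = c/q$ with a constant $c$ large enough that the Vandermonde/matrix-pencil system is well conditioned. This gives $K = O(\omega_{\max}/q)$ samples. At each $t_k$ we use $M$ shots split between $\rho_\pm$ as in Eq.~\eqref{eq:two_state}. By Lemma~\ref{lem:hoeff_explicit} and a union bound over $K+1$ times, taking $M = O(\delta^{-2}\log(K/\eta))$ gives $\max_k|\hat C_\alpha(t_k) - C_\alpha(t_k)| \le \delta$ with probability at least $1-\eta$. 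In the bitstring-randomization route, the bitstring variance enters only as an additive term with the same $1/n_b$ scaling, so it is absorbed into $\delta$ by taking $n_b = O(\delta^{-2}\log(K/\eta))$.

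The core step is deterministic: an $\ell_\infty$ perturbation $\delta$ of the samples must move each recovered frequency and weight by at most $\mathrm{poly}(r,\omega_{\max}/q,1/p_{\min})\,\delta$. We would invoke known stability results for the matrix-pencil/ESPRIT method, in the style of Moitra's super-resolution analysis or Liao--Fannjiang. The argument has four parts.

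\begin{enumerate}
\item Once $qT \gtrsim 1$, the Vandermonde matrix $V$ of the nodes $e^{i\omega_\nu\Delta t}$ has $\sigma_{\min}(V) \gtrsim \sqrt{K}$; this is the point where the separation $q$ and the window enter.
\item The noiseless Hankel matrix factors as $V\,\mathrm{diag}(p)\,V^{\mathrm T}$, so its $(2r+1)$-th singular value is at least about $p_{\min}K$.
\item Wedin/Weyl perturbation theory then bounds the perturbation of the signal subspace, and hence of the pencil eigenvalues, by $O(\delta K/(p_{\min}K)) = O(\delta/p_{\min})$ times Vandermonde condition factors.
\item Nyquist sampling keeps the map from $e^{i\omega\Delta t}$ to $\omega$ injective with derivative bounded below, at cost $1/\Delta t = O(\omega_{\max})$.
\end{enumerate}

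We then recover the weights by least squares with the perturbed Vandermonde matrix. Its error is controlled by $\sigma_{\min}(V)^{-1}$ together with the node error. The NNLS projection and symmetrization $\omega\to-\omega$ can only decrease the error, since the true measure is feasible. The condition $\varepsilon < q/2$ keeps atoms matched one-to-one, and the weight floor $p_{\min}$ guarantees that no atom is lost below the noise.

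Combining the stages, we set $\delta = \varepsilon/\mathrm{poly}(r,\omega_{\max}/q,1/p_{\min})$. This gives $N_{\mathrm{tot}} = 2(K+1)M = \mathrm{poly}(r,\omega_{\max}/q,1/p_{\min},1/\varepsilon,\log(1/\eta))$, which is polynomial in $1/\varepsilon$, with the estimator using only the samples. The main obstacle is the third part, obtaining the stability constants with explicit polynomial dependence. The concrete risk is that the Vandermonde lower bound requires $T$ somewhat larger than $2\pi/q$, and that the pencil eigenvalue bound carries a factor from the eigenvector condition number of the pencil. I would handle this by bounding that condition number through $\kappa(V)^2 = O(1)$ under the chosen separation, using the Ingham/Montgomery--Vaughan large-sieve inequality.
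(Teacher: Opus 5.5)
Your proposal is correct, but its core stability step follows a different route from the paper's proof. It is essentially the matrix-pencil argument the paper records in Remark~\ref{rem:pencil_relation} (Moitra's Vandermonde bound, Wedin's $\sin\Theta$ theorem, Bauer--Fike with constant $\kappa(V)$) and then deliberately sets aside for lack of explicit constants. Your shot-noise stage and your weight solve coincide with the paper's: Hoeffding plus a union bound, and $\hat p=\hat V^{+}\hat c$ with error $\lesssim\delta+T\varepsilon_\omega$. The $T\varepsilon_\omega$ term is the source of the extra $q^{-2}$, so you obtain the same exponents.

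For the frequencies, the paper replaces the Hankel--SVD--pencil chain by a filter, the $m_F$-th power of the unit-height Fej\'er kernel.
\begin{itemize}
\item Its coefficients are non-negative and sum to one, so it cannot amplify a uniform data error.
\item Its leakage onto other atoms is at most $e^{-2m_F}$ at the fixed window $L=2\pi e/q$, so accuracy costs only $m_F\approx\frac12\ln(1/\delta)$.
\item The scalar quotient $A_1(w)/A_0(w)$ then returns one node to within $8\delta/p_{\min}$, with trial points $w$ supplied by a thresholded grid search of $|A_0|$.
\item Vandermonde conditioning is handled by Gershgorin plus the Dirichlet-kernel bound, at the price of a $\ln s$.
\end{itemize}
That buys an elementary, self-contained proof with the explicit constant of Eq.~\eqref{eq:explicit_exponents}. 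It needs no model-order (rank) selection, no subspace-to-pencil step (your ``hence'' in part~3 hides the lower bound on $\sigma_{\min}$ of the row-deleted signal basis), and no Bauer--Fike disc-matching argument.

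Your route has its own advantages. It gives a guarantee for the actual pencil estimator of Theorem~\ref{thm:oracle_free} and Fig.~\ref{fig:estimator}. It removes the $\ln s$ via Moitra's bound. Through the large-sieve upper bound on $\sigma_{\max}(V)$, it also gives an $O(1)$ Bauer--Fike constant, sharper than the $2\sqrt r$ quoted in the remark. The cost is reliance on non-elementary theorems with untracked constants. That suffices for the polynomial statement of the theorem, but not for Eq.~\eqref{eq:explicit_exponents}.

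Two details need repair.
\begin{itemize}
\item Strict Nyquist, $\Delta t<\pi/\omega_{\max}$, is not quantitative. The extreme nodes $e^{\pm i\omega_r\Delta t}$ can nearly meet at $-1$. This destroys the wrap-around separation that both Moitra's bound and the large sieve require, and it makes $\arg\hat z$ unstable there. Impose a margin such as $\Delta t\le\pi/(\omega_{\max}+q)$; the paper takes $\Delta t=\pi/(2\omega_{\max})$ for exactly this reason.
\item NNLS is a projection in the metric $\|\hat V\,\cdot\,\|_2$, not the Euclidean one. It is therefore non-expansive in $\|\cdot\|_2$ only up to a factor $\kappa(\hat V)$, so ``can only decrease the error'' is not literally true. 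The step is harmless but unnecessary for the theorem.
\end{itemize}
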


The theorem is proved constructively.
Such an estimator is exhibited in Appendix~\ref{app:explicit},
where it is built out of the measured samples by a filtering step,
a quotient of two filtered sums, and a linear solve.
It is stated in this form because the constant in Eq.~\eqref{eq:explicit_exponents}
below belongs to that estimator.
The matrix pencil and band-limited non-negative fit actually
used in Sec.~\ref{sec:numerics} are close relatives of it,
and obey the same polynomial bound
by the standard route of spectral estimation,
which we record in Remark~\ref{rem:pencil_relation} (Appendix~\ref{app:explicit}).
That route does not produce explicit constants.
Figure~\ref{fig:estimator} compares the two implementations on the same data.

The polynomial can be made fully explicit.
A self-contained analysis (Appendix~\ref{app:explicit}),
which replaces the matrix-pencil perturbation chain by a Fej\'er-kernel localization
and uses nothing beyond Hoeffding's inequality and elementary estimates,
gives
\begin{equation}
N_{\mathrm{tot}}
\le \frac{1.8 \times 10^{5} \, \Gamma^{3} \, \omega_{\max}^{3}}{q^{3} \, p_{\min}^{2} \, \varepsilon^{2}} \ln\left( \frac{2(N_t + 2)}{\eta} \right), \qquad
\Gamma = e \, m_F + 4 ( 1 + \ln(s) ) + 1, \quad
m_F = \left\lceil \frac{1}{2} \ln\left( \frac{1}{\delta} \right) \right\rceil,
\label{eq:explicit_exponents}
\end{equation}
where $s$ is the number of atoms of $\mu^{(q)}_\alpha$,
$N_t = O(\Gamma\, \omega_{\max}/q)$ is the number of sampled times,
and $\delta \propto q \, p_{\min} \, \varepsilon / (\Gamma \, \omega_{\max})$
is the per-sample signal accuracy~\footnote{
The mutual dependence of $\Gamma$ and $\delta$ forms a simple self-consistency equation,
which can be solved iteratively.}.
The degrees are therefore $\varepsilon^{-2}$, $q^{-3}$, $p_{\min}^{-2}$, and $\omega_{\max}^{3}$,
with only polylogarithmic dependence on $s$ and $1/\eta$.
The exponent of $1/q$ splits.
Recovering the \emph{frequencies} alone costs only $q^{-1}$
(Remark~\ref{rem:expl_freqonly} in Appendix~\ref{app:explicit}),
and the extra $q^{-2}$ is the cost associated with estimating the weights,
incurred because a node misplaced by $\varepsilon_\omega$
accumulates a phase error $T\varepsilon_\omega$ across the record
that an amplitude fit cannot distinguish from a change of weight.
Since $\tauop(m)$ is a smooth functional of the measure that does not require
resolving individual Bohr frequencies, it sits in the cheaper regime.
This accounts for the accuracy of a few percent obtained for $\tauop$ in
Sec.~\ref{sec:numerics_robust} under conditions where the individual higher
$\betaop_n$ are less accurate.

\paragraph{Practice with a finite observation window.}
A finite record is naturally fit by a resolved approximation of the kind described above.
With a window $T = K\Delta t$, frequencies closer than $\lesssim 2\pi/T$ cannot be resolved,
and the estimator returns a truncated measure $\hat\mu_\alpha$ made of the dominant $r' \le s$ atoms,
that is, an estimate of $\mu^{(2\pi/T)}_\alpha$.
Strictly, this replacement is an approximation to the full signal,
and its induced time-domain discrepancy must remain
within the error budget of Theorem~\ref{thm:sample_complexity}.
Because the $m$-step Jacobi matrix depends on $\mu_\alpha$ only
through its first $2m$ moments (see Eq.~\eqref{eq:moment_match} in Appendix~\ref{app:oracle_free_proof}),
the leading coefficients $\betaop_n$ and the transport time $\tauop(m)$
are accurate when the resolved approximation preserves those low moments,
without the individual atoms ever being resolved.
This is the regime in which the numerics of Sec.~\ref{sec:numerics} operate.
There the window resolution $2\pi/T_C = 0.039$
exceeds the median gap $7.4\times10^{-3}$ of the exact measure,
so the record determines the coarse-grained measure and the low moments,
not the individual atoms,
and it is the leading coefficients and $\tauop$ that the demonstration tests.

The proof, with the estimator constructed explicitly, is given in Appendix~\ref{app:sample_complexity}.
In addition, for a single frequency of given weight the shot count
$1/\varepsilon^2$ is information-theoretically optimal at fixed evolution time
(Proposition~\ref{prop:optimal} in Appendix~\ref{app:optimal_scaling}).
Since recovering the whole measure includes recovering any one of its atoms,
$\varepsilon^{-2}$ lower-bounds the joint problem as well,
and Theorem~\ref{thm:sample_complexity} achieves it.
We do not claim optimality in the remaining parameters.
Exact frequency degeneracy $q \to 0$ is non-generic
(Remark~\ref{remark:smoothing} in Appendix~\ref{app:optimal_scaling}),
but even for a generic spectrum the bound can be very large.
For a many-body spectrum $q$ is exponentially small in $N$,
as the paragraph below quantifies.
This guarantee rests on the same logical footing as the end-to-end guarantee
established by parametric quantum probe tomography~\cite{chen2025probe}.
The difference is that the target here is a model-independent spectral measure,
and the access is a single infinite-temperature measurement rather than repeated Gibbs-state preparation.
The two guarantees are complementary.
The QPT method guarantees point identification of parameters within a known family,
and this work guarantees recovery of the model-independent measure.

\paragraph{Efficiency and its limitations.}
Three statements delimit the scope of Theorem~\ref{thm:sample_complexity}.
(i) The full measure $\mu_\alpha$ of a generic many-body $H$ has $O(D^2) = O(4^N)$ atoms,
and its minimum gap and minimum weight shrink exponentially with $N$.
For the XXZ model of Sec.~\ref{sec:numerics} at $g_x = 0.2$ the exact values
are shown in Table~\ref{tab:spectrum_scaling}.
Learning $\mu_\alpha$ at full resolution is consequently exponential
in $N$ for \emph{any} protocol.
This is a property of the target, not of the access model.
(ii) The protocol learns efficiently the measure at a chosen resolution.
At fixed $(q, p_{\min})$, the bound~\eqref{eq:sample_complexity} is polynomial
in $1/\varepsilon$ with the optimal exponent $2$.
(iii) The leading Jacobi coefficients and $\tauop(m)$ depend only on the low
moments of $\mu_\alpha$ and are therefore determined by the coarse-grained measure.
These are the quantities the numerical demonstration recovers.
The demonstration succeeds within a window whose resolution $2\pi/T_C = 0.039$
distinguishes only $\sim450$ frequency bins against the $888$ atoms at $N = 6$.
The guarantee of quantum probe tomography~\cite{chen2025probe} is polynomial in
the number of parameters of its family and in the system size.
The present guarantee is polynomial in $1/\varepsilon$ at fixed spectral resolution
and exponential in $N$ at full resolution.
The two are guarantees about different targets and are not directly comparable
along the system-size axis.

\begin{table}[tbp]
\centering
\caption{Parameters and spectral properties of the exact measure
  for the XXZ spin ring model ($J = 1$, $\Delta = 0.2$, $g_x = 0.2$)
  across various system sizes $N$.
  The table lists the number of atoms in the measure ($s$),
  the spectral bandwidth ($\omega_{\max}$),
  the minimum frequency spacing ($q_{\min}$), and
  the minimum spectral weight ($p_{\min}$).}
\label{tab:spectrum_scaling}
\begin{tabular}{lcccc}
\toprule
$N$ & atoms $s$ & $\omega_{\max}$ & $q_{\min}$           & $p_{\min}$            \\
\midrule
4   &    66     &  5.19           & $1.2 \times 10^{-2}$ & $1.2 \times 10^{ -6}$ \\
6   &   888     &  8.86           & $6.5 \times 10^{-6}$ & $3.4 \times 10^{-11}$ \\
8   & 11832     & 10.22           & $5.0 \times 10^{-7}$ & $1.1 \times 10^{-12}$ \\
\bottomrule
\end{tabular}
\end{table}

\subsection{Systematic deviation from finite temperature}
\label{sec:finite_temp}

Theorem~\ref{thm:oracle_free} crucially relied on the assumption that
the unobserved part $\mathcal{H}_b$ is at infinite temperature (maximally mixed).
On a real sensing platform the temperature is high but finite,
and $\mathcal{H}_b$ is slightly biased toward lower energies
($\rho_{\mathrm{bath}} \propto e^{-\beta H_{\mathrm{bath}}}$, with small $\beta > 0$).
We write $H = H_p \otimes \id_b + \id_p \otimes H_b + H_{\mathrm{int}}$
and take the bath state to be the Gibbs state of $H_b$.
At $\beta = 0$, the result is independent of this (otherwise ambiguous) split.

A finite temperature does not affect the Bohr frequencies.
The measured signal remains a sum of exponentials $e^{i\omega_{jk}t}$ and
only the weight of each component changes.
At $\beta = 0$ that weight is $|M^\alpha_{jk}|^2/D$,
whereas for $\beta > 0$ it acquires an $O(\beta)$ term built
from matrix elements of $\sigma^\alpha_p \otimes H_b$
(see Eq.~\eqref{eq:ftc_weights} in Appendix~\ref{app:finite_temp}).
Those matrix elements are not determined by $\mu_\alpha$,
so the measured signal can no longer be identified with the characteristic function of the
Hilbert--Schmidt spectral measure, and Theorem~\ref{thm:oracle_free} is no longer exact.
The obstruction is an unknown, Hamiltonian-dependent reweighting of the spectrum rather than a loss of spectral information. Correcting this reweighting requires knowledge of $\rho_b(\beta)$ and hence of $H_b$, which is itself part of the target information.
This is a limitation of the access model rather than an information-theoretic obstruction,
and it is removed by tunable $\beta$~\cite{chen2025probe}, by independent knowledge of $H_b$,
or by the bitstring randomization of Sec.~\ref{sec:implementation},
which realizes $\beta = 0$ exactly.

The high-temperature expansion turns this into a quantitative statement.
To first order, the correction measure splits into an even and an odd part,
and a parity selection rule (Appendix~\ref{app:finite_temp})
sends the even part into the hopping coefficients $\betaop_n$
and the odd part into the diagonal coefficients $\alphaop_n$.
The even-measure projection performed by the non-negative least-squares step
therefore removes the odd-sector artifact and restores $\alphaop_n=0$ exactly,
but leaves a systematic bias in $\betaop_n$, bounded by
\begin{equation}
\bigl| \tbetaop_n(\beta)-\betaop_n \bigr| \le 3\beta\,\Lambda\,\varrho_\star + O(\beta^2),
\qquad
\varrho_\star \le 2\Delta H_b \sqrt{\frac{g_{\max}}{D\,p_{\mathrm{cut}}}},
\label{eq:jacobi_finite_temp_main}
\end{equation}
with $\Lambda$ the Bohr bandwidth,
$\Delta H_b$ the spectral half-spread of the bath,
$p_{\mathrm{cut}}$ the smallest retained weight,
and $g_{\max}$ the maximum multiplicity of ordered level
pairs pooled into one retained Bohr-frequency atom.
Propagated to the transport time the bias is independent of the Krylov depth $m$
(see Eq.~\eqref{eq:tau_error} in Appendix~\ref{app:tau_error}).
The odd sector that the projection discards is actually informative.
Fitting it with an additional sine block
allows it to serve as an in-situ thermometer for the bath
(see Eq.~\eqref{eq:ftc_thermometer} in Appendix~\ref{app:thermometry}),
although it vanishes identically for systems with an antiunitary symmetry,
the XXZ benchmark used here among them.

For the $N = 6$ benchmark an exact finite-$\beta$ calculation gives
$\max_n|\Delta\betaop_n| \approx 30\,\beta J^2$ and $|\Delta\tauop|/\tauop \approx 1.1\,\beta J$,
so at $\beta J=10^{-4}$ the transport-time bias is $\approx10^{-4}$.
At room temperature, appropriate for an NV-center sensor, the nuclear dipolar bath coupling
($\sim$1~kHz) compared with $k_{\mathrm B}T/h \approx 6\,$THz
(where $k_{\mathrm B}$ is the Boltzmann constant and $h$ is the Planck constant)
gives $\beta\|H_b\| \sim 10^{-10}$.
For a donor-spin sensor operated at its cryogenic readout temperature ($\sim$0.3--3~K),
$k_{\mathrm B}T/h$ is $6$--$60\,$GHz instead, so $\beta\|H_b\| \sim 10^{-7}$--$10^{-8}$.
The finite-temperature systematic error is orders of magnitude
below the finite-window and sampling errors of Table~\ref{tab:finite_temp} (Sec.~\ref{sec:numerics_robust}).
Appendix~\ref{app:finite_temp} gives the derivation.

\subsection{Structure of the error budget}
\label{sec:error_budget}

Four sources of deviation propagate through the pipeline.
They are the finite window ($\sim2\pi/T$),
shot noise ($\sim1/\sqrt{M}$),
the model-dependent finite-bitstring-sampling fluctuation
(scaling as $n_b^{-1/2}$ for independent samples),
and finite temperature ($\propto\beta$).
Stage by stage they are listed with numbers
in Table~\ref{tab:finite_temp} (Sec.~\ref{sec:numerics_robust}).
At $N = 6$ the window and bath-sampling floors are comparable and dominant,
then shot noise, with finite temperature smallest by far.
We do not infer a universal system-size scaling
for the computational-basis bath-sampling term from these data.
These errors can be reduced by increasing $T$ or $M$, lowering $\beta$, or applying the even projection.

\section{Numerical Verification}
\label{sec:numerics}

We verify the results on the $N = 6$ XXZ ring with a transverse field,
\begin{equation}
H = \sum_{j=1}^{N} \bigl[ J (X_j X_{j+1} + Y_j Y_{j+1} + \Delta Z_j Z_{j+1}) + g_x X_j \bigr], \qquad
J = 1, \quad \Delta = 0.2,
\label{eq:model}
\end{equation}
with periodic boundary conditions and $g_x \in \{0, 0.2, 0.4\}$.
At $g_x = 0$, $[H, \hat{N}] = 0$.
For $g_x > 0$, the number-conservation symmetry is broken.
All calculations use exact diagonalization with full reorthogonalization.
The Lanczos parameters are $t_{\max} = 80$, $n_{\mathrm{times}} = 4001$ equally spaced steps,
$m_{\mathrm{Lanczos}} = 20$, and $n_\mu = 4$ initial states,
and the oracle-free comparison uses $m = 10$ with the window $T_C = 2 t_{\max} = 160$
at the same sampling rate.
At $g_x = 0$, only $s_z(t)$ is nonzero because of number conservation. For $g_x > 0$, all three Bloch components are occupied. The survival amplitude also decays faster as $g_x$ increases.

\subsection{Verification of the Jacobi structure, transport formula, and transport time scale}
\label{sec:numerics_structure}

Figure~\ref{fig:jacobi_struct} confirms Theorem~\ref{thm:jacobi}.

For all $g_x \in \{0,0.2,0.4\}$ and $m=20$, the heatmap of $|\Jop|$ is exactly tridiagonal, with off-tridiagonal entries at machine precision. The diagonal $\alphaop_n$ is also zero at machine precision. The Ritz values approximate the true Liouvillian eigenvalues, with the dominant modes captured first.

Figure~\ref{fig:transport} verifies the transport formula~\eqref{eq:transport} directly in operator space,
using the round-trip truncation error of the operator autocorrelation itself,
$\epsilon^{\mathrm{op}}_m(T) = \max_{s \le T} |C_\alpha(s) - C_\alpha^{(m)}(s)|$
with $C_\alpha^{(m)}(t) = (Q_0|e^{i\Jop t}|Q_0)$,
i.e., Eq.~\eqref{eq:duhamel} (Appendix~\ref{app:jacobi}) evaluated directly.
The break time is affine in $\tauop(m)$,
with slopes $(a, R^2) = (1.00, 0.997), (0.96, 0.998), (0.96, 0.999)$
and intercepts $b \approx 0$ for $g_x = 0, 0.2, 0.4$,
consistent with the round-trip velocity-$2$ cancellation of Theorem~\ref{thm:LR}
and matching the state Krylov result of Ref.~\cite{matsuura2026state} to the same precision.
Figure~\ref{fig:tau_accuracy} shows that the oracle-free $\tauop(m)$ tracks the oracle across $g_x$,
with final relative errors at $m = 10$ of $0.9\%, 3.2\%, 6.6\%$.

\begin{figure}[tbp]
\centering
\includegraphics[width=\linewidth]{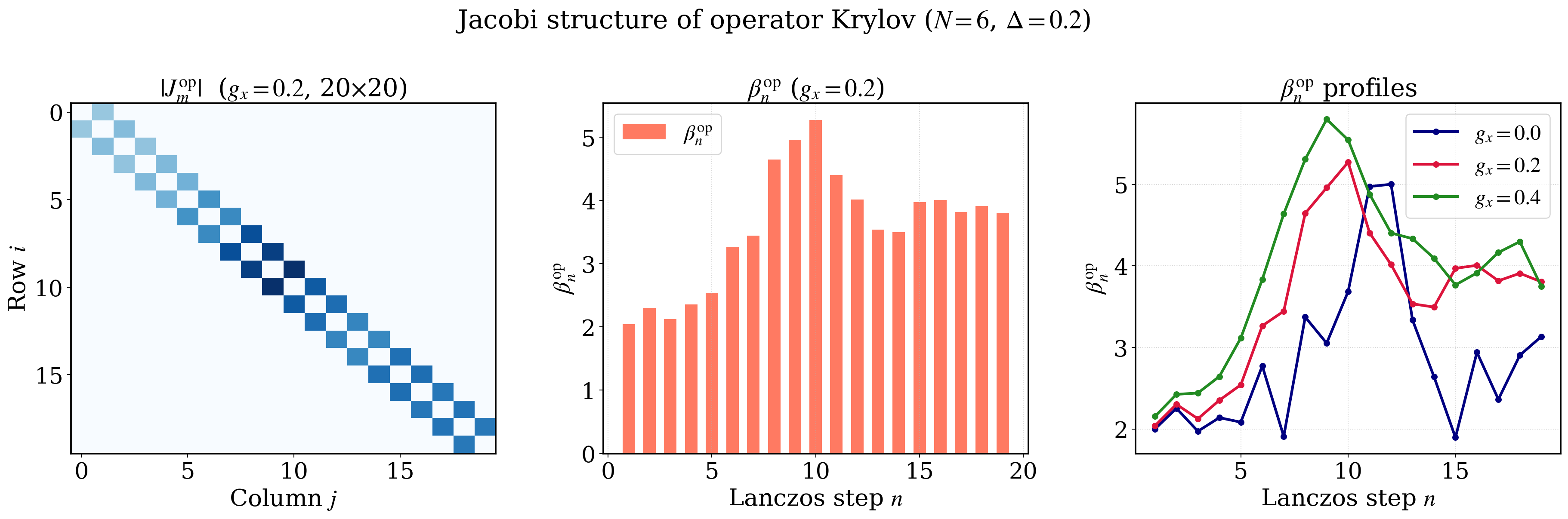}
\caption{Oracle construction of the operator Krylov Jacobi structure ($N = 6$, $\Delta = 0.2$).
  Left, heatmap of $|\Jop|$ ($20 \times 20$) for $g_x = 0.2$.
  The exactly tridiagonal pattern confirms Theorem~\ref{thm:jacobi},
  with off-tridiagonal entries at machine precision ($\lesssim 10^{-15}$).
  Center, subdiagonal coefficients $\betaop_n$ versus Lanczos step $n$ for $g_x = 0.2$
  (the diagonal $\alphaop_n$ is omitted since it is zero at machine precision).
  Right, $\betaop_n$ profiles for $g_x \in \{0,0.2,0.4\}$.
  Larger $g_x$ gives larger $\betaop_n$ for $n \lesssim 10$,
  reflecting faster operator spreading under stronger symmetry breaking.
  Beyond the peak the ordering is no longer monotonic in $g_x$.
  The peak near $n \approx 10$ reflects operator spreading before finite-size saturation.}
\label{fig:jacobi_struct}
\end{figure}

\begin{figure}[tbp]
\centering
\includegraphics[width=\linewidth]{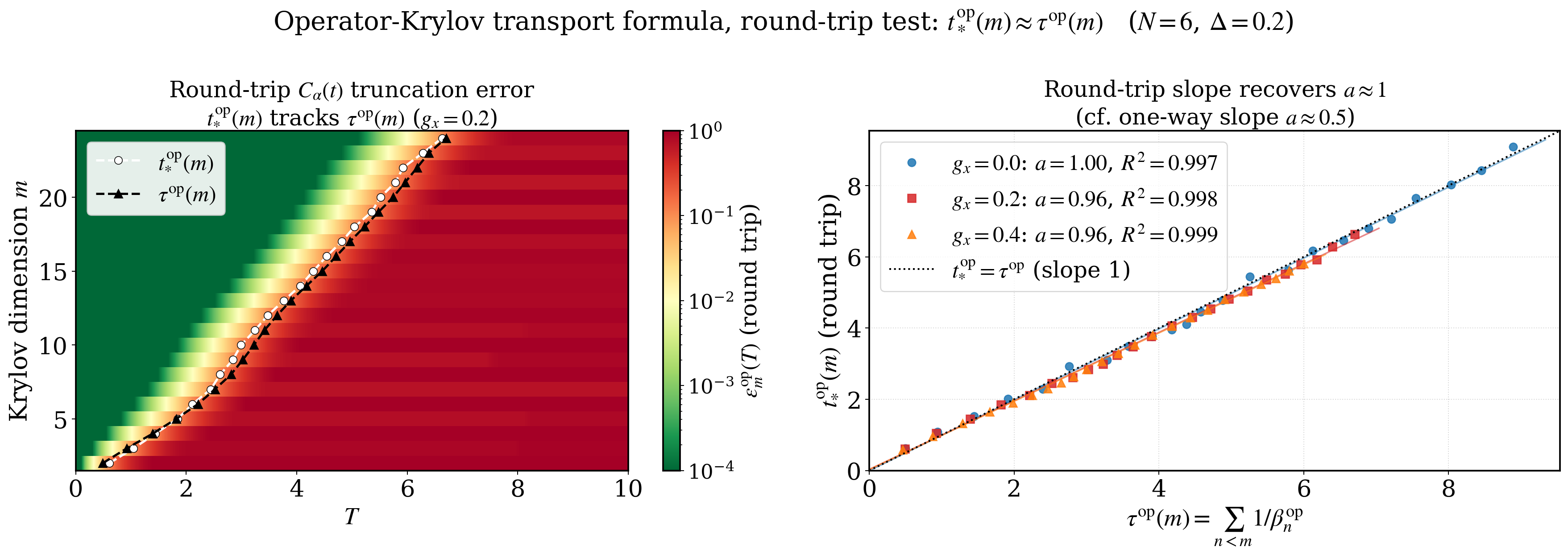}
\caption{Direct round-trip verification of the operator transport formula~\eqref{eq:transport}
  ($N = 6$, $\Delta = 0.2$, $\epsilon^{\mathrm{op}}_m(T) = \max_{s \le T} |C_\alpha(s) - C_\alpha^{(m)}(s)|$,
  the truncation error of the operator autocorrelation $C_\alpha(t) = (Q_0|e^{i\Jop t}|Q_0)$ itself,
  i.e., the state-independent return element bounded by Eq.~\eqref{eq:duhamel}).
  Left, heatmap of $\epsilon^{\mathrm{op}}_m(T)$ for $g_x = 0.2$
  with the break time $t_{\ast}^{\mathrm{op}}(m)$ (white) at threshold $0.1$,
  tracking $\tauop(m)$ (black) closely.
  Right, $t_{\ast}^{\mathrm{op}}(m)$ versus $\tauop(m)$
  with the affine fit $t_{\ast}^{\mathrm{op}} = a \,\tauop + b$
  for $g_x \in \{0, 0.2, 0.4\}$,
  yielding $(a, R^2) = (1.00, 0.997), (0.96, 0.998), (0.96, 0.999)$,
  respectively, with the intercept $b \approx 0$ in all cases.
  The round-trip slope recovers $a \approx 1$,
  matching the state Krylov result of Ref.~\cite{matsuura2026state}.}
\label{fig:transport}
\end{figure}

\begin{figure}[tbp]
\centering
\includegraphics[width=0.6\linewidth]{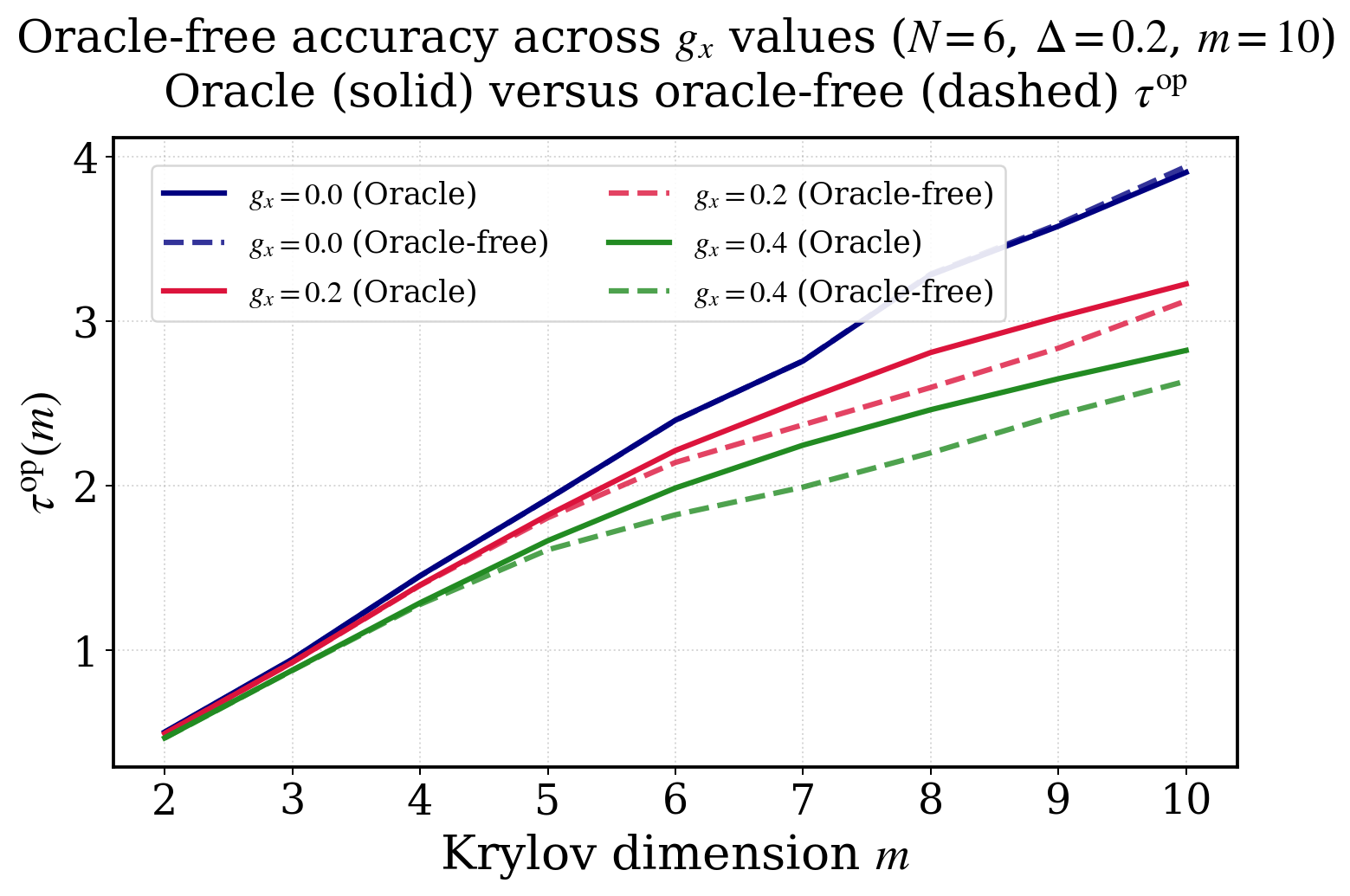}
\caption{Accuracy of the oracle-free $\tauop$ across $g_x$
  ($N = 6$, $\Delta = 0.2$, $m = 10$, $T_C = 160$).
  Oracle (solid) versus oracle-free (dashed) $\tauop(m)$ for $g_x \in \{0, 0.2, 0.4\}$.
  The curves closely track each other over the whole Krylov depth shown.
  Final relative errors at $m = 10$ are
  $0.9\%$ ($g_x = 0$), $3.2\%$ ($g_x = 0.2$), and $6.6\%$ ($g_x = 0.4$).}
\label{fig:tau_accuracy}
\end{figure}

\subsection{Numerical verification of Theorem~\ref{thm:oracle_free}}
\label{sec:numerics_thm3}

Figure~\ref{fig:oracle_free} verifies Theorem~\ref{thm:oracle_free} numerically.
For the model above (Eq.~\eqref{eq:model}),
we construct the same operator Krylov Jacobi matrix $\Jop$ in three ways and compare them.
(A) The oracle is the ordinary operator Lanczos that applies the commutator $\mathcal{L} = [H,\cdot\,]$
directly to the seed $\sigma^\alpha_p$ in the $D^2$-dimensional Liouville space,
assuming full knowledge of $H$.
It cannot be used in a restricted-access experiment but gives the numerical ground truth.
(B) Exact diagonalization of $H$ builds the \emph{exact} spectral measure $\mu_\alpha$
from the true Bohr frequencies $\omega_{jk} = E_j-E_k$
and weights $|\langle E_j|\sigma^\alpha_p|E_k\rangle|^2/(D\|\sigma^\alpha_p\|^2_{\mathrm{HS}})$,
and applies the Stieltjes/Lanczos procedure to it.
(B) is not built from a time series and is therefore not experimentally oracle-free. It is included to verify the theorem in two separate stages.
(C) is the actual protocol.
It builds $C_\alpha(t)$ from the two-state probe measurement of Eq.~\eqref{eq:two_state},
obtains the estimated measure $\hat\mu_\alpha$ by matrix pencil with band-limited non-negative least squares,
and applies the Stieltjes procedure to it.
No matrix element, eigenvector, or commutator of $H$ is used here.

The left panel shows $\betaop_n$ from the three routes,
and the two comparisons test the two claims of the theorem separately.
(A) and (B) agree to machine precision, $|\Delta\betaop_n| \lesssim 10^{-13}$.
This is claim (II), the stage $\mu_\alpha \to \Jop$, and it must hold because by
Eq.~\eqref{eq:inner_prod_main} $\Jop$ depends on $H$ only through $\mu_\alpha$.
(C) tracks them for the leading coefficients and deviates at higher $n$.
This is claim (I), the stage $C_\alpha \to \mu_\alpha$, degraded by the finite record,
since frequencies closer than $\lesssim 2\pi/T_C$ are unresolved and $\hat\mu_\alpha$
is a finite-resolution approximation of $\mu_\alpha$ rather than $\mu_\alpha$ itself.
Quantitatively, at $g_x = 0.2$ route (C) reproduces $\betaop_{1,2,3}$ to $\lesssim 0.5\%$
and $\betaop_4$ to $\approx 3\%$, while $\betaop_{5\text{--}9}$ deviate by $10$--$34\%$.
The deviation shrinks as the window and the shot count grow,
and grows with $g_x$ because faster operator spreading (Fig.~\ref{fig:jacobi_struct})
demands higher coefficients within the same window.

The center panel shows the cumulative quantity $\tauop(m) = \sum_{n<m} 1/\betaop_n$.
Reading $1/\betaop_n$ as the local time to advance one step along the Krylov chain,
$\tauop(m)$ is the cumulative time for the operator to reach Krylov depth $m$,
that is, a data-driven indicator of the effective observation window.
Here too the oracle and the oracle-free agree well,
showing that this transport time scale can be estimated from measured data alone.

The right panel shows the weighted spectral-capture error
$\sum_\nu p_\nu \min_i | \hat\omega_i - \omega_\nu | \, / \sum_\nu p_\nu$,
where $\hat\omega_i$ are the Ritz values of the $m\times m$ Jacobi matrix.
Equivalently, these are the Gauss-quadrature nodes associated with $\mu_\alpha$ at Krylov order $m$.
They need not coincide with individual exact atoms in general.
In this benchmark, each individual Ritz value sits very close to a genuine Bohr frequency
(mean nearest-atom distance $4\times 10^{-3}$--$2\times 10^{-2}$ for the oracle
and $2 \times 10^{-3}$--$4 \times 10^{-2}$ for the oracle-free construction, over $m=2$--$10$),
and the weighted capture error decreases with $m$ from $\approx 1.1$ to $\approx 0.3$--$0.4$,
showing that the dominant spectral weight is captured first.
That the oracle-free (red) and oracle (blue) curves are comparable means
that a $\Jop$ built from measured data alone approximates
the dominant Liouvillian transitions as accurately as the oracle.

Together, (A) and (B) confirm that the Jacobi matrix built from the exact
spectral measure matches the oracle result to machine precision.
(B) and (C) confirm that the measure estimated from the measured
autocorrelation closely approximates the exact measure.
Combined, these results verify the construction of
Theorem~\ref{thm:oracle_free}, including at the symmetric point $g_x=0$.
Since $\mu_\alpha$ has support in every symmetry sector,
no dark sector obstructs the construction of $\Jop$.

\begin{figure}[tbp]
\centering
\includegraphics[width=\linewidth]{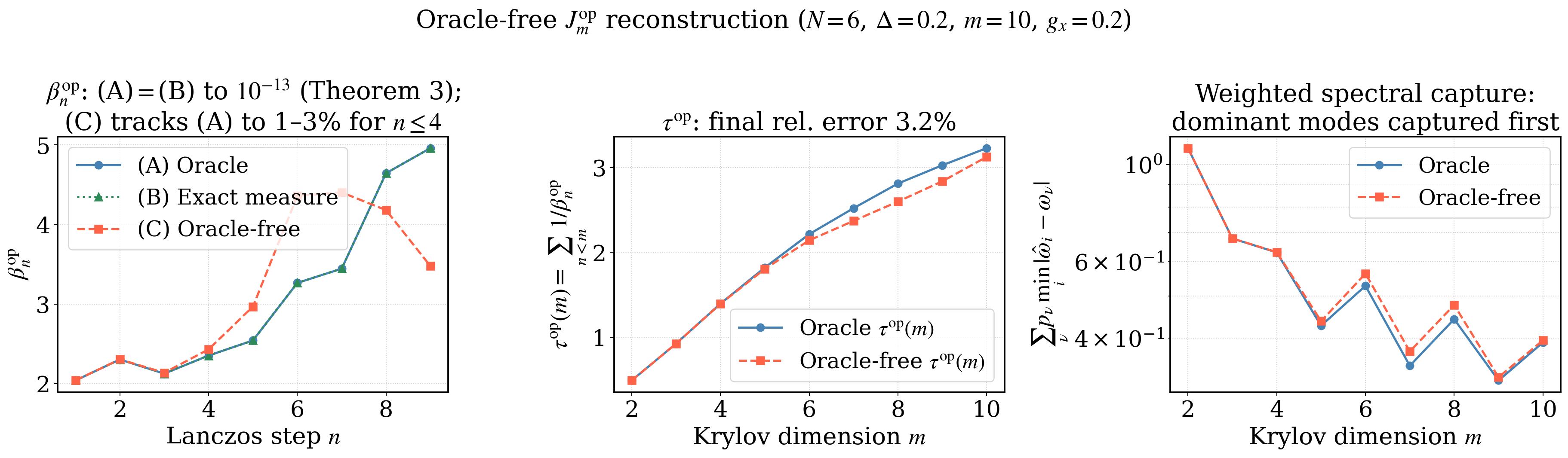}
\caption{Oracle versus oracle-free construction of $\Jop$
  from the infinite-temperature autocorrelation
  ($N = 6$, $\Delta = 0.2$, $m = 10$, $g_x = 0.2$).
  Left, subdiagonal coefficients $\betaop_n$ obtained from the oracle (blue circles),
  from Lanczos applied to the exact pooled measure $(\omega_\nu,p_\nu)$ (green triangles),
  and from the fully oracle-free procedure (red squares). In the oracle-free procedure,
  the measured autocorrelation $C_\alpha(t)$ is first used
  to estimate the spectrum by band-limited NNLS,
  after which the Stieltjes procedure is used to construct the Jacobi coefficients.
  The results from the oracle and from Lanczos on the exact pooled measure coincide to machine precision,
  confirming Theorem~\ref{thm:oracle_free}.
  The fully oracle-free result agrees with them to $\lesssim 0.5\%$ for $n \le 3$
  and to approximately $3\%$ for $n = 4$,
  with larger deviations at higher $n$ due to finite-window estimation error.
  Center, cumulative $\tauop(m)$ obtained from the oracle and oracle-free constructions.
  The two results closely track each other, with a final relative error of $3.2\%$.
  Right, weighted spectral-capture error
  $\sum_\nu p_\nu \min_i |\hat\omega_i - \omega_\nu|$ as a function of $m$
  for the oracle (blue) and oracle-free (red) constructions.
  Both errors decrease with $m$ at a similar rate.
  The dominant Liouvillian transitions are captured first,
  and the oracle-free spectral estimate achieves a quality comparable to that of the oracle.
  The oracle-free profiles are reconstructed from measured probe data alone.}
\label{fig:oracle_free}
\end{figure}

Figure~\ref{fig:estimator} repeats
this comparison with the estimator that Theorem~\ref{thm:oracle_free} actually assumes,
namely a genuine matrix pencil (Hankel generalized eigenvalue problem) run over the whole record,
alongside the band-limited non-negative fit used above.
The resolution of a pencil is set by the time span of its Hankel matrix,
so the record is decimated to just above the Nyquist rate and the Hankel blocks are chosen to span the full window.
On noiseless data the two estimators agree with the oracle at the same level.
The relative $\tauop(m{=}10)$ error is $0.9, 3.2, 6.6\%$ for the grid fit and
$9.4, 5.0, 5.7\%$ for the pencil at $g_x = 0, 0.2, 0.4$,
with the grid fit clearly better on the leading coefficients
($\betaop_{1,2,3}$ to $0.2$--$0.8\%$ versus $5$--$14\%$),
because positivity constrains exactly the low moments that fix them.
The correspondence between Theorem~\ref{thm:oracle_free} and the numerics is
therefore a property of the construction, not of one implementation.
Under shot noise, the difference becomes qualitative.
At $M = 10^2$--$10^4$ the unconstrained pencil loses the leading coefficients
entirely (relative $\tauop$ error $\approx 80\%$),
while the constrained fit remains at the few-percent level of Fig.~\ref{fig:robust}.
Non-negativity is the regularization that makes the reconstruction usable at
realistic shot counts, and the constrained estimator is used for the production figures throughout.

\begin{figure}[tbp]
\centering
\includegraphics[width=\linewidth]{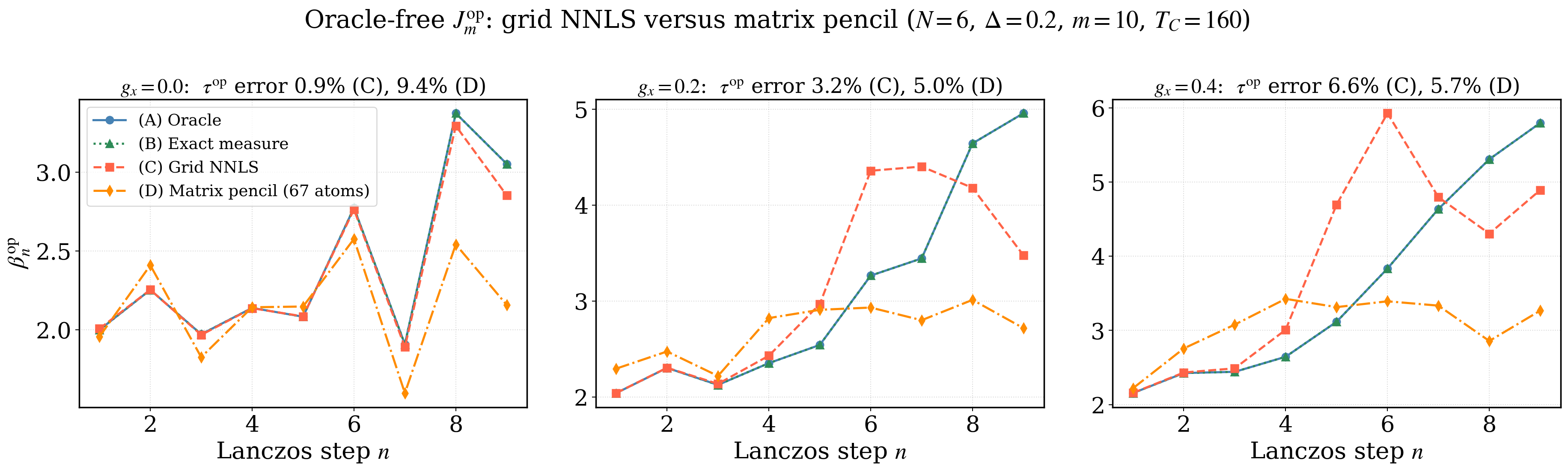}
\caption{Oracle-free $\Jop$, grid NNLS versus genuine matrix pencil
  ($N = 6$, $\Delta = 0.2$, $m = 10$, $T_C = 160$, noiseless data).
  Each panel shows the subdiagonal coefficients $\betaop_n$
  at one value of $g_x$ from
  (A) the oracle,
  (B) Lanczos on the exact measure,
  (C) the band-limited non-negative fit, and
  (D) the full-record matrix pencil.
  The two data-driven estimators agree with the oracle at the same few-percent level in $\tauop(m{=}10)$,
  with (C) more accurate on the leading coefficients
  because positivity constrains the low moments.
  Under shot noise (not shown) the pencil loses the leading coefficients
  ($\tauop$ error $\approx 80\%$ at $M = 10^2$--$10^4$)
  while (C) stays at the few-percent level (see text).}
\label{fig:estimator}
\end{figure}

\subsection{Robustness to finite data and error budget}
\label{sec:numerics_robust}

Figure~\ref{fig:robust} quantifies the robustness to finite bath sampling ($n_b$) and finite shots ($M$).
All cells are averaged over $n_{\mathrm{rep}} = 6$ independent noise
realizations and quoted as mean $\pm$ standard error.
The relative $\tauop$ error falls from $12$--$14\%$ at $M = 10^2$ to
$3$--$5\%$ at $M = 10^4$.
The $n_b$ dependence is significant and monotone at $M = 10^2$.
The mean Jacobi-coefficient error is
$0.795 \pm 0.010$, $0.763 \pm 0.007$, $0.717 \pm 0.012$ for $n_b = 4, 16, 64$,
with non-overlapping error bars,
consistent with the $n_b^{-1/2}$ suppression expected from averaging independent bitstring samples.
No system-size scaling is inferred from these $N=6$ data.
The individual coefficients $\betaop_n$ are noisier than this.
At $M = 10^4$ the mean relative error $\langle|\Delta\betaop_n|/\betaop_n\rangle$ is
$\approx 14$--$20\%$ across $n_b \in \{4, 16, 64\}$,
substantially larger than the $\tauop$ error
because the sum defining $\tauop(m)$ averages down the per-coefficient noise (Table~\ref{tab:finite_temp}).
This floor decreases by only $12\%$ between $M = 10^2$ and $M = 10^4$ (Fig.~\ref{fig:robust}, left),
so it is set by the finite window and the bath-sampling fluctuation rather than
by shot noise.
A pure shot-noise error would fall by a factor of $10$ over that range.
A shot-noise-free column ($M = \infty$, dotted in Fig.~\ref{fig:robust})
is included.
Its error is slightly larger and much more variable
($0.86 \pm 0.17$, $0.94 \pm 0.18$, $0.66 \pm 0.14$ for $n_b = 4, 16, 64$).
The reason is that the retention threshold of the estimator adapts to the noise floor.
Removing shot noise therefore widens the retained band
and admits bitstring-ensemble sampling fluctuation as signal.
That column therefore bounds the window-plus-bath floor from above rather than isolating it.

\begin{figure}[tbp]
\centering
\includegraphics[width=\linewidth]{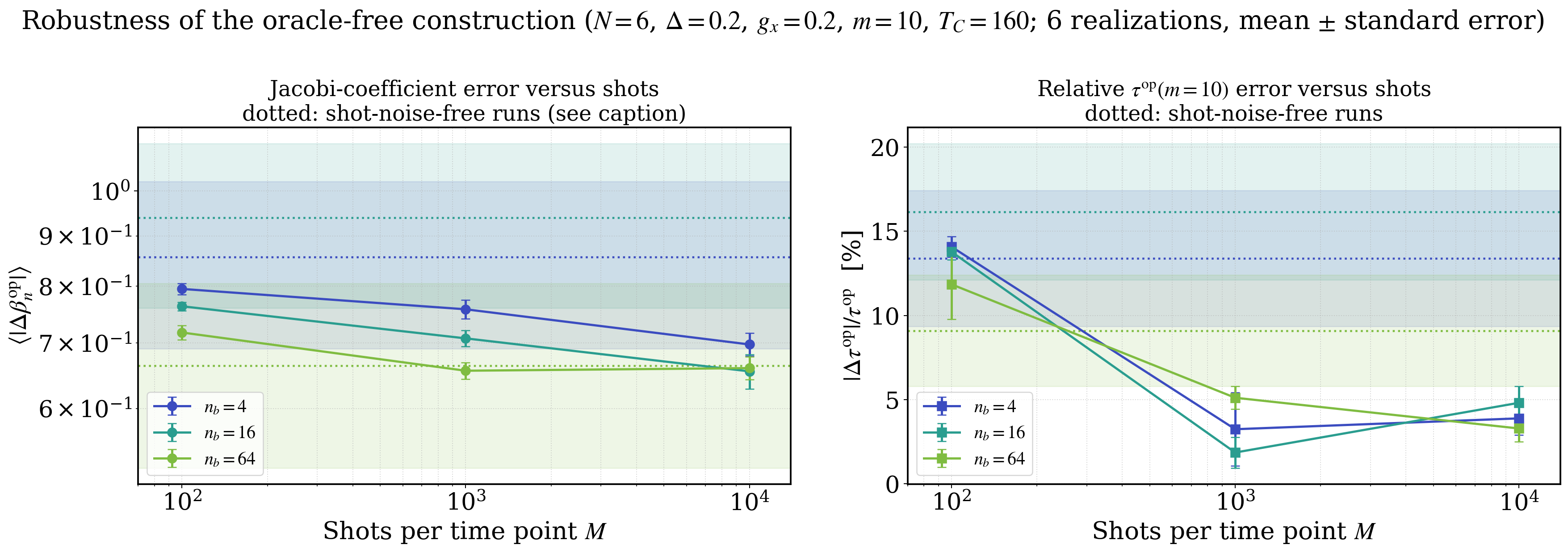}
\caption{Robustness of the oracle-free construction
  ($N = 6$, $\Delta = 0.2$, $g_x = 0.2$, $m = 10$, $T_C = 160$,
  $n_{\mathrm{rep}} = 6$ independent noise realizations, mean $\pm$ standard error).
  Left, mean Jacobi-coefficient error $\langle|\Delta\betaop_n|\rangle$
  versus shots per time point $M$,
  for several random-bitstring bath sample counts $n_b$.
  Dotted lines and shaded bands show the shot-noise-free runs ($M = \infty$).
  The plot shows absolute error, and the corresponding mean relative error
  $\langle|\Delta\betaop_n|/\betaop_n\rangle$ is approximately $14$--$20\%$.
  The ordering in $n_b$ is monotone at $M = 10^2$ with non-overlapping error
  bars, and the weak $M$ dependence (a $12\%$ decrease over two decades,
  against the tenfold decrease pure shot noise would give)
  shows that the $\betaop_n$ error floor is set by the finite window and bath
  sampling, not by shot noise.
  Right, relative error of $\tauop(m = 10)$ under the same conditions,
  falling from $12$--$14\%$ at $M = 10^2$ to $3$--$5\%$ at $M = 10^4$.
  The shot-noise-free runs are not lower than the finite-$M$ ones because the
  retention threshold of the estimator adapts to the noise floor (see text),
  and they bound the window-plus-bath floor from above.
  The non-negative least-squares constraint suppresses spurious noise atoms by construction.}
\label{fig:robust}
\end{figure}

We also estimate the concrete size of the systematic error from finite temperature.
The measured slopes of the exact finite-$\beta$ computation (Appendix~\ref{app:finite_temp})
for the $N = 6$ ring at $g_x = 0.2$ are
$\max_n|\Delta\betaop_n| \approx 30\,\beta J^2$ and
$|\Delta\tauop|/\tauop \approx 1.1\,\beta J$.
At $\beta J = 10^{-4}$
($k_{\mathrm B}T = 10^4J$, well within the realistic range for NV-center and donor-spin sensing platforms)
this gives a relative $\tauop$ error of $\approx 10^{-4}$ and,
propagated through the pole shifts,
a relative parameter error $\lesssim 10^{-3}$,
subdominant to the $0.1$--$2.7\%$ error from partial spectral coverage.

On NV-center and donor-spin sensing platforms,
the finite-temperature error is negligible across the whole pipeline.
On a programmable device where the bath is initialized near a pure state,
the bitstring-randomization route (Sec.~\ref{sec:implementation})
realizes the infinite-temperature condition exactly,
so the relevant error is instead the bitstring-sampling variance,
which is already quantified in Fig.~\ref{fig:robust}.

Finally, Table~\ref{tab:finite_temp} shows the error budget
summarizing the error at each stage in numbers
($N = 6$ XXZ ring, $g_x = 0.2$, $T_C = 160/J$, $M = 10^4$).
Among the three error sources, the dominant one at each stage is the finite window,
then shot noise, with finite temperature the smallest.
On a programmable device where the bath is initialized near a pure state,
the bitstring randomization (Sec.~\ref{sec:implementation})
realizes the infinite-temperature condition exactly,
so the finite-temperature column is not needed and the bitstring-sampling variance acts instead,
quantified in Fig.~\ref{fig:robust}.

\begin{table}[tbp]
\centering
\caption{Error budget for each stage of the oracle-free pipeline
  for the $N = 6$ XXZ ring ($g_x = 0.2$, $T_C = 160/J$, $M = 10^4$ shots, $n_b \le 64$ bitstrings).
  All entries are relative errors.
  The Finite window column refers to the noiseless,
  exact-bath pipeline, and the Bath sampling column
  to the empirical finite-bitstring-sampling fluctuation at $N = 6$,
  which at this system size is not separable from the window error
  in Fig.~\ref{fig:robust} (their combined effect is quoted in parentheses).
  The finite-temperature column is evaluated at $\beta J = 10^{-4}$
  from the measured slopes of Appendix~\ref{app:finite_temp}
  (e.g., $|\Delta\tauop|/\tauop \approx 1.1\,\beta J$).
  The dominant source at each stage is shown in bold.}
\label{tab:finite_temp}
\begin{tabular}{lllll}
\toprule
Stage & Finite-$T$          & Finite window & Shot noise & Bath sampling \\
      & ($\beta J=10^{-4}$) &               & ($M=10^4$) &               \\
\midrule
Measure $\mu_\alpha$           & $\lesssim 0.01\%$ & $\mathbf{\sim 4\%}$ (res.\ $2\pi/T_C$)  & $\sim 1\%$                         & $\sim 2\%$ \\
Jacobi coeff.\ $\betaop_n$     & $\lesssim 0.2\%$  & \multicolumn{3}{l}{$\mathbf{14\text{--}20\%}$ (window $+$ bath, $M$-independent)}         \\
Transport time $\tauop(m=10)$  & $\sim     0.01\%$ & $\mathbf{\sim 3.2\%}$                   & \multicolumn{2}{l}{$3$--$5\%$ (shot $+$ bath)}  \\
Pole position $\hat\omega_\nu$ & $\lesssim 0.1\%$  & $\mathbf{\sim 4\%}$                     & $\sim 1\%$                         & $\sim 2\%$ \\
\bottomrule
\end{tabular}
\end{table}

\section{Discussion}
\label{sec:discussion}

In Quantum Krylov Learning (QKL),
the operator Krylov Jacobi matrix $\Jop$ and
the state Krylov matrix $J_m$ of Ref.~\cite{matsuura2026state}
share the same mathematical structure.
Both are Jacobi matrices obeying a Lieb--Robinson bound with $v_{\mathrm{LR}} = 2$
and a transport formula, but the two are built for opposite purposes.
The matrix $J_m$ is an efficient signal predictor that needs oracle access
and is restricted to $[H, \hat{N}] = 0$.
The matrix $\Jop$ is a learned, oracle-free spectral object valid for any Hamiltonian.
However, $\Jop$ is not itself an efficient predictor.
Expanding $\rho_0$ in the Krylov basis of $\sigma^\alpha_p$
takes a number of steps of order the number of distinct Bohr frequencies, which can reach $D^2$.
We therefore use $\Jop$ for spectral learning and, where a signal has to be predicted, $J_m$.

The construction removes the central restriction of the number-conserving survival-amplitude approach,
since $C_\alpha(t)$ is well defined and probe-measurable for every $H$
and no stationary reference state is needed (Sec.~\ref{sec:intro}, Appendix~\ref{app:survival}).
The rest of this section places the result among related methods (Sec.~\ref{sec:disc_comparison}),
states what it does and does not determine (Sec.~\ref{sec:disc_scope}),
and collects what remains open (Sec.~\ref{sec:disc_open}).

\subsection{Comparison with related methods}
\label{sec:disc_comparison}

The setting of QKL is closest to the earlier time-trace identification series
based on the eigensystem realization algorithm (ERA)~\cite{sone2017,zhang2014}
and to quantum probe tomography (QPT)~\cite{chen2025probe},
which learns a many-body Hamiltonian from a single-site probe.
Table~\ref{tab:comparison} places the three side by side.

\begin{table}[tbp]
\centering
\small
\caption{The three approaches to single-probe Hamiltonian learning.
  Eigensystem realization algorithm (ERA)~\cite{sone2017,zhang2014} and 
  quantum probe tomography (QPT)~\cite{chen2025probe}
  return a parameter vector inside an assumed model class,
  whereas Quantum Krylov Learning (QKL) returns the spectral measure, model-independently.}
\label{tab:comparison}
\renewcommand{\arraystretch}{1.25}
\begin{tabular}{@{}>{\raggedright\arraybackslash}p{19mm}>{\raggedright\arraybackslash}p{34mm}>{\raggedright\arraybackslash}p{38mm}>{\raggedright\arraybackslash}p{50mm}@{}}
\toprule
                                       &
\textbf{ERA}~\cite{sone2017,zhang2014} &
\textbf{QPT}~\cite{chen2025probe}      &
\textbf{QKL (this work)}               \\
\midrule
Access model &
Probe pure, all other qubits maximally mixed &
Gibbs states $\rho \propto e^{-\beta H}$ at tunable $\beta$, interleaved with probe operations &
Single fixed $\beta \approx 0$, with the bath $\id_{2\ldots N}/2^{N-1}$ either natural or bit-flip randomized, and only probe preparation, free evolution, and single-probe measurement \\
\addlinespace
Learned object &
Hamiltonian parameters &
Parameter vector in a structured family &
Spectral measure $\mu_\alpha$, equivalently $\Jop$ \\
\addlinespace
Data domain &
Time traces through a transfer-function intermediate &
Probe observables at several times \emph{and} temperatures, with polynomial relations &
Full $C_\alpha(t)$ over a window $T_C$, with Bohr atoms resolved as decay-free exponentials at a single $\beta$ \\
\addlinespace
Identifiability &
Instance by instance &
Generic, by algebraic geometry and a smoothing analysis &
Spectral measure $\mu_\alpha$ is identifiable, but $H$ itself remains non-unique (Sec.~\ref{sec:disc_scope}) \\
\addlinespace
Guarantee &
No end-to-end bound &
Cost of $\mathrm{poly}(1/\varepsilon)$ queries and polylogarithmic post-processing &
Cost of $\mathrm{poly}(1/\varepsilon)$ shots at fixed spectral resolution, with explicit degrees
(Theorem~\ref{thm:sample_complexity}, Eq.~\eqref{eq:explicit_exponents})
and optimal $\varepsilon^{-2}$ shot scaling at fixed evolution time
(Proposition~\ref{prop:optimal}) \\
\addlinespace
Output &
Individual couplings &
Individual couplings &
Operator-growth rate $\{\betaop_n\}$, Ritz values $\approx$ dominant Liouvillian transitions, and a data-driven window criterion $\tauop(m)$ \\
\bottomrule
\end{tabular}
\end{table}

Three points are not captured in the table.
First, the initial state of the ERA series,
with the probe pure and all other qubits maximally mixed, is \emph{exactly} our $\rho_\pm$.
The three approaches differ in the treatment of the resulting time series,
not in the preparation of the system.
Second, the structural symmetries that QPT identifies as unavoidable in its access model,
for instance the reflection about the probe site on a translation-invariant lattice,
are special cases of our probe-commutant gauge.
A reflection that fixes the probe commutes with $\sigma^\alpha_p$.
Third, the ERA route has two limitations noted in the literature~\cite{chen2025probe}.
They are a numerically fragile transfer-function intermediate and identifiability that is checked instance by instance.
Both are resolved here without changing the physical setting.
The transfer function is replaced by the spectral measure,
which Theorem~\ref{thm:oracle_free} determines exactly and
which is estimated under a positivity constraint.
Identifiability is organized by the probe spectral measure and its explicit non-identifiabilities.

The two guarantees address different targets.
QPT identifies Hamiltonian parameters within a known structured family.
QKL instead reconstructs a model-independent spectral measure at a fixed resolution.
Theorem~\ref{thm:sample_complexity} provides an end-to-end guarantee for this resolved spectral reconstruction.
Because the targets differ, their system-size dependence should not be compared directly.

\subsection{Scope of the reconstruction}
\label{sec:disc_scope}

Two limits bound what probe-only data can reveal about $H$,
and they are of different kinds.
The first is exact and information-theoretic.
The autocorrelation is invariant under $H \mapsto UHU^\dagger + c\,\id$
for any $U$ commuting with $\sigma^\alpha_p$.
This is one source of non-identifiability,
but not the only one (for example $H \mapsto -H$ also leaves $C_\alpha$ unchanged).
The exact non-parametric object determined here is $\mu_\alpha$, equivalently $\Jop$.
Recovering individual couplings, therefore,
requires an additional structural assumption, a model form for $H$,
and transitions with $\langle E_j|\sigma^\alpha_p|E_k\rangle = 0$
remain invisible in this channel at any temperature.
Neither limit comes from the estimator,
and both persist however long one measures.

The second limit is quantitative and concerns resolution.
The full measure of a generic many-body $H$ has $O(D^2)$ atoms whose minimum gap
and minimum weight shrink exponentially with $N$ (Table~\ref{tab:spectrum_scaling}),
so resolving the spectrum atom by atom is exponentially hard for any protocol.
Theorem~\ref{thm:sample_complexity} is accordingly a statement
about a resolved measure satisfying the stated gap and weight assumptions.
When such a measure is used as a coarse-grained proxy for a denser physical measure,
the induced time-domain approximation error must also be controlled.
The leading Jacobi coefficients and $\tauop(m)$ depend on $\mu_\alpha$ only through its low moments,
so they are stable when the resolved approximation preserves those low moments,
and they are recovered accurately in a window
that resolves only a few hundred of the $888$ atoms at $N = 6$
(Sec.~\ref{sec:sample_complexity_stat}).

Two further conditions are demanded of the platform rather than of $H$.
First, the bath must be at, or near, infinite temperature.
At $\beta > 0$ the measured signal carries an unknown,
$H$-dependent reweighting of $\mu_\alpha$,
negligible on NV-center and donor-spin sensing platforms and exactly absent under bitstring randomization,
but a structural limitation for a genuinely thermal state unless $\beta$ can be tuned
(Sec.~\ref{sec:finite_temp}).
Second, the dynamics must stay closed over the observation window.
The reduction of Arnoldi to Lanczos, hence the Jacobi rather than Hessenberg structure,
rests on Hermiticity of the Liouvillian, which in the laboratory means $T_2 \gtrsim T_C$
(Sec.~\ref{sec:implementation}).

\subsection{Open directions}
\label{sec:disc_open}

An open direction is robustness to model mismatch.
Figure~\ref{fig:robust} tests the pipeline against shot noise and finite bath sampling,
but not against a true Hamiltonian containing interactions outside an assumed class.
Ref.~\cite{ghosh2025local} quantifies exactly this for recursive boundary-probe estimation.
Quantifying how the estimated $\hat\mu_\alpha$ responds to such mismatch
is a well-posed question we leave open.

The extension to open systems requires more than a change of estimator.
Under Lindblad dynamics, the poles of $s_\alpha(t)$ move from $i\omega_\nu$
with real $\omega_\nu$ to $-\gamma_k + i\omega_k$ with decay rates $\gamma_k > 0$.
The matrix pencil handles damped exponentials without difficulty,
but the Krylov structure itself changes.
A non-Hermitian Lindbladian requires Arnoldi rather than Lanczos and
returns a Hessenberg rather than a Jacobi matrix~\cite{bhattacharya2022open,liu2023krylov},
so the orthogonal-polynomial framework used throughout this paper,
and with it the identification of the signal with a positive measure,
does not carry over directly.

Two further theoretical directions are more incremental.
They are sharpening the spectral-gap dependence of the sample complexity
and using multi-site probes to enlarge the fraction of the Liouvillian spectrum resolved.
On the experimental side,
a full hardware demonstration is the next logical step.
As discussed in Sec.~\ref{sec:intro}, hole-spin qubits in Si and Ge nanostructures,
with their intrinsically anisotropic exchange coupling,
are a promising candidate platform for such a demonstration.

Finally, two connections go beyond Hamiltonian learning.
First, the coefficients $\betaop_n$ are formally the growth rate of Krylov complexity
in the sense of the universal operator-growth hypothesis~\cite{parker2019,barbon2019,Rabinovici2021,Rabinovici2025review,nandy2025krylov},
which for chaotic systems predicts asymptotic linear growth $\betaop_n \sim n$~\cite{rabinovici2022,dymarsky2021}.
Our $N = 6$ data show $\betaop_n$ peaking near $n \approx 10$ and then falling,
which we interpret as finite-size saturation rather than a test of the hypothesis.
Establishing the asymptotic regime requires $N \gg 6$, beyond exact diagonalization.
We note this connection because Theorem~\ref{thm:oracle_free} makes $\{\betaop_n\}$
measurable from probe-only data, which in principle opens operator growth to experiment.
Second, Ref.~\cite{matsuura2026hp} uses the present construction
as the spectral-learning stage of a coherent state-transfer protocol
built on the Hayden--Preskill--Yoshida--Kitaev recovery circuit,
where the same gauge freedom decides which part of an absorbed state
can be recovered without further structural assumptions.

\section{Conclusion}
\label{sec:conclusion}

We have introduced \emph{Quantum Krylov Learning (QKL)}
as a framework for Hamiltonian learning under restricted access.
The QKL framework uses a single-temperature quantum-probe autocorrelation,
organized through its operator Krylov structure,
to construct model-independent spectral data.
Its spectral-learning core is organized around the operator Krylov Jacobi matrix $\Jop$,
constructed from the infinite-temperature autocorrelation of a single probe observable.
The matrix $\Jop$ is reconstructed for any Hamiltonian, without a symmetry assumption,
exactly in the noiseless limit (Theorem~\ref{thm:oracle_free}) and,
in the finite-shot regime, with a $\mathrm{poly}(1/\varepsilon)$
end-to-end sample-complexity guarantee at fixed spectral resolution,
with explicit polynomial degrees
(Theorem~\ref{thm:sample_complexity}, Eq.~\eqref{eq:explicit_exponents}).
It obeys a Lieb--Robinson bound and a transport formula,
and the transport formula gives a data-driven criterion for the observation window.
The infinite-temperature requirement is met naturally on NV-center and donor-spin platforms,
and the error from finite temperature is controlled and subdominant to finite-window
and shot-noise effects (Sec.~\ref{sec:finite_temp}).
On the $N = 6$ XXZ benchmark, the fully data-driven pipeline
reproduces the leading oracle Jacobi coefficients to within a few percent
(the higher coefficients within the present window to $10$--$34\%$)
and the window scale $\tauop$ to $0.9$--$6.6\%$.
Together with the state Krylov framework of Ref.~\cite{matsuura2026state},
QKL provides a unified, oracle-free Krylov approach to Hamiltonian learning under restricted access,
realizable with free time evolution and single-probe measurement alone.

\appendix

\section{Non-Measurability of the Survival Amplitude under Probe-Only Access}
\label{app:survival}

Here we show the two special properties that make the BMN survival-amplitude measurement possible,
and see where they fail for a generic Hamiltonian.
Let
\begin{equation}
|\mathbf{0}  \rangle = |0\rangle_p \otimes |\mathbf{0}\rangle_b, \qquad
|\mathbf{1}_1\rangle = |1\rangle_p \otimes |\mathbf{0}\rangle_b,
\end{equation}
with the first factor being the accessible boundary probe.

\paragraph{Local population as a global return probability.}
If $[H, \hat{N}] = 0$ and the initial state contains one excitation,
the dynamics stays in the one-excitation sector,
\begin{equation}
e^{-iHt}|\mathbf{1}_1\rangle = \sum_j a_j(t)|\mathbf{1}_j\rangle, \qquad
\mathcal{A}(t) = \langle\mathbf{1}_1|e^{-iHt}|\mathbf{1}_1\rangle = a_1(t),
\end{equation}
where $|\mathbf{1}_j\rangle$ is a single-excited state at the site $j$.
Because there is exactly one excitation,
finding the probe excited means that the excitation has returned to the original site~1.
Hence
\begin{equation}
P_p(t) = \langle n_p(t)\rangle = |a_1(t)|^2 = |\mathcal{A}(t)|^2.
\label{eq:app_probe_population_equals_survival}
\end{equation}
This is the first special ingredient of the BMN setting.
A local probe-population measurement equals the return probability of a particular global many-body state.

\paragraph{The vacuum as an internal phase reference.}
The modulus $|\mathcal{A}(t)|$ is not enough to determine the complex amplitude.
In the number-conserving case, however, the zero-excitation sector is one-dimensional,
so
\begin{equation}
H|\mathbf{0}\rangle = E_0|\mathbf{0}\rangle.
\end{equation}
A probe-only rotation prepares
$(|\mathbf{0}\rangle + |\mathbf{1}_1\rangle) / \sqrt{2}$,
and after free evolution the transverse probe coherence contains
\begin{equation}
e^{i E_0 t} \langle \mathbf{1}_1 | e^{-iHt} | \mathbf{1}_1 \rangle = e^{i E_0 t} \mathcal{A}(t).
\label{eq:app_bmn_cross_term}
\end{equation}
Measurements in two transverse probe bases therefore determine the real and imaginary parts of $\mathcal{A}(t)$
relative to the stationary vacuum. This is the second special ingredient.

\paragraph{Generic Hamiltonian.}
When number-nonconserving terms are present,
the evolved state is no longer restricted to the one-excitation sector.
A probe-population measurement instead gives
\begin{equation}
P_p(t) = \sum_b \left| \langle 1, b | e^{-iHt} | 1, \mathbf{0}_b \rangle \right|^2,
\label{eq:app_general_probe_population}
\end{equation}
whereas the many-body survival probability is only the single contribution
with $b = \mathbf{0}_b$,
\begin{equation}
|\mathcal{A}(t)|^2 = \left| \langle 1, \mathbf{0}_b | e^{-iHt} | 1, \mathbf{0}_b \rangle\right|^2.
\end{equation}
Thus probe population no longer determines even the modulus of the global survival amplitude.
Independently, the vacuum is not generically an eigenstate,
so the probe-only experiment has no guaranteed stationary internal branch against which to read the phase of $\mathcal{A}(t)$.
A controlled-$e^{-iHt}$ Hadamard test, or an independently known probe-preparable eigenstate,
could supply such a reference, but neither is assumed in the restricted-access model of this work.

These are the only points needed from survival-amplitude spectroscopy here.
The main construction therefore adopts the probe autocorrelation $C_\alpha(t)$
defined in Eq.~\eqref{eq:two_state}.
Its probe-only measurement and its identification with the operator Krylov spectral measure
are developed in Secs.~\ref{sec:setup} and~\ref{sec:oracle_free}.

\section{Jacobi Structure, Lieb--Robinson Bound, and Transport Formula}
\label{app:jacobi}

This appendix provides the formal proofs for the structural and dynamical properties
of the operator Krylov chain presented in Sec.~\ref{sec:jacobi}.
We first prove the exact Jacobi structure and the unconditional vanishing
of the diagonal coefficients (Theorem~\ref{thm:jacobi}),
and then derive the operator Lieb--Robinson bound (Theorem~\ref{thm:LR})
and the associated transport formula.

\subsection{Proof of Theorem~\ref{thm:jacobi}}

Tridiagonality is immediate. Arnoldi applied to the Hermitian generator
$\mathcal{L}$ (Eq.~\eqref{eq:L_hermitian}) is the symmetric Lanczos process,
and its matrix is Jacobi~\cite{saad2003}.
It remains to prove that $\alphaop_n = 0$ holds unconditionally.

\paragraph{Trace proof.}
Write $Q_0 = \sigma^\alpha_p/\|\sigma^\alpha_p\|_{\mathrm{HS}}$
with $\sigma^\alpha_p = (\sigma^\alpha_p)^\dagger$.
Since $\mathcal{L} = [H,\cdot\,]$,
the adjoint parity of the Krylov vectors alternates,
$Q_n^\dagger = (-1)^n Q_n$.
This is proved by induction.
If $Q_n^\dagger = (-1)^n Q_n$ and $Q_{n-1}^\dagger = (-1)^{n-1} Q_{n-1}$,
then $\mathcal{L}Q_n = [H,Q_n]$ has adjoint parity $(-1)^{n+1}$,
and once $\alphaop_n = 0$ is established at step $n$,
the residual defining $Q_{n+1}$ inherits the same parity.
The base case is $Q_0^\dagger = Q_0$.
Given this parity, the diagonal coefficient is
\begin{align}
\alphaop_n
&= (Q_n|\mathcal{L}Q_n)
 = \frac{1}{D} \tr \bigl( Q_n^\dagger[H,Q_n] \bigr)
 = \frac{(-1)^n}{D} \tr \bigl( Q_n[H,Q_n] \bigr) \notag \\
&= \frac{(-1)^n}{D} \bigl( \tr (HQ_n^2) - \tr (Q_n^2H) \bigr)
 = 0
\label{eq:alpha_trace}
\end{align}
by cyclicity of the trace.
At the first step, $\alphaop_0 = (Q_0|[H,Q_0]) = 0$ holds by the same identity,
so the induction and the vanishing of $\alphaop_n$ hold together at every step.

\paragraph{Spectral-measure proof.}
In the energy eigenoperator basis,
$\mathcal{L}$ is diagonal with the Bohr frequencies $\omega_{jk}=E_j-E_k$,
and $C_\alpha(t) = (\sigma^\alpha_p|e^{i\mathcal{L}t}|\sigma^\alpha_p)$
is the characteristic function of the spectral measure $\mu_\alpha$,
which is even because $\sigma^\alpha_p$ is Hermitian.
The Lanczos iterates $Q_n$ correspond to the orthonormal polynomials $\pi_n(\omega)$ of $\mu_\alpha$,
and evenness of $\mu_\alpha$ forces a definite parity $\pi_n(-\omega) = (-1)^n\pi_n(\omega)$~\cite{gautschi2004}.
Hence $\omega\,\pi_n(\omega)^2$ is odd and $\alphaop_n = \int\omega\,\pi_n(\omega)^2 d\mu_\alpha = 0$.
The two proofs are the same statement in two representations.
Positivity $\betaop_n>0$ follows as in the standard Lanczos process. \qed

\subsection{Lieb--Robinson bound and transport formula}

The Lieb--Robinson bound of Theorem~\ref{thm:LR} is proved exactly
as its state Krylov counterpart in Ref.~\cite{matsuura2026state},
applied to $\Jop$ with hopping coefficients $\betaop_n$ in place of $\beta_n$.
Conjugate $\Jop$ by the weighted shift $e^{\lambda\rho^{\mathrm{op}}(\cdot,0)}$,
apply a finite Schur test to the conjugated generator,
and integrate the resulting differential inequality by Gr\"onwall's lemma.
The velocity $v_{\mathrm{LR}}^{\mathrm{op}} = 2$ follows from
$\lim_{\lambda \to 0} \betaop_j\sinh(\lambda/\betaop_j)/\lambda = 1$
for $\betaop_j > 0$.

For the transport formula, decompose the full generator as
$i\mathcal{L} = i(\Jop\oplus\mathcal{L}_\perp)+iV_m$, where the boundary term
$V_m = \betaop_{m} \bigl( |Q_{m})(Q_{m-1}| + \mathrm{h.c.} \bigr)$
couples the last retained site $m-1$ to the complement.
Note that $\Kop_m = \mathrm{span}\{Q_0,\dots,Q_{m-1}\}$
carries the hoppings $\betaop_1,\dots,\betaop_{m-1}$,
so the first neglected coupling is $\betaop_m$.
Duhamel's principle applied to the return element
$C_\alpha(t)-C^{(m)}_\alpha(t) = (Q_0|(e^{i\mathcal{L}t}-e^{i\Jop t}P_m)|Q_0)$
(with $P_m$ the projector onto $\mathcal{K}_m$) gives
\begin{equation}
\epsilon^{\mathrm{op}}_m(T) \leq
\betaop_{m} \int_0^T
\bigl| (Q_0|e^{i\mathcal{L}(T-s)}|Q_{m}) \bigr| \,
\bigl| (Q_{m-1}|e^{i\Jop s}|Q_0) \bigr| \, ds,
\label{eq:duhamel}
\end{equation}
a product of \emph{two} propagation amplitudes,
namely one from site $0$ out to site $m-1$ (time $s$)
and one from site $m$ back to site $0$ (time $T-s$).
Inserting the Lieb--Robinson bound~\eqref{eq:LR} into each factor shows that the integrand is
exponentially small unless both $s \gtrsim \rho^{\mathrm{op}}(m-1,0)/2$
and $T-s$ $\gtrsim \rho^{\mathrm{op}}(m,0)/2 > \rho^{\mathrm{op}}(m-1,0)/2$,
i.e., unless $T$ $\gtrsim 2\rho^{\mathrm{op}}(m-1,0)/v_{\mathrm{LR}}^{\mathrm{op}} = \tauop(m)$,
where $\rho^{\mathrm{op}}(m-1,0) = \sum_{j=1}^{m-1} 1/\betaop_j = \tauop(m)$
exactly matches the convention of Eq.~\eqref{eq:transport}.
The return signal has to traverse the chain \emph{twice} at velocity $2$.
This is the origin of the slope $a \approx 1$ of the round-trip transport formula~\eqref{eq:transport}.
A one-way matrix element $(Q_{m-1}|e^{i\Jop t}|Q_0)$
crosses the distance once and becomes appreciable already at $t \approx \tauop(m)/2$,
consistent with the slope $a \approx 0.52$ observed when the break time
is defined through the one-way state-signal error.
Fitting the resulting break time $t_{\ast}^{\mathrm{op}}(m)$
against $\tauop(m)$ gives slope $a \approx 0.95$--$1.00$
and a negligibly small intercept $b \approx 0$
($R^2 \approx 0.997$--$0.999$ across $g_x \in \{0, 0.2, 0.4\}$,
see Fig.~\ref{fig:transport} in Sec.~\ref{sec:numerics_structure}),
confirming the round-trip cancellation directly in operator space
to the same precision as the state Krylov result of Ref.~\cite{matsuura2026state}.

\section{Determination of the Jacobi Matrix from the Spectral Measure}
\label{app:measure_jacobi}

This appendix shows that the spectral measure $\mu_\alpha$
determines the operator Krylov Jacobi matrix $\Jop$ uniquely
(using no other information about $H$).
This is the basis of the claim in Sec.~\ref{sec:jacobi}
that the coefficients are a function of $\mu_\alpha$ alone,
and it is also the justification of the second stage of the two-stage construction
in Sec.~\ref{sec:oracle_free_construction} (the step from measure to Jacobi matrix).

The Lanczos iteration started from
$Q_0 = \sigma^\alpha_p/\|\sigma^\alpha_p\|_{\mathrm{HS}}$ has the structure
$Q_n \propto \sum_{j,k}r_n(\omega_{jk})M^\alpha_{jk}|E_j\rangle\langle E_k|$
with real polynomials $r_n$.
Hence every inner product entering the recursion is
\begin{equation}
(Q_n|f(\mathcal{L})|Q_{n'}) = \int r_n(\omega)\,f(\omega)\,r_{n'}(\omega)\,d\mu_\alpha(\omega),
\label{eq:inner_prod}
\end{equation}
an integral against $\mu_\alpha$ alone.
The phases of $M^\alpha_{jk}$, the individual magnitudes within a degenerate frequency class,
and the eigenvectors never appear.
Therefore the recursion is the Stieltjes procedure for the orthonormal polynomials $\{\pi_n\}$ of $\mu_\alpha$,
and the three-term recurrence
\begin{equation}
\omega\,\pi_n(\omega) = \betaop_{n+1}\pi_{n+1}(\omega) + \alphaop_n\pi_n(\omega) + \betaop_n\pi_{n-1}(\omega), \qquad
\int\pi_n\pi_{n'}\,d\mu_\alpha = \delta_{nn'},
\label{eq:three_term}
\end{equation}
is unique by Favard's theorem~\cite{chihara1978,Simon1998,gautschi2004}.
In particular, Lanczos applied to $(\diag(\omega_\nu),(\sqrt{p_\nu})_\nu)$
reproduces $\Jop$ exactly using only the atoms and weights of $\mu_\alpha$.

\section{Recursion Method and Continued-Fraction Representation}
\label{app:recursion}

In this appendix,
we describe the recursion method touched on in Sec.~\ref{sec:oracle_free} in more detail,
and explain how it gives the autocorrelation $C_\alpha(t)$
as a continued fraction and how the construction of this paper is its inverse map.

The recursion method of condensed-matter physics~\cite{viswanath1994,parker2019}
starts from a given Hamiltonian $H$ and a seed operator $\sigma^\alpha_p$ and
generates the Jacobi coefficients $\{\alphaop_n,\betaop_n\}$
(with $\alphaop_n = 0$ here) by the Lanczos / Gram--Schmidt procedure
for the Liouvillian $\mathcal{L} = [H,\cdot\,]$.
These coefficients connect to the autocorrelation through the resolvent (Green function),
\begin{equation}
G(z)
= \bigl( \sigma^\alpha_p \, \big| \, (z-\mathcal{L})^{-1} \, \big| \, \sigma^\alpha_p \bigr)
= \int \frac{d\mu_\alpha(\omega)}{z-\omega}, \qquad \mathrm{Im}\,z
\neq 0,
\label{eq:resolvent}
\end{equation}
which is the Stieltjes transform of the spectral measure $\mu_\alpha$.
The function $G(z)$ corresponds to the one-sided Laplace transform of the autocorrelation
$C_\alpha(t) = \int e^{i\omega t}d\mu_\alpha(\omega)$.

The key point of the recursion method is that $G(z)$ is represented exactly as
a continued fraction through the Jacobi coefficients,
\begin{equation}
G(z)
= \cfrac{1}{\,z-\alphaop_0
- \cfrac{(\betaop_1)^2}{\,z-\alphaop_1
- \cfrac{(\betaop_2)^2}{\,z-\alphaop_2-\cdots}}} \, .
\label{eq:continued_fraction}
\end{equation}
In this paper the seed $\sigma^\alpha_p$ is Hermitian and the measure $\mu_\alpha$ is even,
so all diagonal coefficients vanish, $\alphaop_n = 0$,
and the continued fraction reduces to
\begin{equation}
G(z) = \cfrac{1}{\,z-\cfrac{(\betaop_1)^2}{\,z-\cfrac{(\betaop_2)^2}{\,z-\cdots}}} \, .
\label{eq:continued_fraction_even}
\end{equation}
Truncating at step $m$ gives a finite continued fraction,
which equals the resolvent $(\sigma^\alpha_p|(z-\Jop)^{-1}|\sigma^\alpha_p)$
of the finite Jacobi matrix $\Jop$, that is, the $(0,0)$ matrix element of $\Jop$.
The $n$-th level of the continued fraction therefore corresponds directly
to the hopping $\betaop_n$ at site $n$ of the Krylov chain.

The ordinary recursion method uses this relation in the \emph{forward} direction, starting from a known $H$.
It first obtains the Jacobi coefficients $\{\betaop_n\}$,
then builds the resolvent $G(z)$ as a continued fraction,
and from that recovers the autocorrelation $C_\alpha(t)$.
This efficiently resums the autocorrelation and spectral functions from a few Jacobi coefficients.
The higher the continued fraction is truncated,
the better $C_\alpha(t)$ is reproduced on the low-frequency (long-time) side.

The oracle-free construction of this paper runs this relation in the \emph{reverse} direction.
Starting from the measured $C_\alpha(t)$ of an unknown $H$,
it recovers the spectral measure $\mu_\alpha$ as its characteristic function (matrix pencil / Prony),
and obtains the Jacobi coefficients $\{\betaop_n\}$, hence $\Jop$,
by the Stieltjes procedure for $\mu_\alpha$
(equivalently, reading off the coefficients of the continued-fraction expansion~\eqref{eq:continued_fraction_even}).
The essential difference is that the forward recursion method requires $H$ as input,
whereas the reverse construction here takes only the measurable $C_\alpha(t)$ as input.
That the low levels of the continued fraction (the low-order $\betaop_n$) are determined
stably within a finite observation window is consistent with the claim in Sec.~\ref{sec:oracle_free}
that $\Jop$ depends only on the low-order moments of $\mu_\alpha$.

\section{Proof of Theorem~\ref{thm:oracle_free}}
\label{app:oracle_free_proof}

Theorem~\ref{thm:oracle_free} consists of three independent claims.
(I) the measure recovered from the measured time series equals the true spectral measure,
$\hat\mu_\alpha = \mu_\alpha$,
(II) the Stieltjes/Lanczos procedure applied to
that measure reproduces the oracle Jacobi matrix $\Jop$ exactly, $\hat\Jop = \Jop$, and
(III) the whole construction can be carried out with probe-only operations and uses no property of $H$.
We prove each claim in turn. For notation, write the spectral measure as
$\mu_\alpha = \sum_{\nu=1}^{s}p_\nu\delta_{\omega_\nu}$,
its support size as $s = |\mathrm{supp}\,\mu_\alpha|$
(since $\mu_\alpha$ is even the frequencies consist of $0$ and positive/negative pairs),
and the measured samples as $\hat c_k := C_\alpha(t_k)$, $t_k = k\Delta t$.

\paragraph{Claim (I), exact recovery of the measure.}
From Eq.~\eqref{eq:C_measure}, the samples are a sum of $s$ complex exponentials,
\begin{equation}
\hat c_k = \sum_{\nu=1}^{s} p_\nu\, z_\nu^{\,k}, \qquad
z_\nu := e^{i\omega_\nu \Delta t}.
\label{eq:prony_form}
\end{equation}
The Nyquist condition (a) $\Delta t<\pi/\omega_{\max}$ ensures $\omega_\nu\Delta t\in(-\pi,\pi)$,
so the map $\omega\mapsto e^{i\omega\Delta t}$ is injective
and the nodes $z_\nu$ are distinct points on the unit circle.
From $2s$ samples, forming the two Hankel matrices
\begin{equation}
H_0 = [\hat c_{i+j}]_{i,j=0}^{s-1}, \qquad
H_1 = [\hat c_{i+j+1}]_{i,j=0}^{s-1}
\label{eq:hankel_pencil}
\end{equation}
and substituting Eq.~\eqref{eq:prony_form} factorizes them as
$H_0 = V\,\mathrm{diag}(p_\nu)\,V^\mathrm{T}$ and
$H_1 = V\,\mathrm{diag}(p_\nu)\,\mathrm{diag}(z_\nu)\,V^\mathrm{T}$, where
$V = [z_\nu^{\,i}]_{i,\nu}$ is a Vandermonde matrix.
Since the nodes are distinct and the weights nonzero, $V$ is nonsingular,
hence so is $H_0$, and the matrix pencil
$H_1-z\,H_0 = V\,\mathrm{diag}(p_\nu)\,\mathrm{diag}(z_\nu-z)\,V^\mathrm{T}$
degenerates exactly at $z\in\{z_\nu\}$.
Therefore the eigenvalues of the generalized eigenvalue problem
\begin{equation}
H_1\, u = z\, H_0\, u
\label{eq:gen_eig}
\end{equation}
give precisely the node set $\{z_\nu\}$,
and $\omega_\nu=\arg(z_\nu)/\Delta t$ is determined uniquely (Prony's uniqueness theorem~\cite{hua1990}).
The weights are then determined uniquely by solving the Vandermonde linear system with the obtained nodes,
\begin{equation}
\sum_{\nu=1}^{s} p_\nu\, z_\nu^{\,k} = \hat c_k \quad (k = 0, \dots, 2s-1).
\label{eq:vandermonde_weights}
\end{equation}
Condition (b), $K+1\ge 2s$, guarantees the required $2s$ samples,
and the noiseless condition (c) introduces no error.
Hence
$\hat\mu_\alpha = \sum_\nu p_\nu\delta_{\omega_\nu} = \mu_\alpha$ exactly. \qed

\paragraph{Claim (II), exact reproduction of the Jacobi matrix.}
As shown in Appendix~\ref{app:measure_jacobi},
every inner product entering the Lanczos recursion reduces to an integral against the measure $\mu_\alpha$,
and the coefficients $\{\alphaop_n,\betaop_n\}$ are determined uniquely
by the three-term recurrence~\eqref{eq:three_term} of the orthogonal polynomials of $\mu_\alpha$.
Equivalently, these coefficients are functions of the moments of the measure,
\begin{equation}
m_\ell
= \int \omega^\ell\, d\mu_\alpha(\omega)
= \sum_\nu p_\nu\,\omega_\nu^{\,\ell}
= (-i)^\ell \frac{d^\ell}{dt^\ell}C_\alpha(t)\Big|_{t=0} \, ,
\label{eq:moment_match}
\end{equation}
and the Jacobi matrix up to step $m$ depends only on the first $2m$ moments $m_0,\dots,m_{2m-1}$
(Cholesky factorization of the moment Hankel matrix, equivalently the Stieltjes procedure).
Oracle Lanczos generates the same coefficients
by applying $\mathcal{L}=[H,\cdot\,]$ repeatedly to the seed $\sigma^\alpha_p$,
and its result also depends only on the same moments~\eqref{eq:moment_match} of the same measure $\mu_\alpha$.
By claim (I) $\hat\mu_\alpha = \mu_\alpha$, so the moments match completely,
and the Stieltjes procedure applied to $\hat\mu_\alpha$ gives exactly the same
coefficients as oracle Lanczos, hence $\hat\Jop = \Jop$ for all $m \le s$. \qed

\paragraph{Claim (III), probe-only measurability.}
The input to claims (I) and (II) is only the samples of the autocorrelation $C_\alpha(t)$.
That $C_\alpha(t)$ itself can be measured with the probe alone follows from
the unobserved part $\mathcal{H}_b$ being at infinite temperature.
The difference of the prepared states
$\rho_\pm = |\pm_\alpha\rangle\langle\pm_\alpha|_p\otimes\id_{2\ldots N}/2^{N-1}$ is
\begin{equation}
\rho_{+} - \rho_{-} = \sigma^\alpha_p \otimes \frac{\id_{2\ldots N}}{2^{N-1}},
\label{eq:rho_diff}
\end{equation}
proportional to the identity on the $\mathcal{H}_b$ part.
Hence, the difference of the two probe signals is
\begin{equation}
  \frac{1}{2} \left[ s_\alpha(t|\rho_+) - s_\alpha(t|\rho_-) \right]
= \frac{1}{2} \tr \left[ (\rho_+-\rho_-)\,\sigma^\alpha_p(t) \right]
= \frac{1}{2^{N}} \tr \left[ \sigma^\alpha_p\,\sigma^\alpha_p(t) \right]
= \frac{1}{D} \tr \left[ \sigma^\alpha_p(t)\sigma^\alpha_p \right]
= C_\alpha(t),
\label{eq:protocol_identity}
\end{equation}
where we factored out the normalization $1/2^{N-1}$
associated with the identity $\id_{2\ldots N}$
to obtain the overall prefactor $1/2^N = 1/D$.
Each $s_\alpha(t|\rho_\pm) = \tr [\rho_\pm\sigma^\alpha_p(t)]$ is obtained just by preparing $\rho_\pm$,
evolving freely under the unknown $H$, and measuring $\sigma^\alpha$ on the probe.
No matrix element, eigenvector, or commutator $[H,\cdot\,]$ of $H$ is used,
and claims (I) and (II) are both classical post-processing of the measured $\{\hat c_k\}$.
Hence the whole construction is oracle-free. \qed

The relaxation of condition (b) under a finite observation window is detailed in the main text
(Sec.~\ref{sec:sample_complexity_stat}, ``Practice with a finite observation window'').
In the implementation, we avoid the ill-conditioned route of building the Jacobi matrix from power moments,
and instead run the Stieltjes procedure (Lanczos)
directly on the recovered discrete measure,
which is stable~\cite{gautschi2004}.

\section{Details of the Oracle-Free Construction}
\label{app:construction}

Three points are needed to implement the oracle-free construction experimentally. We first show how $C_\alpha(t)$ can be measured using probe-only operations. We then describe the experimental pipeline and the spectral-estimation procedure.

\subsection{Classical realization of the infinite-temperature bath}

Under the infinite-temperature bath condition,
Eq.~\eqref{eq:two_state} expresses $C_\alpha(t)$ as the difference of two probe-only Bloch vectors,
each requiring only single-probe preparation, free evolution, and single-probe measurement.
This condition is physically natural in quantum sensing,
when the probe is a controllable spin defect coupled to
its nuclear-spin bath at the platform's ordinary operating temperature and no bath preparation is needed.

On programmable hardware the bath is initialized near a pure state,
and the maximally mixed bath is emulated by classical randomization.
Averaging over uniformly random computational-basis bitstrings
$|b\rangle\in\{0,1\}^{\otimes(N-1)}$ is an unbiased estimator of the infinite-temperature average,
since $\mathbb{E}_b\langle b|O|b\rangle = \tr [O]/2^{N-1}$ for any operator $O$.
Dynamical typicality~\cite{bartsch2009} gives exponential concentration for Haar-random bath states,
but that is a distinct ensemble from the computational-basis bitstrings
used here and is not invoked to assign their system-size variance.
On engineered hardware the randomization is a single layer of bit-flip gates on the bath register,
requiring no entangling gate, no feedback, and no knowledge of $H$.

\subsection{Oracle-free pipeline}

The experimental pipeline is as follows.
(1) For each time $t_k$ and sign $\pm$, place the bath at high temperature (or use a fresh random bitstring),
prepare the probe in $|\pm_\alpha\rangle$, evolve freely for $t_k$,
and measure $\sigma^\alpha$ on the probe for $M$ shots.
(2) Form
$\hat C_\alpha(t_k) = \tfrac{1}{2}[\hat s_\alpha(t_k|\rho_+)-\hat s_\alpha(t_k|\rho_-)]$.
(3) Fit $\hat C_\alpha(t_k) = q_0 + \sum_{\omega>0} q_\omega \cos(\omega t_k)$
by non-negative least squares on a frequency grid finer than the window resolution $2\pi/T_C$,
renormalize to $\sum_\nu \hat p_\nu = 1$, and symmetrize in $\pm\omega$.
This imposes $\alphaop_n = 0$ exactly.
(4) Apply the Stieltjes procedure to $(\diag(\hat\omega_\nu),(\sqrt{\hat p_\nu})_\nu)$ to obtain $\Jop$,
the coefficients $\{\betaop_n\}$, and $\tauop(m) = \sum_{n<m} 1/\betaop_n$.
Stages (3)--(4) are classical post-processing at a cost polynomial in the number of recovered poles, independent of $D$.

\section{Sample Complexity and the Proof of Theorem~\ref{thm:sample_complexity}}
\label{app:sample_complexity}

Theorem~\ref{thm:sample_complexity} establishes
the number of measurements required to learn the spectrum to precision $\varepsilon$.
The key quantity is the minimum spacing
\begin{equation}
q := \min_{\nu \neq \nu'} \, \bigl| \omega_\nu - \omega_{\nu'} \bigr|
\label{eq:q_def}
\end{equation}
of the frequencies to be resolved.
Two components separated by $q$ can be told apart
in an observation window $T$ only once $qT \gtrsim 1$.
When $q$ is $qT\lesssim1$,
the two exponentials
$e^{i\omega_\nu t}$ and $e^{i\omega_{\nu'}t}$ are nearly proportional over the whole window,
the linear system that maps the sampled time series to the frequencies is close to degenerate,
and a small error $\delta$ in the signal is amplified into a large error in the recovered frequency.
The amplification also grows as the smallest spectral weight $p_{\min}$ decreases,
since a faint line is harder to locate than a strong one.
Theorem~\ref{thm:sample_complexity} states that this amplification
is bounded by a polynomial in $1/q$ and $1/p_{\min}$,
so that it is enough to reduce $\delta$ by averaging.

Theorem~\ref{thm:sample_complexity} is proved by constructing an explicit estimator from the measured samples and bounding its shot cost.
The resulting bound is given in Eq.~\eqref{eq:explicit_exponents}.
The estimator differs from the matrix-pencil method used in Sec.~\ref{sec:numerics}, but is closely related to it (Remark~\ref{rem:pencil_relation}).
We analyze this estimator because all constants in the resulting bounds can be obtained explicitly.
The proof uses only Hoeffding's inequality and elementary calculus.
Filtering the data first reduces the multi-atom eigenvalue problem to a scalar quotient, so no subspace-perturbation or eigenvalue-perturbation theorem is required.

The argument has four steps.
\begin{itemize}
\item[(i)] \emph{Bounding signal error from finite shots.}
Determine the number of shots $M$ per sample time required to satisfy
$|\hat C_\alpha(t_k)-C_\alpha(t_k)| \le \delta$
simultaneously at all sample times.
\item[(ii)] \emph{Isolating individual spectral atoms.}
A filter that suppresses every atom but one, so that shifting the sample index
by a single step and taking a quotient returns that atom's frequency.
The separation $q$ is introduced in this step, yielding an accuracy of $O(\delta/p_{\min})$.
\item[(iii)] \emph{Estimation of spectral weights.}
With the frequencies in hand, the weights follow from a Vandermonde least-squares solve
whose conditioning is again controlled by $q$.
\item[(iv)] \emph{Total sample complexity.}
Choose $\delta$ and the observation window so that both errors are at most $\varepsilon$,
and read off the total shot count.
\end{itemize}
Steps (ii) and (iii) rest on three estimates for the filter and one for the Vandermonde matrix.
The required estimates are collected in Appendix~\ref{app:technical}.

\subsection{Construction of the estimator and proof of Theorem~\ref{thm:sample_complexity}}
\label{app:explicit}

Throughout this appendix the sampling interval is fixed to
\begin{equation}
\Delta t = \frac{\pi}{2 \, \omega_{\max}},
\label{eq:expl_dt}
\end{equation}
half the largest step allowed by condition (a) of Theorem~\ref{thm:oracle_free}.
The factor two of margin is what makes the trigonometric inequalities of
Appendix~\ref{app:technical} valid as stated.
We write the resolved measure assumed to generate the samples as
$\mu^{(q)}_\alpha = \sum_{\nu=1}^{s} p_\nu\delta_{\omega_\nu}$ with the atoms ordered,
$\omega_1 < \dots < \omega_s$, consecutive gaps at least $q$,
$|\omega_\nu| \le \omega_{\max}$, $\sum_\nu p_\nu = 1$, and $p_{\min}=\min_\nu p_\nu$,
and put $z_\nu = e^{i\omega_\nu\Delta t}$.
The measure is even, so $c_{-k}=c_k$ and negative sample indices cost no measurement.

\paragraph{Step (i), uniform signal accuracy from finite shots.}

\begin{lemma}[Shot-noise error bound]
\label{lem:hoeff_explicit}
With $M$ shots on each of $\rho_\pm$,
$P\bigl[ | \hat C_\alpha(t_k) - C_\alpha(t_k) | > \delta \bigr] \le 2 e^{-M\delta^2}$.
\end{lemma}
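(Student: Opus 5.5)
The plan is to write $\hat C_\alpha(t_k)$ as a difference of two independent empirical means of bounded variables and apply Hoeffding's inequality to the combined sum. For each preparation $\rho_\pm$ we take $M$ shots. Shot $r$ returns $x^\pm_r\in\{-1,+1\}$, the eigenvalue of $\sigma^\alpha$ on the probe, and these outcomes are i.i.d.\ with mean $s_\alpha(t_k|\rho_\pm)$. Under the bitstring route each shot uses a fresh uniformly random $b$, so the mean is still $s_\alpha(t_k|\rho_\pm)$, because $\mathbb{E}_b|b\rangle\langle b| = \id/2^{N-1}$. The estimator is then
\[
\hat C_\alpha(t_k) = \frac{1}{2M}\sum_{r=1}^{M} \bigl( x^+_r - x^-_r \bigr),
\]
and by Eq.~\eqref{eq:two_state} its expectation is exactly $C_\alpha(t_k)$.

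Next we view $\hat C_\alpha(t_k)$ as a sum of $2M$ independent terms, $x^+_r/(2M)$ and $-x^-_r/(2M)$. Each term lies in an interval of length $2/(2M)=1/M$. Hoeffding's inequality gives
\[
P\bigl[|\hat C_\alpha - C_\alpha|>\delta\bigr]\le 2\exp\!\Bigl(-\frac{2\delta^2}{\sum_i (b_i-a_i)^2}\Bigr) = 2\exp\!\Bigl(-\frac{2\delta^2}{2M\cdot M^{-2}}\Bigr)=2e^{-M\delta^2},
\]
which is exactly the claimed bound.

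The only point needing care is independence, so that the combined sum is legitimate. It requires fresh preparations for every shot and no shared randomness between the $\pm$ runs. If the bitstrings were instead fixed in $n_b$ batches, the shots would not be i.i.d. In that case we would condition on the bitstrings and apply the bound to the conditional means, which matches the variance decomposition stated in Sec.~\ref{sec:implementation}. We take the per-shot fresh preparation as the setting of the lemma, and then no real obstacle remains. A union bound over the $N_t+2$ sample times, used later for uniform accuracy, then follows directly.
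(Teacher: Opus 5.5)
Your proof is correct and is essentially the paper's own argument: write $\hat C_\alpha(t_k)$ as a sum of $2M$ independent terms of the form $\pm X_i/(2M)$, each ranging over an interval of length $1/M$, and apply Hoeffding's inequality to get $2\exp\bigl(-2\delta^2/(2M\cdot M^{-2})\bigr)=2e^{-M\delta^2}$. Your remark that this needs a fresh bath preparation (or a fresh bitstring) on every shot is a point the paper leaves implicit, since batched bitstrings would only give concentration around the bitstring-conditional mean.
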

\begin{proof}
Write the estimator as a single average over $2M$ independent bounded variables,
$\hat C_\alpha(t_k) = \frac{1}{2M} \sum_{i=1}^{M} X^+_i - \frac{1}{2M} \sum_{i=1}^{M} X^-_i$
with $X^\pm_i \in \{-1,+1\}$.
Each of the $2M$ summands ranges over an interval of length $1/M$,
so Hoeffding's inequality~\cite{Wainwright2019} gives
$2\exp\bigl(-2\delta^2/(2M\cdot M^{-2})\bigr) = 2 e^{-M\delta^2}$.
\end{proof}

\begin{lemma}[Uniform accuracy]
\label{lem:union_explicit}
With
\begin{equation}
M = \Bigl\lceil \delta^{-2} \ln\bigl(2(N_t+2)/\eta\bigr) \Bigr\rceil
\label{eq:expl_M}
\end{equation}
shots per preparation, $|\hat c_k-c_k| \le \delta$ holds simultaneously
for all $0 \le k \le N_t + 1$ with probability at least $1-\eta$,
at a total cost $N_{\mathrm{tot}} = 2(N_t+2)M$.
\end{lemma}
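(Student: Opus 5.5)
The plan is to combine Lemma~\ref{lem:hoeff_explicit} with a union bound over the finitely many sample times that the estimator actually uses. First I fix the index set. Because the measure is even, $c_{-k}=c_k$, so only nonnegative indices need to be measured. The estimator of Appendix~\ref{app:explicit} reads the samples $c_k$ for $0\le k\le N_t+1$; the extra index beyond $N_t$ comes from the one-step shift used in the quotient of step (ii). That gives $N_t+2$ distinct sample times.

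At each time $t_k$ we spend $M$ shots on $\rho_+$ and $M$ on $\rho_-$. The total cost is therefore $N_{\mathrm{tot}}=2(N_t+2)M$, which I record directly.

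Next I define the failure events $E_k=\{|\hat c_k-c_k|>\delta\}$. Lemma~\ref{lem:hoeff_explicit} gives $P[E_k]\le 2e^{-M\delta^2}$ for each $k$. The union bound then gives
\begin{equation}
P\Bigl[\bigcup_{k=0}^{N_t+1}E_k\Bigr]\le 2(N_t+2)\,e^{-M\delta^2}.
\end{equation}
This uses no independence across $k$, so it holds whether or not the shots at different times are correlated. To make the right-hand side at most $\eta$, it suffices that $M\delta^2\ge\ln\bigl(2(N_t+2)/\eta\bigr)$. The choice \eqref{eq:expl_M}, with the ceiling, satisfies this because $\lceil x\rceil\ge x$. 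Taking complements, all $N_t+2$ bounds $|\hat c_k-c_k|\le\delta$ hold simultaneously with probability at least $1-\eta$.

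There is no real obstacle in this argument. The one point that needs care is bookkeeping: the count $N_t+2$ must match the indices the later filter and quotient steps access, including $k=0$ and the shifted index $N_t+1$. If the filter construction also reads negative indices, I will point out that evenness of $\mu^{(q)}_\alpha$ lets those samples reuse the nonnegative ones, so they add no new events to the union bound and no extra shots.
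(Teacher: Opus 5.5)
Your proposal is correct and follows the paper's argument. You apply Lemma~\ref{lem:hoeff_explicit} at each of the $N_t+2$ sample times, take a union bound (which needs no independence across times) to get failure probability at most $2(N_t+2)e^{-M\delta^2}\le\eta$ for the stated $M$, and count two preparations of $M$ shots per time to obtain $N_{\mathrm{tot}}=2(N_t+2)M$; your index bookkeeping also matches the paper, with the shift $j=1$ in Eq.~\eqref{eq:expl_A} reaching index $N_t+1$ and negative indices reused through $c_{-k}=c_k$.
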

\begin{proof}
Union bound over the $N_t+2$ sample times, with two preparations at each.
\end{proof}

Equation~\eqref{eq:expl_M} already fixes the exponent of $\varepsilon$.
Since $\delta \propto \varepsilon$ below,
$M \propto \varepsilon^{-2}$,
and the rest of the proof only decides the prefactor and the number of sample times.

\paragraph{Step (ii), isolating a single spectral atom.}

For integers $\ell \ge 2$ and $m_F \ge 1$,
let $\mathcal{F}_\ell$ be the Fej\'er kernel normalized to unit height,
and let the filter $\Phi$ be its $m_F$-th power,
\begin{equation}
\textstyle
\mathcal{F}_\ell(\theta)
= \Bigl| \frac{1}{\ell} \sum_{n=0}^{\ell-1} e^{in\theta} \Bigr|^2
= \sum_{|j|<\ell} f_j e^{ij\theta}, \quad
f_j = \tfrac{1}{\ell} \Bigl( 1 - \tfrac{|j|}{\ell} \Bigr), \qquad
\Phi := \mathcal{F}_\ell^{\,m_F} = \sum_{|k|\le m_F(\ell-1)} \phi_k e^{ik\theta}.
\label{eq:expl_kernel}
\end{equation}
Since $\phi$ is the $m_F$-fold convolution of the non-negative $f$ with itself,
\begin{equation}
\phi_k \ge 0, \qquad
\sum_k \phi_k = 1, \qquad
\operatorname{supp} \phi \subseteq \{|k| \le m_F(\ell-1)\},
\label{eq:expl_phi}
\end{equation}
and these three facts are all that is used.
Apply the filter to the data at a trial frequency $w$,
\begin{equation}
A_j(w) := \sum_{|k|\le m_F(\ell-1)} \phi_k \, \hat c_{k+j} \, e^{-iwk\Delta t}, \qquad j = 0,1.
\label{eq:expl_A}
\end{equation}
Substituting $c_{k+j} = \sum_\nu p_\nu z_\nu^{k+j}$
and exchanging the two finite sums gives the exact identity
\begin{equation}
A_j(w) = \sum_{\nu}p_\nu z_\nu^{\,j}\,\Phi_\nu+\xi_j, \qquad
\Phi_\nu := \Phi\bigl((\omega_\nu-w)\Delta t\bigr), \qquad
|\xi_j| \le \delta \sum_k \phi_k = \delta.
\label{eq:expl_Aexact}
\end{equation}
Here $\Phi_\nu$ is real and non-negative because $\Phi = \mathcal{F}^{m_F}$ with $\mathcal{F} \ge 0$,
and the noise bound uses only $\phi_k \ge 0$ and $\sum_k \phi_k = 1$.
The filter is an average and therefore cannot amplify a uniform data error.
Equation~\eqref{eq:expl_Aexact} represents the core mechanism of the construction.
Shifting the sample index by one multiplies each exponential by its own $z_\nu$,
so as soon as the filter isolates a single spectral atom,
the quotient $A_1/A_0$ yields $z_\nu$.
This is the matrix pencil of Sec.~\ref{sec:oracle_free_construction}
restricted to a single atom,
thereby eliminating the need for eigenvalue perturbation theory.

The filter is effective when its peak aligns with the target atom
and its tail decays sufficiently for all other atoms.
Both are quantified in Appendix~\ref{app:technical}.
Lemma~\ref{lem:expl_peak} gives $\Phi_\nu \ge 2/3$ once $w$ is within
$1/(\sqrt{m_F}L)$ of $\omega_\nu$, where $L := \ell\Delta t$,
and Lemma~\ref{lem:expl_leak} gives, for $qL \ge 2\pi e$,
\begin{equation}
\mathrm{leak}_\nu := \sum_{\nu'\ne\nu} p_{\nu'} \Phi_{\nu'} \; \le \; e^{-2m_F}.
\label{eq:expl_leak_ref}
\end{equation}
The content of Eq.~\eqref{eq:expl_leak_ref} is that the leakage is exponentially small
in the kernel order $m_F$ at a record length $L = 2\pi e/q$
that does \emph{not} grow with the target accuracy.
Improving accuracy requires only a logarithmically growing $m_F$.
A single Fej\'er kernel ($m_F=1$) would instead give leakage $\sim(qL)^{-2}$,
forcing $L \propto \delta^{-1/2}$ and a sample complexity $\varepsilon^{-5/2}$.

\begin{proposition}[Error bound for frequency estimation]
\label{prop:expl_freq}
Assume $qL \ge 2\pi e$, $|w-\omega_\nu| \le 1/(\sqrt{m_F}L)$,
$m_F \ge \frac{1}{2}\ln(\frac{1}{\delta})$ (so that $\mathrm{leak}_\nu \le \delta$)
and $\delta \le p_{\min}/12$.
Put $\hat z_\nu := A_1(w)/A_0(w)$ and $\hat\omega_\nu := \arg\hat z_\nu/\Delta t$.
Then
\begin{equation}
|\, \hat z_\nu - z_\nu \,| \le \frac{8\,\delta}{p_{\min}}, \qquad
|\, \hat\omega_\nu - \omega_\nu \,| \le \frac{16\,\omega_{\max}\,\delta}{p_{\min}}.
\label{eq:expl_freqbound}
\end{equation}
\end{proposition}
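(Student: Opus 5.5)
The plan is to treat $A_0(w)$ and $A_1(w)$ as a dominant single-atom term plus small perturbations, and to bound the quotient directly. By Eq.~\eqref{eq:expl_Aexact}, split
\[
A_j(w) = p_\nu z_\nu^{\,j}\Phi_\nu + R_j, \qquad R_j := \sum_{\nu'\neq\nu} p_{\nu'} z_{\nu'}^{\,j}\Phi_{\nu'} + \xi_j .
\]
Since $|z_{\nu'}|=1$ and $\Phi_{\nu'}\ge 0$, the leakage term is bounded in modulus by $\mathrm{leak}_\nu$. Lemma~\ref{lem:expl_leak} applies because $qL\ge 2\pi e$, and combined with $m_F\ge\frac12\ln(1/\delta)$ it gives $\mathrm{leak}_\nu\le e^{-2m_F}\le\delta$. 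The noise term satisfies $|\xi_j|\le\delta$. Hence $|R_j|\le 2\delta$ for $j=0,1$.

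Next, lower-bound the main term. Lemma~\ref{lem:expl_peak}, together with $|w-\omega_\nu|\le 1/(\sqrt{m_F}L)$, gives $\Phi_\nu\ge 2/3$, so $a:=p_\nu\Phi_\nu\ge 2p_{\min}/3$. The assumption $\delta\le p_{\min}/12$ gives $|R_0|\le 2\delta\le p_{\min}/6\le a/4$, so $A_0\neq 0$ and the quotient is well defined. Write
\[
\hat z_\nu - z_\nu = \frac{a z_\nu + R_1}{a+R_0} - z_\nu = \frac{R_1 - z_\nu R_0}{a+R_0}.
\]
The numerator is at most $|R_1|+|R_0|\le 4\delta$. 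The denominator satisfies $|a+R_0|\ge a-|R_0|\ge \frac34 a\ge p_{\min}/2$. Therefore $|\hat z_\nu-z_\nu|\le 8\delta/p_{\min}$, which is the first bound.

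For the frequency, convert the error on the circle into an angle error. From the same estimates, $|\hat z_\nu-z_\nu|\le 8\delta/p_{\min}\le 2/3$, and $|\hat z_\nu|$ is bounded away from zero. The phase difference $\theta:=\arg\hat z_\nu-\arg z_\nu$ (taken mod $2\pi$) then obeys $|\theta|\le 2|\hat z_\nu - z_\nu|$. The cleanest route is via $\hat z_\nu/z_\nu = 1+u$ with $|u|\le 8\delta/p_{\min}\le 2/3$: then $|\arg(1+u)|\le \arcsin|u|\le \frac{\pi}{2}|u|\le 2|u|$. This is valid since $|u|<1$, so $1+u$ lies in the right half-plane.

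Branch issues must also be excluded. The choice $\Delta t=\pi/(2\omega_{\max})$ from Eq.~\eqref{eq:expl_dt} places $\omega_\nu\Delta t\in[-\pi/2,\pi/2]$. Adding an angle of size at most $2|u|\le 4/3<\pi/2$ keeps $\arg\hat z_\nu$ inside $(-\pi,\pi)$ without wraparound, so the principal argument matches. Dividing by $\Delta t$ gives
\[
|\hat\omega_\nu-\omega_\nu|\le \frac{2\cdot 8\delta/p_{\min}}{\Delta t} = \frac{32\,\omega_{\max}\delta}{\pi p_{\min}}\le\frac{16\,\omega_{\max}\delta}{p_{\min}}.
\]

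The main obstacle is the leakage bound, which is what lets the quotient isolate a single atom. That bound is imported from Lemma~\ref{lem:expl_leak}. Within this proposition the only delicate point is the angle-conversion step: it needs the $\arcsin$ estimate together with the Nyquist margin, so that the principal branch of $\arg$ is correct. Everything else is elementary perturbation of a scalar ratio.
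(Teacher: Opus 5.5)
Your proof is correct and follows the paper's argument. The decomposition $A_j = p_\nu z_\nu^{\,j}\Phi_\nu + R_j$, with $R_1 - z_\nu R_0 = A_1 - z_\nu A_0$, the bounds $|A_1 - z_\nu A_0|\le 4\delta$ and $|A_0|\ge p_{\min}/2$, and the resulting $|\hat z_\nu - z_\nu|\le 8\delta/p_{\min}$ are exactly the paper's steps (i)--(iii).

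The only difference is the angle conversion. The paper uses radial projection onto the unit circle and the chord--arc bound $2|\sin(x/2)|\ge 2|x|/\pi$, which gives $|\arg\hat z_\nu-\omega_\nu\Delta t|\le\pi|\hat z_\nu-z_\nu|$. Your $\arcsin$ bound on $\arg(1+u)$ is slightly sharper. It also uses the margin $\Delta t=\pi/(2\omega_{\max})$ to rule out branch wraparound, which the paper's chord--arc step tacitly assumes.
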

\begin{proof}
(i) By Eq.~\eqref{eq:expl_Aexact},
Lemma~\ref{lem:expl_peak} and Eq.~\eqref{eq:expl_leak_ref},
$|A_0| \ge p_\nu \Phi_\nu - \mathrm{leak}_\nu - \delta \ge \frac{2}{3} p_{\min} - 2\delta \ge \frac{p_{\min}}{2}$.
(ii) Multiplying the $j = 0$ identity by $z_\nu$ and subtracting it from the $j = 1$ one,
the $\nu$ term cancels identically, so that
$A_1 - z_\nu A_0 = \sum_{\nu' \ne \nu} p_{\nu'} (z_{\nu'} - z_\nu) \Phi_{\nu'} + (\xi_1 - z_\nu \xi_0)$.
Since $|z_{\nu'} - z_\nu| \le 2$ and $\Phi_{\nu'} \ge 0$,
$|A_1 - z_\nu A_0| \le 2\,\mathrm{leak}_\nu + 2 \delta \le 4 \delta$.
(iii) Dividing, $|\hat z_\nu - z_\nu| \le 4 \delta / (p_{\min}/2) = 8 \delta / p_{\min}$.
(iv) Let $\varphi = \arg\hat z_\nu$ and $\psi = \omega_\nu \Delta t$.
Radial projection onto the unit circle is the nearest-point map, so
$|e^{i\varphi} - \hat z_\nu| \le |z_\nu - \hat z_\nu|$ and hence
$|e^{i\varphi} - e^{i\psi}| \le 2|\hat z_\nu - z_\nu|$.
On the other hand $|e^{i\varphi} - e^{i\psi}| = 2|\sin\frac{\varphi-\psi}{2}| \ge \frac{2}{\pi}|\varphi-\psi|$.
Therefore $|\varphi-\psi| \le \pi |\hat z_\nu-z_\nu|$ and
$|\hat\omega_\nu-\omega_\nu| \le (\pi/\Delta t) |\hat z_\nu-z_\nu| = 2 \omega_{\max} |\hat z_\nu-z_\nu|$.
\end{proof}

\begin{remark}[Determination of the trial frequency]
\label{rem:expl_grid}
A grid search of $|A_0(w)|$ at spacing $1/(\sqrt{m_F}L)$ supplies $w$ and costs no additional measurements,
since it is classical post-processing of data already taken.
It is well posed.
Within $1/(\sqrt{m_F}L)$ of any atom
$|A_0| \ge \frac{2}{3} p_{\min}-2\delta \ge \frac{1}{2} p_{\min}$,
whereas at a point farther than $q/2$ from every atom the argument behind Eq.~\eqref{eq:expl_leak_ref}
applies to \emph{all} atoms and gives $|A_0| \le e^{-2m_F} + \delta \le 2 \delta \le \frac{1}{6} p_{\min}$.
Thresholding $|A_0|$ at $p_{\min}/3$ therefore separates the two cases and retains one local maximum per atom.
This constraint determines the constant $12$ in Proposition~\ref{prop:expl_freq}.
\end{remark}

\paragraph{Step (iii), estimation of the spectral weights.}

The filter alone does not deliver the weights to accuracy $\delta$,
because $A_0$ estimates $p_\nu\Phi_\nu$ and $\Phi_\nu\ge2/3$ is not close enough to $1$.
They are instead recovered together by least squares on the Vandermonde system built
from the frequencies just found, which is well conditioned once the window exceeds $O(1/q)$.
Proposition~\ref{prop:expl_vandermonde} of Appendix~\ref{app:technical} gives
$\|V^{+}\| \le \sqrt{2/(N_t + 1)}$ under the stated oversampling.

\begin{proposition}[Error bound for weight estimation]
\label{prop:expl_weights}
Let $\max_\nu|\hat\omega_\nu-\omega_\nu| \le \varepsilon_\omega \le q/4$,
let $\hat V$ be the Vandermonde matrix built from $\hat\omega$ with $N_t+1$ rows
satisfying Proposition~\ref{prop:expl_vandermonde} with $\kappa = q/2$, and set
$\hat p := \hat V^{+} \hat c$.
Then, with $T := N_t \Delta t$,
\begin{equation}
\| \hat p - p \|_2 \le \sqrt{2} \, \bigl( \delta + T \varepsilon_\omega \bigr).
\label{eq:expl_weightbound}
\end{equation}
\end{proposition}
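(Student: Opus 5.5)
We split the data vector into the model evaluated at the estimated nodes plus two error terms, and use the left-inverse bound of Proposition~\ref{prop:expl_vandermonde} on each. Let $V$ be the exact Vandermonde matrix with entries $V_{k\nu}=z_\nu^{\,k}$, $k=0,\dots,N_t$, and let $c=Vp$ be the noiseless sample vector. Then
\begin{equation}
\hat c = \hat V p + (V-\hat V)p + (\hat c - c).
\end{equation}
Proposition~\ref{prop:expl_vandermonde} is to be applied to $\hat V$ with $\kappa=q/2$. Its hypothesis should hold because $\varepsilon_\omega\le q/4$ gives $|\hat\omega_\nu-\hat\omega_{\nu'}|\ge q-2\varepsilon_\omega\ge q/2$. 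So $\hat V$ has full column rank, $\hat V^{+}\hat V=\id$, and $\|\hat V^{+}\|\le\sqrt{2/(N_t+1)}$. Applying $\hat V^{+}$ to the decomposition gives
\begin{equation}
\hat p - p = \hat V^{+}(V-\hat V)p + \hat V^{+}(\hat c-c).
\end{equation}

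The noise term is routine. By Lemma~\ref{lem:union_explicit}, $\|\hat c-c\|_2\le\sqrt{N_t+1}\,\delta$, so the second term has norm at most $\sqrt2\,\delta$.

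For the node-mismatch term, each entry satisfies $|z_\nu^{\,k}-\hat z_\nu^{\,k}|=|e^{ik\omega_\nu\Delta t}-e^{ik\hat\omega_\nu\Delta t}|\le k\Delta t\,\varepsilon_\omega\le T\varepsilon_\omega$. Hence each row of $(V-\hat V)p$ is bounded by $T\varepsilon_\omega\sum_\nu p_\nu=T\varepsilon_\omega$, using $p_\nu\ge0$ and $\sum_\nu p_\nu=1$. It follows that $\|(V-\hat V)p\|_2\le\sqrt{N_t+1}\,T\varepsilon_\omega$, and multiplying by $\|\hat V^{+}\|$ gives $\sqrt2\,T\varepsilon_\omega$. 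Adding the two terms yields Eq.~\eqref{eq:expl_weightbound}.

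The main obstacle is making sure Proposition~\ref{prop:expl_vandermonde} really applies to the perturbed nodes. Its oversampling condition, that $N_t\Delta t$ is large compared with $1/\kappa$, must be stated for the separation $q/2$ of the $\hat\omega_\nu$, not $q$ of the $\omega_\nu$. The $\hat\omega_\nu$ must also stay within the band where the $\hat z_\nu$ are distinct on the unit circle. This holds since $\Delta t=\pi/(2\omega_{\max})$ and $\varepsilon_\omega\le q/4$ leave margin. If the proposition only bounds $\|V^{+}\|$ for the exact nodes, I would rerun its Gram-matrix diagonal-dominance argument with $\kappa=q/2$. A smaller point is that the $\ell_1$ row bound $\sum_\nu p_\nu=1$ is what avoids an extra $\sqrt{s}$ factor; using $\|p\|_2$ with an operator-norm bound on $V-\hat V$ instead would lose that.
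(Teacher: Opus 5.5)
Your proof is correct and follows the paper's argument essentially line by line: the same decomposition $\hat p - p = \hat V^{+}\bigl[(V-\hat V)p + (\hat c - c)\bigr]$, the left-inverse bound of Proposition~\ref{prop:expl_vandermonde} at $\kappa = q/2$ for the estimated nodes, and the entrywise bound $|z_\nu^{\,k}-\hat z_\nu^{\,k}| \le k\Delta t\,\varepsilon_\omega \le T\varepsilon_\omega$ combined with $\sum_\nu p_\nu = 1$. The applicability worry you flag at the end is already discharged by the statement itself, which assumes that $\hat V$ satisfies Proposition~\ref{prop:expl_vandermonde} with $\kappa = q/2$.
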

\begin{proof}
The estimated nodes inherit consecutive gaps at least $q-2\varepsilon_\omega \ge q/2$,
so $\|\hat V^{+}\| \le \sqrt{2/(N_t+1)}$.
Writing $\hat c = V p + \xi$ with $\|\xi\|_2 \le \delta\sqrt{N_t+1}$ and using
$\hat V^{+} \hat V = I$, one has $\hat p - p = \hat V^{+} [ (V - \hat V) p + \xi]$.
For unit-modulus numbers $| \zeta^k - \hat\zeta^k | \le k| \arg\zeta - \arg\hat\zeta |$,
so each component of $(V - \hat V) p$ is at most $T \varepsilon_\omega$ and
$\|(V - \hat V) p\|_2 \le \sqrt{N_t+1} \, T \varepsilon_\omega$.
\end{proof}

The term $T\varepsilon_\omega$ is intrinsic to the problem.
A node misplaced by $\varepsilon_\omega$
accumulates a phase error $T\varepsilon_\omega$ across the record,
which an amplitude fit cannot distinguish from a change of weight.
Consequently, estimating the weights requires
two more powers of $1/q$ compared to the frequencies.

\paragraph{Step (iv), combination of bounds and total complexity.}

Collecting the hypotheses of Proposition~\ref{prop:expl_freq},
Eq.~\eqref{eq:expl_leak_ref}, and Proposition~\ref{prop:expl_weights},
\begin{equation}
\Delta t = \frac{\pi}{2\omega_{\max}}, \quad
L = \frac{2\pi e}{q}, \quad
\ell = \left\lceil \frac{L}{\Delta t} \right\rceil = \left\lceil \frac{4e\,\omega_{\max}}{q} \right\rceil, \quad
m_F = \left\lceil \frac{1}{2} \ln\left(\frac{1}{\delta}\right) \right\rceil,
\label{eq:expl_params}
\end{equation}
and the number of sampled times must satisfy both the filter support $N_t\ge m_F(\ell-1)$
and Proposition~\ref{prop:expl_vandermonde} with $\kappa = q/2$.
Both hold for
\begin{equation}
N_t = \left\lceil \frac{2 \pi \bigl( e \, m_F + 4 (1 + \ln(s)) \bigr)}{q \, \Delta t} \right\rceil
\le \frac{4 \, \Gamma \, \omega_{\max}}{q}, \qquad
\Gamma := e \, m_F + 4 (1 + \ln(s)) + 1,
\label{eq:expl_Nt}
\end{equation}
so that $T = N_t\Delta t \le 2\pi\Gamma/q$.
Requiring the frequency error of Eq.~\eqref{eq:expl_freqbound} to be at most
$\varepsilon$ gives $\delta \le p_{\min} \varepsilon / (16\,\omega_{\max})$.
Substituting Eq.~\eqref{eq:expl_freqbound} and $T \le 2\pi\Gamma/q$ into
Eq.~\eqref{eq:expl_weightbound},
\begin{equation}
\| \hat p - p \|_2
\le \sqrt{2} \, \delta \Bigl( 1 + \frac{32\pi\Gamma\omega_{\max}}{q\,p_{\min}} \Bigr)
\le \frac{147\,\Gamma\,\omega_{\max}}{q\,p_{\min}}\,\delta,
\end{equation}
absorbing the $1$ using $q\,p_{\min}\le\pi\Gamma\omega_{\max}$,
so that the weights require
$\delta \le q\,p_{\min}\varepsilon/(147\,\Gamma\,\omega_{\max})$.
Since $\Gamma>7$ and $q\le2\omega_{\max}$ the weight condition is always the binding one,
by a factor $147\Gamma/(16q) \gtrsim 64/q$.

\begin{theorem}[Theorem~\ref{thm:sample_complexity} with explicit constants]
\label{thm:explicit}
Let $\mu^{(q)}_\alpha$ be as above and $\varepsilon,\eta\in(0,1)$.
Choose
\begin{equation}
\delta = \min\Bigl\{ \frac{q\,p_{\min}\,\varepsilon}{147\,\Gamma\,\omega_{\max}}, \, \frac{p_{\min}}{12} \Bigr\},
\label{eq:expl_delta}
\end{equation}
and $\Delta t, L, \ell, m_F, N_t, M$
by Eqs.~\eqref{eq:expl_params}, \eqref{eq:expl_Nt}, and \eqref{eq:expl_M},
with $\Gamma$ fixed self-consistently.
Then the estimator
{\rm (a)} grid search of $|A_0(w)|$ at spacing $1/(\sqrt{m_F}L)$, retaining the local maxima above $p_{\min}/3$,
{\rm (b)} $\hat\omega_\nu = \arg[A_1/A_0]/\Delta t$ at each retained point, and
{\rm (c)} $\hat p = \hat V^{+}\hat c$,
returns, with probability at least $1-\eta$,
$\max_\nu|\hat\omega_\nu-\omega_\nu| \le \varepsilon$ and $\|\hat p-p\|_2 \le \varepsilon$,
using a total number of shots given by Eq.~\eqref{eq:explicit_exponents}.
In particular, Eq.~\eqref{eq:sample_complexity} holds.
\end{theorem}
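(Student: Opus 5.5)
The proof assembles the four steps already established, in order. I work on the event $\mathcal{E}$ supplied by Lemma~\ref{lem:union_explicit}: with $M$ chosen as in Eq.~\eqref{eq:expl_M}, this event has probability at least $1-\eta$. On $\mathcal{E}$ one has $|\hat c_k-c_k|\le\delta$ for all $0\le k\le N_t+1$. The evenness $c_{-k}=c_k$ extends this bound to negative indices at no extra cost. From here on everything is deterministic.

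First I check that the parameter choices meet every hypothesis used downstream.
\begin{itemize}
\item With $L=2\pi e/q$ we get $qL\ge 2\pi e$ (taking $\ell\Delta t\ge L$ from the ceiling), so Lemma~\ref{lem:expl_leak} applies.
\item $m_F=\lceil\frac12\ln(1/\delta)\rceil$ gives $e^{-2m_F}\le\delta$.
\item The choice \eqref{eq:expl_delta} gives $\delta\le p_{\min}/12$.
\item $N_t$ from \eqref{eq:expl_Nt} covers the filter support $m_F(\ell-1)$ and meets the oversampling requirement of Proposition~\ref{prop:expl_vandermonde} with $\kappa=q/2$.
\end{itemize}
Since $\delta$ enters $m_F$ and hence $\Gamma$, I fix the pair $(\delta,\Gamma)$ self-consistently. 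A fixed point exists because $\Gamma$ grows only logarithmically in $1/\delta$ while the right side of \eqref{eq:expl_delta} decreases in $\Gamma$. Iterating from $\Gamma=e+4(1+\ln s)+1$ converges monotonically, and any overestimate of $\Gamma$ only strengthens the inequalities.

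Next come the grid search and frequency recovery, which I expect to be the main obstacle. Remark~\ref{rem:expl_grid} shows the following. Within $1/(\sqrt{m_F}L)$ of an atom, $|A_0|\ge p_{\min}/2$. Farther than $q/2$ from all atoms, $|A_0|\le p_{\min}/6$. The grid spacing $1/(\sqrt{m_F}L)$ guarantees at least one grid point near each atom. What still needs an argument is that thresholding at $p_{\min}/3$ and taking local maxima yields \emph{exactly one} retained point per atom, and that this point lies within $1/(\sqrt{m_F}L)$ of its atom.
\begin{itemize}
\item First I would try selecting, within each connected component of the super-threshold set, the grid point of maximal $|A_0|$.
\item I would show that each component lies inside a $q/2$-ball of a unique atom, using $1/(\sqrt{m_F}L)<q/2$.
\item Near the atom, $|A_0|$ is controlled by $p_\nu\Phi_\nu\pm2\delta$, with $\Phi_\nu$ unimodal in $|w-\omega_\nu|$.
\item If the argmax lands farther than $1/(\sqrt{m_F}L)$, I would enlarge the peak lemma's radius slightly. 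Alternatively, I would evaluate $A_1/A_0$ at the grid point nearest the argmax, where Lemma~\ref{lem:expl_peak} applies directly.
\end{itemize}
Given a good trial point, Proposition~\ref{prop:expl_freq} yields $|\hat\omega_\nu-\omega_\nu|\le16\omega_{\max}\delta/p_{\min}\le\varepsilon$. This also gives $\varepsilon_\omega\le q/4$, since $\delta$ carries the extra factor $q/(147\Gamma\omega_{\max})$.

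For the weights, Proposition~\ref{prop:expl_weights} with $T\le2\pi\Gamma/q$ gives
\[
\|\hat p-p\|_2\le\frac{147\,\Gamma\,\omega_{\max}}{q\,p_{\min}}\,\delta\le\varepsilon
\]
by the first branch of \eqref{eq:expl_delta}.

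Finally, for the shot count, $N_{\mathrm{tot}}=2(N_t+2)M$ with $N_t\le4\Gamma\omega_{\max}/q$ and $M\le\delta^{-2}\ln(2(N_t+2)/\eta)+1$. Substituting $\delta^{-2}=(147\Gamma\omega_{\max})^2/(q p_{\min}\varepsilon)^2$ in the binding regime gives a leading term $\approx 2\cdot4\cdot147^2\,\Gamma^3\omega_{\max}^3/(q^3p_{\min}^2\varepsilon^2)$ times the logarithm. The numerical constant is about $1.73\times10^5$, which I round up to $1.8\times10^5$ to absorb the $+2$ and ceiling terms; this gives \eqref{eq:explicit_exponents}. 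In the other regime of the minimum in \eqref{eq:expl_delta} (the branch $\delta=p_{\min}/12$), the same bound holds a fortiori. Because $\Gamma$ is polylogarithmic in $s$ and in $1/\delta$, and hence in the parameters, this shows \eqref{eq:sample_complexity}.
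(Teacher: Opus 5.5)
Your proposal is correct and follows the paper's proof step for step. You condition on the event of Lemma~\ref{lem:union_explicit}, check the hypotheses of Lemma~\ref{lem:expl_leak} and Propositions~\ref{prop:expl_freq}--\ref{prop:expl_weights}, and bound $2(N_t+2)M$ via $8\cdot147^2\le1.8\times10^5$. For the grid step you flag, the paper simply cites Remark~\ref{rem:expl_grid}. Your requirement that the retained point lie within $1/(\sqrt{m_F}L)$ of its atom is unnecessary. Proposition~\ref{prop:expl_freq} only uses $|A_0(w)|\ge p_{\min}/2$, which your component-wise argmax inherits from the grid point nearest the atom, together with the leakage bound, whose proof works for any $w$ within $q/2$ of $\omega_\nu$.

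One slip concerns the branch $\delta=p_{\min}/12$. If it were active, $\delta$ would be smaller and the shot count larger, so the bound does not follow \emph{a fortiori}. Rather, that branch never occurs, since $q\le2\omega_{\max}$, $\varepsilon<1$ and $\Gamma>7$ give $q\,p_{\min}\varepsilon/(147\,\Gamma\,\omega_{\max})<p_{\min}/12$.
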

\begin{proof}
Condition on the event of Lemma~\ref{lem:union_explicit}.
The choice~\eqref{eq:expl_params} gives $qL = 2\pi e$,
so Eq.~\eqref{eq:expl_leak_ref} applies,
and $m_F\ge\frac12\ln\frac1\delta$ gives
$\mathrm{leak}_\nu \le e^{-2m_F} \le \delta$,
and $\delta \le p_{\min}/12$ is imposed in Eq.~\eqref{eq:expl_delta}.
Remark~\ref{rem:expl_grid} guarantees that
step (a) returns exactly one trial point per atom and no others.
Proposition~\ref{prop:expl_freq} then bounds the frequency error by $\varepsilon$.
The hypothesis $\varepsilon_\omega \le q/4$ of Proposition~\ref{prop:expl_weights} is automatic,
since Eqs.~\eqref{eq:expl_freqbound} and~\eqref{eq:expl_delta} give
$\varepsilon_\omega \le 16 q\varepsilon/(147\Gamma) \le q\varepsilon/(9\Gamma)$
with $\Gamma > 7$ and $\varepsilon < 1$.
Proposition~\ref{prop:expl_weights} then bounds the weight error by $\varepsilon$.
Finally, by Lemma~\ref{lem:union_explicit} and Eqs.~\eqref{eq:expl_Nt}, \eqref{eq:expl_delta},
\begin{equation}
N_{\mathrm{tot}} = 2 (N_t + 2) M \le
2 \cdot \frac{4 \, \Gamma \, \omega_{\max}}{q} \cdot
\frac{147^2 \, \Gamma^2 \, \omega_{\max}^2}{q^2 \, p_{\min}^2 \, \varepsilon^2} \cdot
\ln\left( \frac{2(N_t+2)}{\eta} \right),
\end{equation}
and $8 \times 147^2 = 172{,}872 \le 1.8 \times 10^5$.
\end{proof}

\begin{remark}[Frequencies alone require fewer measurements]
\label{rem:expl_freqonly}
If only the Bohr frequencies are required to accuracy $\varepsilon$,
the weight step is dropped and Eq.~\eqref{eq:expl_freqbound} alone fixes $\delta$,
so that \\
$\delta = \min \{ p_{\min} \varepsilon / (16 \, \omega_{\max}), \, p_{\min} / 12 \}$
and the same count $N_{\mathrm{tot}} = 2(N_t+2)M$ reads
\begin{equation}
N_{\mathrm{tot}}
\le 2 \cdot \underbrace{\frac{4 \Gamma \omega_{\max}}{q}}_{N_t+2} \cdot
\underbrace{\frac{16^2 \omega_{\max}^2}{p_{\min}^2 \varepsilon^2} \cdot
\ln\left( \frac{2(N_t+2)}{\eta} \right)}_{M}
= \frac{2048 \, \Gamma \, \omega_{\max}^{3}}{q \, p_{\min}^{2} \, \varepsilon^{2}} \cdot
\ln\left( \frac{2(N_t+2)}{\eta} \right),
\label{eq:expl_freqonly}
\end{equation}
that is, $q^{-1}$ in place of $q^{-3}$ and $\Gamma$ in place of $\Gamma^3$.
\end{remark}

\begin{remark}[Relation to the matrix-pencil implementation]
\label{rem:pencil_relation}
The numerics of Sec.~\ref{sec:numerics} use the matrix pencil
and a band-limited non-negative fit rather than the filter analyzed above,
and Fig.~\ref{fig:estimator} compares them.
The same polynomial form, without explicit constants,
can be obtained for the matrix pencil along the standard approach of spectral estimation,
and we record the result below,
as it represents the standard theoretical expectation.
There the relevant matrix is the Hankel matrix $H_0 = [\hat c_{i+j}]$,
which factorizes as $V\diag(p)V^{\mathrm T}$ with $V$ of size $N_t \times 2r$ and therefore has rank $2r$.
Its smallest \emph{nonzero} singular value obeys
$\sigma_{2r}(H_0) \ge p_{\min} \sigma_{\min}(V)^2$,
with $\sigma_{\min}(V)$ bounded below in terms of $q$
by Moitra's super-resolution estimate~\cite{Moitra2015}
(which removes the $\ln(s)$ of Proposition~\ref{prop:expl_vandermonde}
at the cost of a non-elementary argument).
Wedin's $\sin\Theta$ theorem~\cite{wedin1972} then bounds the rotation of the signal subspace
extracted from the noisy Hankel matrix by $2\|E\|_2/\sigma_{2r}(H_0)$,
and Bauer--Fike~\cite{golub2009} converts that into a displacement of the pencil eigenvalues,
with the Bauer--Fike constant equal to the condition number of the Vandermonde factor,
$\kappa(V) \le \sigma_{\max}(V)/\sigma_{\min}(V) \le 2 \sqrt r$.
Composing the three gives $\max_\nu|\hat z_\nu-z_\nu| = O(\sqrt r\,\delta/p_{\min})$
and hence Eq.~\eqref{eq:sample_complexity} again.
We do not employ this approach for the rigorous bound
because explicit constants for these three constituent theorems
are not typically provided in the literature,
preventing the derivation of the closed-form bound in Eq.~\eqref{eq:explicit_exponents}.
\end{remark}

\subsection{Optimality of the scaling and genericity of the gap}
\label{app:optimal_scaling}

We show that the $\varepsilon$-scaling obtained above is information-theoretically optimal
for a single frequency, hence also as a lower bound for the joint problem,
and that the worst-case spectral degeneracy is non-generic.

\begin{proposition}[Optimality of the $1/\varepsilon^2$ scaling at fixed evolution time]
\label{prop:optimal}
Any estimator that recovers a single Bohr frequency to precision $\varepsilon$
with success probability $\geq2/3$ from probe measurements
with evolution times bounded by a fixed $t_{\max}$ requires
$N_{\mathrm{tot}} = \Omega(1/(p^2t_{\max}^2\varepsilon^2))$ shots,
where $p$ is the spectral weight of that frequency.
If instead the total evolution time is the resource and $t_{\max}$ may grow
as $1/\varepsilon$, Heisenberg-limited protocols achieve total-time scaling
$O(1/\varepsilon)$~\cite{huang2023}.
The statement above is optimality of the shot count
in the fixed-window setting relevant to our protocol.
\end{proposition}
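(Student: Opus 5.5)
The plan is a two-point (Le Cam) lower bound combined with the Fisher-information / chi-square bound for Bernoulli outcomes. Everything reduces to how fast the outcome distributions can separate as the frequency changes.

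\emph{Construct the two hypotheses.} Fix a spectral measure containing an atom of weight $p$ at frequency $\omega$ (together with its mirror image at $-\omega$, so the measure stays even). Form a second measure identical except that this pair sits at $\omega' = \omega + 3\varepsilon$. Both measures are realized by admissible Hamiltonians. The simplest choice is a single probe spin with an effective term whose matrix elements give $p$ and the rest of the weight at $\omega=0$; any completion of the bath works.

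\emph{Bound the per-shot distinguishability.} An estimator with precision $\varepsilon$ and success probability $\ge 2/3$ must distinguish the two hypotheses with error at most $1/3$. By Le Cam, this forces the total variation between the joint outcome laws to be at least $1/3$. Each shot at evolution time $t\le t_{\max}$ yields a $\pm1$ outcome with mean $\tfrac12$ plus or minus $\tfrac12 C_\alpha(t)$, or more generally $s_\alpha(t|\rho)$ for any probe preparation. The two hypotheses differ in that mean by
\[
\bigl|2p\bigl(\cos\omega t-\cos\omega' t\bigr)\bigr|\le 2p\,|t|\,|\omega-\omega'|\le 6p\,t_{\max}\varepsilon .
\]
The same $O(p\,t_{\max}\varepsilon)$ bound holds for arbitrary probe states, since the weight change enters only through the $\omega$-component of the signal.

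\emph{Sum over shots.} For Bernoulli laws whose means differ by $\Delta$, the KL divergence, or the squared Hellinger distance, is $O(\Delta^2)$ provided the means stay bounded away from $0$ and $1$. This is the main obstacle. I would handle it by choosing the hypothetical pair so that $|C_\alpha|\le 1/2$ at the times used, or by switching to the Hellinger distance together with a variance bound, since $\Delta^2/(\text{variance})$ remains controlled. KL is additive over independent shots, and adaptivity is handled by the chain rule. After $N_{\mathrm{tot}}$ shots the total KL divergence is therefore $O(N_{\mathrm{tot}}\,p^2 t_{\max}^2\varepsilon^2)$. Pinsker's inequality then requires this to be $\Omega(1)$, which gives
\[
N_{\mathrm{tot}}=\Omega\!\bigl(1/(p^2t_{\max}^2\varepsilon^2)\bigr).
\]

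\emph{The Heisenberg-limited remark.} The second sentence of the statement is not proved here but cited from \cite{huang2023}. It simply records that the bound above uses a fixed $t_{\max}$; if $t_{\max}\sim1/\varepsilon$ is allowed, the bound degenerates to $O(1/p^2)$ shots, consistent with total-time scaling of order $1/\varepsilon$.
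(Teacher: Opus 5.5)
Your outline (Le Cam's two-point reduction, a per-shot divergence summed with the chain rule, then Pinsker) is exactly the paper's route. The paper compresses it into one assertion: autocorrelations differing by at most $p\,t_{\max}\varepsilon$ give total KL divergence $O(N_{\mathrm{tot}}(p\,t_{\max}\varepsilon)^2)$. Your even, normalized pair with separation $3\varepsilon$ is more careful than the paper's $p\delta_\omega$ versus $p\delta_{\omega+\varepsilon}$.

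The gap is the step you flag yourself, and for the instance you propose it does not just need care: it fails. With the remaining weight $1-2p$ at $\omega=0$ you have $C_\alpha(t)=1-2p(1-\cos\omega t)$. A shot from $\rho_+$ therefore returns $-1$ with probability $1-\theta=p(1-\cos\omega t)\le 2p$ at \emph{every} time. The outcome probabilities of the two hypotheses differ by $\Delta=p\,|\cos\omega t-\cos\omega' t|$. The per-shot divergence is governed by $\Delta^2/[\theta(1-\theta)]$, which for $|\omega'-\omega|\,t\ll 1$ is of order $p\,t^2\varepsilon^2(1+\cos\omega t)$: one power of $p$, not two. Equivalently, the single-shot Fisher information is $p\,t^2(1+\cos\omega t)/[1-p(1-\cos\omega t)]$, of order $p\,t^2$ rather than $p^2t^2$. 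So this instance can certify at most $\Omega(1/(p\,t_{\max}^2\varepsilon^2))$, and a maximum-likelihood estimator using times near $t_{\max}$ attains that rate asymptotically.

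Neither of your remedies closes the gap. You cannot impose $|C_\alpha|\le 1/2$ ``at the times used,'' because the estimator chooses the times, possibly adaptively, and $C_\alpha(0)=1$. The Hellinger-plus-variance ratio is precisely the quantity just computed.

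The repair lies in the choice of hypotheses, not of divergence. Place the remaining weight $1-2p$ in a broad background common to both hypotheses, for example a finely discretized band of width $\gtrsim 1/t_{\max}$ whose revival time exceeds $t_{\max}$. This gives $1+C_\alpha(t)\ge c$ and $1-C_\alpha(t)\ge c\,\min\{1,t^2/t_{\max}^2\}$ for all $0\le t\le t_{\max}$ under both hypotheses. Combined with $\Delta\le p\,t\,|\omega'-\omega|$, this yields $\chi^2=O(p^2t_{\max}^2\varepsilon^2)$ per shot, uniformly in the chosen $t$. Your chain-rule and Pinsker steps then give $N_{\mathrm{tot}}=\Omega(1/(p^2t_{\max}^2\varepsilon^2))$.

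The same caveat applies to your extension to arbitrary probe preparations. Those signals have residues not fixed by $\mu_\alpha$, so ``any completion of the bath'' does not suffice. You must specify the two Hamiltonians, e.g.\ block diagonal with the $\omega$-dependent term confined to a bath subspace of relative dimension $2p$. You must also verify that no outcome probability is small where the hypotheses differ, since an $O(p\,t_{\max}\varepsilon)$ bound on mean differences alone does not bound the KL divergence. The paper's proof asserts the per-shot bound without addressing this, so it does not supply the missing step either.
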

\begin{proof}
By Le~Cam's two-point method (see, e.g., Ref.~\cite{Tsybakov2009}),
the autocorrelations of $\mu = p\delta_\omega$ and
$\mu' = p\delta_{\omega+\varepsilon}$ differ by at most $p\,t_{\max}\varepsilon$
for any $t \leq t_{\max}$, giving a total Kullback--Leibler divergence
$O(N_{\mathrm{tot}}(p\,t_{\max}\varepsilon)^2)$ over $N_{\mathrm{tot}}$ shots.
Reliable discrimination requires this to be $\Omega(1)$.
\end{proof}

Proposition~\ref{prop:optimal} is a single-frequency statement.
Since recovering the whole measure includes recovering any one of its atoms,
the same $\varepsilon^{-2}$ lower bound applies to the joint problem,
and Theorem~\ref{thm:explicit} attains it.
We do not claim optimality in $q$, $p_{\min}$, or $r$.

\begin{remark}[Genericity of the minimum gap]
\label{remark:smoothing}
Exact frequency degeneracy $q = 0$ is non-generic.
For $H + \gamma G$ with $G$ a Gaussian Unitary Ensemble (GUE) matrix and any
fixed $\gamma > 0$, eigenvalue repulsion (see, e.g., Ref.~\cite{Mehta2004})
gives, with high probability, a minimum Bohr-frequency spacing
$q \ge \gamma/\mathrm{poly}(D)$.
Two limitations of this statement should be noted.
First, it concerns the genericity of the spectrum of the given Hamiltonian.
Perturbing $H$ would change the object being learned.
Second, since $D=2^N$, a spacing of order $1/\mathrm{poly}(D)$ is still exponentially small in $N$.
Genericity therefore keeps the bound~\eqref{eq:sample_complexity} finite but does not make the full-resolution problem tractable.
This is why Theorem~\ref{thm:sample_complexity} is formulated for the resolved measure $\mu^{(q)}_\alpha$.
\end{remark}

\subsection{Technical estimates}
\label{app:technical}

This appendix collects four elementary estimates used in Appendix~\ref{app:explicit}.
Three concern the filter $\Phi=\mathcal{F}_\ell^{m_F}$ of Eq.~\eqref{eq:expl_kernel}, and one concerns the Vandermonde matrix.
The proofs use only the concavity of $\sin$ on $[0,\pi/2]$, the bound $\cos x \ge 1-x^2/2$, and the inequality $\sum_{j \le K} 1/j \le 1+\ln(K)$.

\begin{lemma}[Lower bound at the filter peak]
\label{lem:expl_peak}
If $\ell|\theta| \le 1/\sqrt{m_F}$ then
$\Phi(\theta) \ge \bigl(1-\tfrac{1}{6m_F}\bigr)^{2m_F} \ge \tfrac{2}{3}$.
\end{lemma}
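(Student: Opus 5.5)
We need to prove Lemma~\ref{lem:expl_peak}: if $\ell|\theta|\le 1/\sqrt{m_F}$, then $\Phi(\theta)=\mathcal{F}_\ell(\theta)^{m_F}\ge (1-\frac{1}{6m_F})^{2m_F}\ge 2/3$.

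\textbf{Closed form.} Summing the geometric series gives
\[
\mathcal{F}_\ell(\theta)=\Bigl(\frac{\sin(\ell\theta/2)}{\ell\sin(\theta/2)}\Bigr)^2 .
\]
So it suffices to show
\[
g(\theta):=\frac{\sin(\ell\theta/2)}{\ell\sin(\theta/2)}\ge 1-\frac{1}{6m_F}
\]
in the stated range. Raising to the power $2m_F$ then gives the first inequality.

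\textbf{Bounding the ratio.} Set $x=\ell\theta/2$, so $|x|\le 1/(2\sqrt{m_F})\le 1/2$. Since $|\theta/2|\le |x|$, both arguments lie in $[0,\pi/2]$ after taking absolute values, so both sines are nonnegative.

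For the denominator, use $\sin(\theta/2)\le |\theta|/2$, which gives $\ell\sin(\theta/2)\le |x|$.

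For the numerator, use $\sin x\ge x-x^3/6$. This follows by integrating $\cos x\ge 1-x^2/2$, one of the allowed elementary bounds.

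Combining the two,
\[
g\ge 1-\frac{x^2}{6}\ge 1-\frac{1}{24m_F}\ge 1-\frac{1}{6m_F}.
\]
The case $\theta=0$ is covered by continuity, since $g(0)=1$. If the intended convention was $|\theta|\ell\le 1/\sqrt{m_F}$ with $x=\ell\theta$ rather than $\ell\theta/2$, then $x^2\le 1/m_F$ and one gets exactly $1-\frac{1}{6m_F}$. Either way the stated bound holds.

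\textbf{Numerical inequality.} It remains to show $(1-\frac{1}{6m})^{2m}\ge 2/3$ for all integers $m\ge 1$. Write $h(m)=2m\ln(1-\frac{1}{6m})$. The function $m\mapsto(1-a/m)^m$ is increasing in $m$, so the minimum is at $m=1$. There the value is $(5/6)^2=25/36\approx 0.694\ge 2/3$.

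If monotonicity is not to be quoted, one can instead use $\ln(1-u)\ge -u-u^2$ for $u\le 1/2$. This gives
\[
h(m)\ge -\frac13-\frac{1}{18m}\ge -\frac{7}{18},
\]
and $e^{-7/18}\approx 0.678>2/3$.

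\textbf{Main obstacle.} The only delicate point is keeping the sine arguments inside $[0,\pi/2]$ so the elementary bounds apply. This is guaranteed by $|x|\le 1/2$ and $\ell\ge 2$.
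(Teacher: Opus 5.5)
Your proof is correct, but it takes a different route from the paper.

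The paper never uses the closed form. It bounds the modulus of the Dirichlet average $\frac{1}{\ell}\sum_{n<\ell}e^{in\theta}$ from below by its real part. It then applies $\cos(n\theta)\ge 1-n^2\theta^2/2$ term by term and sums with $\sum_{n<\ell}n^2\le \ell^3/3$. This gives $\sqrt{\mathcal{F}_\ell(\theta)}\ge 1-(\ell\theta)^2/6\ge 1-\frac{1}{6m_F}$, exactly the constant in the statement.

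You instead write $\sqrt{\mathcal{F}_\ell(\theta)}=\sin(\ell\theta/2)/(\ell\sin(\theta/2))$ and compare it with $\sin x/x$ at $x=\ell\theta/2$. This yields the sharper bound $1-(\ell\theta)^2/24$. The factor $4$ arises because your closed form is effectively the centred sum. The paper's real-part bound is applied to the uncentred sum, which equals $e^{i(\ell-1)\theta/2}$ times your ratio, so taking the real part discards the factor $\cos((\ell-1)\theta/2)$.

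The paper's route is slightly more elementary, using only termwise cosine bounds and no trigonometric identity, and it lands on the stated constant with no slack. Yours gives a stronger peak estimate at no extra cost. Your closed form is also the same one the paper relies on anyway in Lemma~\ref{lem:expl_tail}.

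Your final step, monotonicity of $m\mapsto(1-\frac{1}{6m})^{2m}$ with minimum $(5/6)^2$ at $m=1$, is identical to the paper's. Your alternative via $\ln(1-u)\ge -u-u^2$ on $[0,1/2]$ is also valid.

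The aside about a possible convention $x=\ell\theta$ is unnecessary and does not correspond to anything in the closed form. The hypothesis $\ell|\theta|\le 1/\sqrt{m_F}$ is unambiguous, and your main computation already covers it.
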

\begin{proof}
Bound the Dirichlet average below by its real part and use $\cos x \ge 1-x^2/2$, which gives
$\bigl|\tfrac{1}{\ell} \sum_{n<\ell} e^{in\theta}\bigr|
\ge \tfrac{1}{\ell} \sum_{n<\ell} \bigl(1-\tfrac{n^2\theta^2}{2}\bigr)
\ge 1-\tfrac{(\ell\theta)^2}{6}$,
using $\sum_{i<\ell}i^2\le\ell^3/3$.
All terms are non-negative under the hypothesis,
so no absolute value is lost.
Squaring and raising to the $m_F$-th power gives
$\Phi \ge (1-(\ell\theta)^2/6)^{2m_F} \ge (1-1/6m_F)^{2m_F}$,
and $m_F \mapsto (1-1/6m_F)^{2m_F}$ increases
from $(5/6)^2 = 0.694$ at $m_F = 1$ to $e^{-1/3} = 0.717$.
\end{proof}

\begin{lemma}[Upper bound for the filter tail]
\label{lem:expl_tail}
If $0 < |\theta| \le \pi$ then $\Phi(\theta) \le \bigl(\pi^2/(\ell\theta)^2\bigr)^{m_F}$.
\end{lemma}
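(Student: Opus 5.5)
The plan is to bound the Dirichlet average $D_\ell(\theta) := \tfrac{1}{\ell}\sum_{n=0}^{\ell-1} e^{in\theta}$ pointwise, using its closed form. Since $\Phi = \mathcal{F}_\ell^{\,m_F} = |D_\ell|^{2m_F}$, it suffices to prove the single-kernel estimate
\[
|D_\ell(\theta)| \le \frac{\pi}{\ell|\theta|}, \qquad 0<|\theta|\le\pi,
\]
and then raise it to the power $2m_F$.

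First, sum the geometric series. For $0<|\theta|\le\pi$ we have $e^{i\theta}\ne 1$, so
\[
|D_\ell(\theta)| = \frac{1}{\ell}\left|\frac{e^{i\ell\theta}-1}{e^{i\theta}-1}\right| = \frac{|\sin(\ell\theta/2)|}{\ell\,|\sin(\theta/2)|} \le \frac{1}{\ell\,|\sin(\theta/2)|}.
\]
Here I use $|e^{ix}-1| = 2|\sin(x/2)|$ and $|\sin(\ell\theta/2)|\le 1$.

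Second, bound the denominator from below using the concavity of $\sin$ on $[0,\pi/2]$, as the section preamble permits. Because $|\theta|/2\in(0,\pi/2]$, Jordan's inequality $\sin x \ge 2x/\pi$ gives
\[
|\sin(\theta/2)| \ge \frac{|\theta|}{\pi}.
\]
Substituting this yields $|D_\ell(\theta)|\le \pi/(\ell|\theta|)$.

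Finally, square to get $\mathcal{F}_\ell(\theta) \le \pi^2/(\ell\theta)^2$, and raise to the $m_F$-th power. This step is valid because both sides are non-negative, and it gives the claimed bound on $\Phi$.

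There is no real obstacle. The only point needing care is the range restriction $|\theta|\le\pi$: it guarantees both that $e^{i\theta}\ne1$ and that Jordan's inequality applies on $[0,\pi/2]$. In the applications, $\theta=(\omega_{\nu'}-w)\Delta t$ lies in $[-\pi,\pi]$ once $\Delta t=\pi/(2\omega_{\max})$ and $|w|\le\omega_{\max}$.

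Notice that the bound exceeds $1$ when $\ell|\theta|<\pi$. It is therefore only informative away from the peak, which is exactly the regime $|\omega_{\nu'}-w|\ge q/2$ in which Lemma~\ref{lem:expl_leak} uses it.
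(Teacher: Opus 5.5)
Your proposal is correct and follows essentially the same route as the paper: write $\mathcal{F}_\ell$ in closed form, drop $\sin^2(\ell\theta/2)\le 1$, bound $|\sin(\theta/2)|\ge|\theta|/\pi$ on $|\theta|\le\pi$ by concavity (Jordan's inequality), and raise the resulting bound to the $m_F$-th power. Your added remark that $|\theta|\le\pi$ holds in the application because $\Delta t=\pi/(2\omega_{\max})$ and $|\omega_{\nu'}-w|\le 2\omega_{\max}$ matches how Lemma~\ref{lem:expl_leak} invokes it.
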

\begin{proof}
We have $\mathcal{F}_\ell(\theta) \le (\ell^2\sin^2(\theta/2))^{-1}$
because $\sin^2(\ell\theta/2) \le 1$,
and on $|\theta| \le \pi$ concavity of $\sin$ on $[0,\pi/2]$
gives $|\sin(\theta/2)| \ge |\theta|/\pi$.
Raising this inequality to the $m_F$-th power yields the result.
\end{proof}

\begin{lemma}[Bound on spectral leakage]
\label{lem:expl_leak}
Put $L := \ell \Delta t$. If $qL \ge 2\pi e$ and $|w-\omega_\nu| \le 1/(\sqrt{m_F}L)$,
then Eq.~\eqref{eq:expl_leak_ref} holds, i.e.,
$\mathrm{leak}_\nu = \sum_{\nu'\ne\nu} p_{\nu'} \Phi_{\nu'} \le e^{-2m_F}$.
\end{lemma}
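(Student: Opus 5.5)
The plan is to bound each term $p_{\nu'}\Phi_{\nu'}$, $\nu'\neq\nu$, with the tail estimate of Lemma~\ref{lem:expl_tail} and then sum over atoms ordered by their distance from $\omega_\nu$. Put $\theta_{\nu'} := (\omega_{\nu'}-w)\Delta t$. The choice $\Delta t = \pi/(2\omega_{\max})$ together with $|\omega_{\nu'}|,|w|\lesssim \omega_{\max}$ gives $|\theta_{\nu'}|\le\pi$, so Lemma~\ref{lem:expl_tail} applies. (If $w$ sits slightly outside $[-\omega_{\max},\omega_{\max}]$, I would use the $2\pi$-periodicity of $\Phi$ together with the factor-two margin in $\Delta t$.) This gives $\Phi_{\nu'} \le \bigl(\pi/(L|\omega_{\nu'}-w|)\bigr)^{2m_F}$ with $L=\ell\Delta t$.

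Next I would lower-bound the separations. Index the other atoms by their rank $j\ge1$ on each side of $\omega_\nu$. Consecutive gaps are at least $q$, so $|\omega_{\nu'}-\omega_\nu|\ge jq$. Since $|w-\omega_\nu|\le 1/(\sqrt{m_F}L)\le q/(2\pi e)$ by the hypothesis $qL\ge 2\pi e$, we get $|\omega_{\nu'}-w|\ge jq - q/(2\pi e) \ge jq(1-1/(2\pi e))$. Hence
\[
L|\omega_{\nu'}-w| \ge 2\pi e\, j\,(1-1/(2\pi e)) = j(2\pi e - 1),
\]
and each term satisfies $\Phi_{\nu'}\le \bigl(\pi/(j(2\pi e-1))\bigr)^{2m_F}$. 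The ratio $\pi/(2\pi e-1)$ is about $0.198 < e^{-1}\approx0.368$, so the $j=1$ term is already below $e^{-2m_F}$ with room to spare.

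To finish, I would use $p_{\nu'}\le 1$, or better $\sum p_{\nu'}\le 1$, to control the sum. The crude route is $\mathrm{leak}_\nu\le \max_{\nu'\ne\nu}\Phi_{\nu'}\cdot\sum_{\nu'}p_{\nu'} \le (\pi/(2\pi e-1))^{2m_F}\le e^{-2m_F}$. It uses only the nearest atom, so it needs neither a sum over $j$ nor any dependence on $s$. This makes the argument essentially immediate.

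The only real obstacle is making sure the angle reduction to $|\theta|\le\pi$ is legitimate for every atom. Two issues need attention: the trial point $w$ from the grid search may lie slightly outside the band, and for atoms far from $w$ the wrapped angle could bring an atom back near $w$. The wrapping concern disappears because $|\omega_{\nu'}-w|\Delta t\le 2\omega_{\max}\Delta t+\text{small}=\pi+\text{small}$. I would absorb the small excess either by restricting the grid to $[-\omega_{\max},\omega_{\max}]$ or by noting that the wrapped distance still satisfies the $j\ge1$ lower bound, which is all the argument uses.
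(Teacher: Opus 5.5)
Your proof is correct and is essentially the paper's argument: you apply Lemma~\ref{lem:expl_tail} to each $\Phi_{\nu'}$, lower-bound $|\omega_{\nu'}-w|$ using the gap $q$ and the proximity of $w$ to $\omega_\nu$, and finish with $\sum_{\nu'}p_{\nu'}\le 1$. Your separation bound $jq-q/(2\pi e)$ is slightly sharper than the paper's $jq/2$, and your periodicity remark correctly handles the edge case $|(\omega_{\nu'}-w)\Delta t|$ slightly above $\pi$ (possible when $w$ lies just outside $[-\omega_{\max},\omega_{\max}]$), which the paper's proof passes over.
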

\begin{proof}
Ordering of the atoms gives $|\omega_{\nu'}-\omega_\nu| \ge q|\nu-\nu'|$,
and $w$ lies within $1/(\sqrt{m_F}L) \le q/2$ of $\omega_\nu$,
so $|\omega_{\nu'}-w| \ge q|\nu-\nu'|/2$ for $\nu' \ne \nu$.
All these differences satisfy $|(\omega_{\nu'}-w)\Delta t| \le \pi$
by Eq.~\eqref{eq:expl_dt},
so Lemma~\ref{lem:expl_tail} applies and
\begin{equation}
\mathrm{leak}_\nu
\le \left( \frac{2\pi}{qL} \right)^{2 m_F} \sum_{\nu'\ne\nu}\frac{p_{\nu'}}{|\nu-\nu'|^{2m_F}}
\le \left( \frac{2\pi}{qL} \right)^{2 m_F}
\le e^{-2 m_F},
\end{equation}
using $|\nu-\nu'| \ge 1$, $\sum_{\nu'}p_{\nu'} \le 1$ and $qL \ge 2\pi e$.
\end{proof}

\begin{lemma}[Dirichlet kernel bound]
\label{lem:expl_dirichlet}
For $0 < |\theta| \le \pi$,
$\bigl|\sum_{k=0}^{N}e^{ik\theta}\bigr|
=\bigl|\sin((N+1)\theta/2)/\sin(\theta/2)\bigr|\le\pi/|\theta|$.
\end{lemma}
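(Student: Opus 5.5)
The plan is to evaluate the geometric sum in closed form and then bound numerator and denominator separately. For $\theta\neq 0$ we have $e^{i\theta}\neq 1$, so the finite geometric series gives
\[
\sum_{k=0}^{N} e^{ik\theta} = \frac{e^{i(N+1)\theta}-1}{e^{i\theta}-1}.
\]
Factor $e^{i(N+1)\theta/2}$ out of the numerator and $e^{i\theta/2}$ out of the denominator, and use $e^{ix}-e^{-ix}=2i\sin x$. This yields
\[
\sum_{k=0}^{N} e^{ik\theta} = e^{iN\theta/2}\,\frac{\sin((N+1)\theta/2)}{\sin(\theta/2)}.
\]
Taking moduli removes the unimodular phase, which gives the claimed equality.

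For the inequality, bound the numerator by $|\sin((N+1)\theta/2)|\le 1$. For the denominator, reuse the concavity estimate already invoked in Lemma~\ref{lem:expl_tail}. Since $0<|\theta|\le\pi$, the half-angle satisfies $|\theta|/2\in(0,\pi/2]$. On $[0,\pi/2]$ the function $\sin$ is concave and lies above the chord joining $(0,0)$ to $(\pi/2,1)$, so $\sin x\ge 2x/\pi$ there. Using oddness of $\sin$, this gives $|\sin(\theta/2)|\ge |\theta|/\pi>0$. Dividing the two bounds gives $\pi/|\theta|$.

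The only point needing care is that the denominator never vanishes on the stated range. This holds because $\theta/2$ stays inside $(-\pi/2,\pi/2]\setminus\{0\}$, which is exactly why the hypothesis $|\theta|\le\pi$ is imposed. The lemma is presumably used for the Vandermonde conditioning in Proposition~\ref{prop:expl_vandermonde}. There the off-diagonal Gram entries are such Dirichlet sums evaluated at $\theta=(\hat\omega_\nu-\hat\omega_{\nu'})\Delta t$, and the sampling choice of Eq.~\eqref{eq:expl_dt} keeps these arguments in $[-\pi,\pi]$. That check belongs to the later proposition, so nothing further is required here.
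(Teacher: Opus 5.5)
Your proof is correct and follows the paper's argument: the equality comes from the geometric-series closed form, and the bound from $|\sin((N+1)\theta/2)|\le 1$ combined with the concavity estimate $|\sin(\theta/2)|\ge|\theta|/\pi$ used in Lemma~\ref{lem:expl_tail}. One trivial slip: for $|\theta|\le\pi$ the half-angle lies in $[-\pi/2,\pi/2]\setminus\{0\}$, closed at both ends rather than $(-\pi/2,\pi/2]$, and this changes nothing in the argument.
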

\begin{proof}
Summing the geometric series and taking the modulus yields the equality. \\
Applying $1/|\sin(\theta/2)| \le \pi/|\theta|$
as in Lemma~\ref{lem:expl_tail} provides the upper bound.
\end{proof}

\begin{proposition}[Smallest singular value of the Vandermonde matrix]
\label{prop:expl_vandermonde}
Let $\zeta_\nu = e^{i\lambda_\nu\Delta t}$ have ordered frequencies with consecutive gaps at least $\kappa$,
and let $V \in \mathbb{C}^{(N_t+1)\times s}$, $V_{k\nu} = \zeta_\nu^{\,k}$.
If $N_t + 1 \ge 4\pi(1+\ln(s))/(\kappa\Delta t)$ then
$\sigma_{\min}(V)^2 \ge (N_t+1)/2$, i.e.,
$\|V^{+}\| \le \sqrt{2/(N_t+1)}$.
\end{proposition}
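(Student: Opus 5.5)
The plan is to bound $\sigma_{\min}(V)^2$ through the Gram matrix $G = V^\dagger V \in \mathbb{C}^{s\times s}$, since $\sigma_{\min}(V)^2 = \lambda_{\min}(G)$. Its entries are
\[
G_{\nu\nu} = N_t+1, \qquad
G_{\nu\nu'} = \sum_{k=0}^{N_t} e^{ik(\lambda_{\nu'}-\lambda_\nu)\Delta t}\quad(\nu\neq\nu').
\]
Gershgorin's theorem then gives $\lambda_{\min}(G) \ge (N_t+1) - \max_\nu \sum_{\nu'\ne\nu}|G_{\nu\nu'}|$. It therefore suffices to show that each off-diagonal row sum is at most $(N_t+1)/2$.

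Each off-diagonal entry is a Dirichlet sum, so Lemma~\ref{lem:expl_dirichlet} applies with $\theta = (\lambda_{\nu'}-\lambda_\nu)\Delta t$ reduced modulo $2\pi$ to $(-\pi,\pi]$. In the setting of Appendix~\ref{app:explicit} we have $\Delta t = \pi/(2\omega_{\max})$ and all frequencies lie in $[-\omega_{\max},\omega_{\max}]$, so $|\lambda_{\nu'}-\lambda_\nu|\Delta t \le \pi$ and no wrap-around occurs. This gives $|G_{\nu\nu'}| \le \pi/(|\lambda_{\nu'}-\lambda_\nu|\Delta t)$. Since the frequencies are ordered with consecutive gaps at least $\kappa$, we have $|\lambda_{\nu'}-\lambda_\nu| \ge \kappa|\nu-\nu'|$. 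Summing over the at most two neighbours at each index distance $j$ then yields
\[
\sum_{\nu'\ne\nu}|G_{\nu\nu'}| \le \frac{2\pi}{\kappa\Delta t}\sum_{j=1}^{s-1}\frac1j \le \frac{2\pi(1+\ln s)}{\kappa\Delta t},
\]
using the harmonic-sum inequality announced at the start of Appendix~\ref{app:technical}.

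The hypothesis $N_t+1 \ge 4\pi(1+\ln s)/(\kappa\Delta t)$ is exactly the statement that this row sum is at most $(N_t+1)/2$. Hence $\sigma_{\min}(V)^2 \ge (N_t+1)/2$. Because $V$ has full column rank, $\|V^+\| = 1/\sigma_{\min}(V) \le \sqrt{2/(N_t+1)}$.

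The step I expect to need the most care is the applicability of Lemma~\ref{lem:expl_dirichlet}, namely ensuring that the phase differences stay within $(-\pi,\pi]$. If the proposition is read for general $\Delta t$, the modular distance could be smaller than $\kappa|\nu-\nu'|\Delta t$. I would therefore state explicitly that the nodes satisfy $|\lambda_\nu|\Delta t\le\pi/2$. This holds in every application: in Proposition~\ref{prop:expl_weights} the estimated nodes $\hat\omega_\nu$ lie within $\varepsilon_\omega \le q/4$ of the true ones, so they may exceed $\omega_{\max}$ slightly, but they remain comfortably inside the Nyquist band given the factor-two margin of Eq.~\eqref{eq:expl_dt}. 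With that in place, the remaining steps are routine.
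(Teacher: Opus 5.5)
Your argument is correct and is the paper's proof step for step: the Gram matrix $V^{*}V$ with diagonal $N_t+1$, the Dirichlet bound of Lemma~\ref{lem:expl_dirichlet} combined with $|\lambda_\nu-\lambda_{\nu'}|\ge\kappa|\nu-\nu'|$, the harmonic-sum estimate, and Gershgorin. Your wrap-around caveat is well founded: the paper uses $|\theta|\le\pi$ tacitly via Eq.~\eqref{eq:expl_dt}, and without such a restriction the statement fails, e.g.\ for $\lambda_2-\lambda_1=2\pi/\Delta t$. However, for the estimated nodes of Proposition~\ref{prop:expl_weights} the condition $|\hat\omega_\nu|\Delta t\le\pi/2$ is not guaranteed, since they can overshoot $\omega_{\max}$ by up to $\varepsilon_\omega$ and the factor-two margin is already spent on keeping pairwise differences at most $\pi$. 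What rescues the application is that, with $\theta=(\hat\omega_{\nu'}-\hat\omega_\nu)\Delta t$, one still has $2\pi-|\theta|\ge\pi-2\varepsilon_\omega\Delta t\ge\pi/2\ge(s-1)\kappa\Delta t$ for $\kappa=q/2$, so the reduced angle obeys the same lower bound $\kappa|\nu-\nu'|\Delta t$.
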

\begin{proof}
The Gram matrix $(V^*V)_{\nu\nu'} = \sum_{k=0}^{N_t} e^{ik(\lambda_{\nu'}-\lambda_\nu)\Delta t}$
has diagonal $N_t+1$ and, by Lemma~\ref{lem:expl_dirichlet} together with
$|\lambda_\nu-\lambda_{\nu'}| \ge \kappa|\nu-\nu'|$,
$|(V^*V)_{\nu\nu'}| \le \pi/(\kappa\Delta t\,|\nu-\nu'|)$ for $\nu \ne \nu'$.
The off-diagonal row sum is therefore at most
$\frac{\pi}{\kappa\Delta t} \cdot 2 \sum_{j=1}^{s-1}\frac{1}{j} \le \frac{2\pi(1+\ln(s))}{\kappa\Delta t}$,
and Gershgorin's theorem applied to the Hermitian matrix $V^{*}V$ gives
$\lambda_{\min}(V^{*}V) \ge (N_t+1)-\max_\nu R_\nu$, which the hypothesis makes at least $(N_t+1)/2$.
\end{proof}

This is the only place where the number of atoms enters the bound,
and it enters logarithmically, through $\Gamma$.

\paragraph{Numerical verification.}
Every estimate above was checked numerically.
The empirical Hoeffding tail lies below $2e^{-M\delta^2}$ in all cells tested
($M \in \{50, 200, 1000\}$, $\delta \in \{0.05, 0.1, 0.2\}$).
The Dirichlet bound is attained as $\theta \to 0$ and cannot be improved.
Proposition~\ref{prop:expl_vandermonde} holds
with a margin of approximately a factor of two
for equally spaced nodes,
the worst case for the row sum.
For the filter, $\phi_k \ge 0$ and $\sum_k \phi_k = 1$ to machine precision,
and Lemma~\ref{lem:expl_peak} holds with true minimum $0.920$ against the claimed $0.694$.
The measured leakage at $qL = 2\pi e$ falls like $e^{-9m_F}$ rather than $e^{-2m_F}$,
because Lemma~\ref{lem:expl_tail} drops $\sin^2(\ell\theta/2) \le 1$ and
Lemma~\ref{lem:expl_leak} omits the $|\nu-\nu'|^{-2m_F}$ suppression of distant atoms.
For the estimator as a whole, on a synthetic even measure with adversarially signed data
errors the frequency error is exactly linear in $\delta$
(fitted slope $0.999$ at fixed $m_F$) and the product
$|\Delta\omega|\,p_{\min}/\delta$ is constant as $p_{\min}$ is varied,
confirming both scalings of Eq.~\eqref{eq:expl_freqbound}.
The exponents of $1/\varepsilon$ and $1/p_{\min}$ are therefore attained.
The prefactor remains loose by a factor of about $10^5$ for randomly signed noise,
and by the factor $\Gamma^3$,
which a sharper leakage estimate would reduce by roughly a factor four in $m_F$.

\section{Finite-Temperature Correction}
\label{app:finite_temp}

This appendix details the quantitative analysis of the systematic error
introduced when the unobserved bath is at a finite temperature.
We identify the physical origin of this bias,
evaluate its first-order effect on the Jacobi coefficients,
bound its propagation to the transport time,
and show how the odd-parity sector of the signal can be repurposed as an in-situ thermometer.

\subsection{Effect of finite temperature on the measured signal}

The exact reconstruction theorem relies on the bath being maximally mixed.
At a finite bath temperature, we write
\begin{equation}
\rho_b(\beta)
= \frac{e^{-\beta H_b}}{Z_b}
= \frac{\id_b}{d_b} - \frac{\beta}{d_b}\,\Theta + O(\beta^2), \qquad
\Theta := H_b - \frac{\tr H_b}{d_b} \, \id_b.
\label{eq:ftc_gibbs}
\end{equation}
The two-state protocol therefore measures
\begin{equation}
\tilde C_\alpha(t;\beta)
= \tr \Bigl[ \tfrac{1}{2} \, \sigma^\alpha_p \otimes \rho_b(\beta) \; \sigma^\alpha_p(t) \Bigr]
= C_\alpha(t) - \frac{\beta}{D} \, \tr \bigl[ (\sigma^\alpha_p \otimes \Theta) \, \sigma^\alpha_p(t) \bigr] + O(\beta^2).
\end{equation}
In the energy eigenbasis
$\tilde C_\alpha(t;\beta) = \sum_{jk} w_{jk}(\beta)\,e^{i\omega_{jk}t}$ with
\begin{equation}
w_{jk}(\beta) = \frac{|M^\alpha_{jk}|^2}{D}
- \frac{\beta}{D}\,\langle E_k|\sigma^\alpha_p\otimes\Theta|E_j\rangle\,M^\alpha_{jk}
+ O(\beta^2).
\label{eq:ftc_weights}
\end{equation}
The measured signal is therefore the characteristic function of a \emph{deformed} measure.
Writing
\begin{equation}
\tilde\mu^{(\beta)}_\alpha := \sum_{j,k} w_{jk}(\beta)\,\delta_{\omega_{jk}}, \qquad
\tilde C_\alpha(t;\beta) = \int e^{i\omega t}\,d\tilde\mu^{(\beta)}_\alpha(\omega),
\label{eq:ftc_measure_def}
\end{equation}
Eq.~\eqref{eq:ftc_weights} gives its expansion in $\beta$,
\begin{equation}
\tilde\mu^{(\beta)}_\alpha = \mu_\alpha + \beta \, \delta\mu_\alpha + O(\beta^2), \qquad
\delta\mu_\alpha := -\frac{1}{D}\sum_{j,k} \langle E_k|\sigma^\alpha_p \otimes \Theta | E_j \rangle \, M^\alpha_{jk} \, \delta_{\omega_{jk}}.
\label{eq:ftc_dmu_def}
\end{equation}
The atoms of $\delta\mu_\alpha$ sit at the same Bohr frequencies as those of $\mu_\alpha$.
Only the weights differ, and the total mass vanishes,
$\delta\mu_\alpha(\mathbb{R}) = 0$,
because $\tilde C_\alpha(0;\beta) = 1$ exactly at every $\beta$.

Thus finite temperature changes only the weights of the Bohr frequencies and leaves their positions unchanged.
The crucial point is that the $O(\beta)$ correction depends on matrix elements of
$\sigma^\alpha_p\otimes\Theta$,
which are \emph{not} determined by the Hilbert--Schmidt spectral measure $\mu_\alpha$.
Hence, for a nontrivial bath Hamiltonian,
the measured finite-temperature signal is not the characteristic function of $\mu_\alpha$,
and achieving exact reconstruction via Theorem~\ref{thm:oracle_free}
requires additional information or control.
This is a limitation of the access model and does not affect the Bohr frequencies themselves.
Single-temperature, single-channel data do not determine the correction,
because the $O(\beta)$ term contains matrix elements that $\mu_\alpha$ does not fix.
The three ways of removing the difficulty (tunable $\beta$, independent knowledge of $H_b$,
and bitstring randomization) are listed in Sec.~\ref{sec:finite_temp}.

\subsection{First-order effect on the Jacobi coefficients}

Decompose the correction measure of Eq.~\eqref{eq:ftc_dmu_def} as
$\delta\mu_\alpha = \delta\mu_{\mathrm e} + i\,\delta\mu_{\mathrm o}$, where
$\delta\mu_{\mathrm e}$ is even and $\delta\mu_{\mathrm o}$ is odd.
This decomposition
follows from $w_{kj}=\overline{w_{jk}}$ together with $\omega_{kj}=-\omega_{jk}$.
The measured signal stays real for every $\beta$,
and the odd part appears as a component odd in $t$ (a sine component).

Since the infinite-temperature measure is even, the orthonormal polynomials obey
$\pi_n(-\omega) = (-1)^n \pi_n(\omega)$~\cite{gautschi2004}.
First-order perturbation of the three-term recurrence~\eqref{eq:three_term},
retaining the induced rotation of the polynomials,
gives the variations of the coefficients
as integrals of the unperturbed orthonormal polynomials
against the \emph{perturbation} measure $\delta\mu_\alpha$,
\begin{align}
\delta\betaop_n &= \int \omega \, \pi_n \pi_{n-1} \, d\,\delta\mu_\alpha
 - \frac{\betaop_n}{2} \left( \int \pi_n^2 \, d\,\delta\mu_\alpha + \int \pi_{n-1}^2 \, d\,\delta\mu_\alpha \right)
 - \betaop_{n-1} \int \pi_n \pi_{n-2} \, d\,\delta\mu_\alpha,
\label{eq:ftc_dbeta_full} \\
\delta\alphaop_n &= \int \omega \, \pi_n^2 \, d\,\delta\mu_\alpha
- 2 \betaop_n \int \pi_n \pi_{n-1} \, d\,\delta\mu_\alpha.
\label{eq:ftc_dalpha_full}
\end{align}
Two distinct things vary when $\mu_\alpha \mapsto \mu_\alpha + \beta\,\delta\mu_\alpha$,
and Eq.~\eqref{eq:ftc_dbeta_full} collects both.
The first is the measure in the defining integral
$\betaop_n = \int \omega\,\pi_n\pi_{n-1}\,d\mu_\alpha$.
The second is the orthonormal system itself, which must be re-orthonormalized in
the perturbed space.
Writing $\pi_n \mapsto \pi_n + \beta\,\dot\pi_n$ with
$\dot\pi_n = \sum_{m\le n} c_{nm}\pi_m$ and differentiating
$\int \pi_n\pi_m \, d\mu_\alpha = \delta_{nm}$ gives
$c_{nm} + c_{mn} = -\int \pi_n\pi_m \, d\,\delta\mu_\alpha$.
Since $\dot\pi_m$ has degree at most $m$,
the coefficient $c_{mn}$ vanishes for $m<n$, so that
\begin{equation}
c_{nm} = -\int \pi_n \pi_m \, d\,\delta\mu_\alpha \quad (m < n), \qquad
c_{nn} = -\frac{1}{2}\int \pi_n^2 \, d\,\delta\mu_\alpha.
\label{eq:ftc_rotation_coeff}
\end{equation}
Substituting these into the variation of the defining integral
and using the recurrence~\eqref{eq:three_term} twice
produces the three terms of Eq.~\eqref{eq:ftc_dbeta_full}.
They are the direct variation of the measure,
the renormalization of $\pi_n$ and $\pi_{n-1}$ (the $c_{nn}$ terms),
and the rotation of $\pi_n$ toward $\pi_{n-2}$ (an off-diagonal $c_{nm}$ term).

Denote by $\mathcal{V}_n[\nu]$ the right-hand side of Eq.~\eqref{eq:ftc_dbeta_full}
viewed as a functional of the perturbing measure,
so that $\delta\betaop_n = \mathcal{V}_n[\delta\mu_\alpha]$.
Because $\mathcal{V}_n$ is \emph{linear} in $\nu$,
it isolates the effect of the perturbation
and vanishes when evaluated on the unperturbed measure $\mu_\alpha$.
Indeed,
\begin{equation}
\mathcal{V}_n[\mu_\alpha] = \betaop_n - \frac{\betaop_n}{2}(1+1) - 0 = 0,
\qquad \text{hence} \qquad
\mathcal{V}_n[\nu] = \mathcal{V}_n\bigl[\nu - \nu(\mathbb{R})\,\mu_\alpha\bigr],
\label{eq:ftc_scale_inv}
\end{equation}
the first equality being the infinitesimal form of the exact invariance
$\betaop_n[c\,\mu_\alpha] = \betaop_n[\mu_\alpha]$, $c>0$
(rescaling a measure rescales $\pi_n$ by $1/\sqrt c$ and leaves the recurrence coefficients unchanged).
Only the \emph{shape} of the thermal correction biases the chain,
never its normalization, consistently with $\delta\mu_\alpha(\mathbb{R}) = 0$ in Eq.~\eqref{eq:ftc_dmu_def}.
The polynomial factors in the first line are even and those in the second line are odd,
so the parity of the contributing part of $\delta\mu_\alpha$ is fixed.
The hopping coefficients $\betaop_n$ are affected
only by the even part of the thermal correction,
and the diagonal coefficients $\alphaop_n$ only by the odd part.
The even-measure projection used in our non-negative least-squares pipeline consequently
removes the $O(\beta)$ odd-sector error and restores $\alphaop_n = 0$ exactly,
but it does not remove the physical bias in $\betaop_n$.

If the retained spectral atoms satisfy $p_\nu \ge p_{\mathrm{cut}}$, define
\begin{equation}
\varrho_\star := \Bigl\|\frac{d\,\delta\mu_\alpha}{d\mu_\alpha}\Bigr\|_{L^\infty(\mu_\alpha)}
\le 2\Delta H_b\sqrt{\frac{g_{\max}}{D\,p_{\mathrm{cut}}}},
\label{eq:ftc_rho_star}
\end{equation}
with $\Delta H_b$ the spectral half-spread of $H_b$
and $g_{\max}$ the maximum multiplicity of ordered pairs pooled into one retained Bohr-frequency atom.
This factor follows by applying Cauchy--Schwarz to the pooled sum over the $g_\nu$ transitions contributing to each atom.
Applying the Cauchy--Schwarz inequality in $L^2(\mu_\alpha)$
to each term of Eq.~\eqref{eq:ftc_dbeta_full}
and using $\betaop_n\le\Lambda$ gives the following bound to first order in $\beta$.
\begin{equation}
\bigl| \tbetaop_n(\beta) - \betaop_n \bigr| \le 3 \beta \, \Lambda \, \varrho_\star + O(\beta^2), \qquad
\Lambda = \max_{jk} | \omega_{jk} |.
\label{eq:jacobi_finite_temp}
\end{equation}
This reproduces the bound in Eq.~\eqref{eq:jacobi_finite_temp_main} of Sec.~\ref{sec:finite_temp},
representing the finite-temperature systematic error
relevant to the reconstructed operator-Krylov chain.

Equation~\eqref{eq:ftc_dbeta_full} was verified numerically
for the $N = 6$ ring at $g_x = 0.2$ in two ways.
For a generic even, traceless perturbation its right-hand side agrees with
a central finite difference of the Lanczos coefficients to ten significant digits,
which confirms the rotation coefficients~\eqref{eq:ftc_rotation_coeff}.
Against the exact finite-$\beta$ evaluation it reproduces $\delta\betaop_n$ to $0.5\%$,
whereas keeping only its first term is wrong by factors of $2.0$--$6.4$
and has the wrong sign at $n=3$.
The re-orthonormalization terms therefore contribute at leading order.
After the even projection $\max_n|\alphaop_n| \lesssim 10^{-14}$,
confirming that the odd-sector error is removed exactly.

\subsection{Quantitative bounds on the correction}
\label{app:tau_error}

Let $\epsilon_\beta := 3 \beta \Lambda \varrho_\star$ and assume
$\betaop_n \ge \betaop_{\min} > \epsilon_\beta$.
Since $\tauop(m) = \sum_{n<m} 1/\betaop_n$,
\begin{equation}
\frac{\bigl| \tauop(m;\beta) - \tauop(m) \bigr|}{\tauop(m)}
\le \frac{\betaop_{\max}\,\epsilon_\beta}{\betaop_{\min} (\betaop_{\min} - \epsilon_\beta)},
\label{eq:tau_error}
\end{equation}
which is independent of the Krylov depth $m$.
The accumulation of terms is compensated by the growth of $\tauop(m)$ itself.
The same substitution
$\delta \to \delta_\beta := \sup_t |\tilde C_\alpha(t;\beta)-C_\alpha(t)| \le 2\beta\Delta H_b$
in the stability estimate of Theorem~\ref{thm:sample_complexity} (Proposition~\ref{prop:expl_freq})
bounds the induced pole and parameter errors.
The resulting figures are quoted in the finite-temperature column of Table~\ref{tab:finite_temp}.

For the $N = 6$ XXZ benchmark at $g_x = 0.2$ an exact finite-$\beta$ calculation gives
$\max_n |\Delta\betaop_n| \approx 30\,\beta J^2$ and $|\Delta\tauop|/\tauop \approx 1.1\,\beta J$.
At $\beta J = 10^{-4}$ the resulting transport-time bias is about $10^{-4}$,
well below the finite-window and sampling errors reported in Table~\ref{tab:finite_temp}.
At room temperature (NV-center) or
at the cryogenic readout temperature typical of donor-spin sensors ($\sim$0.3--3~K),
the nuclear dipolar bath coupling ($\sim$1~kHz) is
many orders of magnitude below the thermal frequency scale $k_{\mathrm B}T/h$ in either regime,
so the finite-temperature correction is negligible for the present protocol.

\subsection{In-situ thermometry from the odd component}
\label{app:thermometry}

The odd-in-$t$ component identified above can be used as a thermometer.
Replacing step~(3) of the pipeline of Appendix~\ref{app:construction} by the augmented fit
\begin{equation}
\hat C_\alpha(t_k) \simeq q_0
+ \sum_{\omega>0} q_\omega \cos(\omega t_k)
+ \sum_{\omega>0} q'_\omega \sin(\omega t_k),
\label{eq:ftc_augmented_fit}
\end{equation}
with the cosine block constrained to $q_\omega \ge 0$ as before and the sine block left
unconstrained (the odd part is a signed measure), the diagnostic
\begin{equation}
\frac{\|q'\|_1}{\|q\|_1} = O(\beta\,\Delta H_b)
\label{eq:ftc_thermometer}
\end{equation}
returns the departure from infinite temperature from the data alone,
with no external calibration.
The cosine-only estimator used in our numerics has $q' \equiv 0$ by construction
and therefore cannot serve as a thermometer,
although it does remove the odd-sector error it cannot measure.
More fundamentally,
the odd sector vanishes identically whenever the system
possesses an antiunitary symmetry $\mathcal{T}$
with $\mathcal{T} H \mathcal{T}^{-1} = H$,
$\mathcal{T} \rho_b \mathcal{T}^{-1} = \rho_b$, and
$\mathcal{T} \sigma^\alpha_p \mathcal{T}^{-1} = \pm \sigma^\alpha_p$,
since then
$\tilde C_\alpha(t;\beta) = \overline{\tilde C_\alpha(t;\beta)} = \tilde C_\alpha(-t;\beta)$.
This covers every Hamiltonian with a real matrix representation,
including the XXZ benchmark used here,
for which the diagnostic reads zero at all temperatures
and the entire $O(\beta)$ effect sits in the even sector.
The thermometer is therefore informative precisely for systems without such a symmetry.

\section*{Acknowledgments}
S.M. thanks Franco Nori for introducing him to work on restricted-access Hamiltonian learning.
This work was supported by JSPS KAKENHI Grant Number JP24K00634.

\bibliographystyle{unsrturl}
\bibliography{refs_Op}

@article{difranco2009,
  author  = {Di Franco, Carlo and Paternostro, Mauro and Kim, M. S.},
  title   = {Hamiltonian Tomography in an Access-Limited Setting
             without State Initialization},
  journal = {Phys. Rev. Lett.},
  volume  = {102},
  pages   = {187203},
  year    = {2009},
  doi     = {10.1103/PhysRevLett.102.187203},
  eprint  = {0812.3510},
  archivePrefix = {arXiv},
}

@article{burgarth2009gateway,
  author  = {Burgarth, Daniel and Maruyama, Koji},
  title   = {Indirect {H}amiltonian identification through a small gateway},
  journal = {New J. Phys.},
  volume  = {11},
  pages   = {103019},
  year    = {2009},
  doi     = {10.1088/1367-2630/11/10/103019},
  eprint  = {0903.0612},
  archivePrefix = {arXiv},
}

@article{burgarth2009,
  author  = {Burgarth, Daniel and Maruyama, Koji and Nori, Franco},
  title   = {Coupling strength estimation for spin chains despite restricted access},
  journal = {Phys. Rev. A},
  volume  = {79},
  pages   = {020305(R)},
  year    = {2009},
  doi     = {10.1103/PhysRevA.79.020305},
  eprint  = {0810.2866},
  archivePrefix = {arXiv},
}

@article{wiebe2014,
  author  = {Wiebe, Nathan and Granade, Christopher and Ferrie, Christopher and Cory, D. G.},
  title   = {Hamiltonian Learning and Certification Using Quantum Resources},
  journal = {Phys. Rev. Lett.},
  volume  = {112},
  pages   = {190501},
  year    = {2014},
  doi     = {10.1103/PhysRevLett.112.190501},
}

@article{haah2023,
  author        = {Haah, Jeongwan and Kothari, Robin and Tang, Ewin},
  title         = {Optimal learning of quantum {H}amiltonians from high-temperature
                   {G}ibbs states},
  journal       = {Nat. Phys.},
  volume        = {20},
  pages         = {1027--1031},
  year          = {2024},
  doi           = {10.1038/s41567-023-02376-x},
  eprint        = {2108.04842},
  archivePrefix = {arXiv},
}

@article{huang2023,
  author  = {Huang, Hsin-Yuan and Tong, Yu and Fang, Di and Su, Yuan},
  title   = {Learning Many-Body {H}amiltonians with {H}eisenberg-Limited Scaling},
  journal = {Phys. Rev. Lett.},
  volume  = {130},
  pages   = {200403},
  year    = {2023},
  doi     = {10.1103/PhysRevLett.130.200403},
}

@article{bairey2019,
  author  = {Bairey, Eyal and Arad, Itai and Lindner, Netanel H.},
  title   = {Learning a local {H}amiltonian from local measurements},
  journal = {Phys. Rev. Lett.},
  volume  = {122},
  pages   = {020504},
  year    = {2019},
  doi     = {10.1103/PhysRevLett.122.020504},
}

@article{lapasar2012,
  author  = {Lapasar, Elham Hosseini and Maruyama, Koji and Burgarth, Daniel
             and Takui, Takeji and Kondo, Yasushi and Nakahara, Mikio},
  title   = {Estimation of Coupling Constants of a Three-Spin Chain:
             Case Study of {H}amiltonian Tomography with {NMR}},
  journal = {New J. Phys.},
  volume  = {14},
  pages   = {013043},
  year    = {2012},
  doi     = {10.1088/1367-2630/14/1/013043},
  eprint  = {1111.1381},
  archivePrefix = {arXiv},
}

@article{hangleiter2024,
  author  = {Hangleiter, Dominik and Roth, Ingo and Fuksa, Jonas
             and Eisert, Jens and Roushan, Pedram},
  title   = {Robustly learning the {H}amiltonian dynamics of a
             superconducting quantum processor},
  journal = {Nat. Commun.},
  volume  = {15},
  pages   = {9595},
  year    = {2024},
  doi     = {10.1038/s41467-024-52629-3},
  eprint  = {2108.08319},
  archivePrefix = {arXiv},
}

@article{nandy2025krylov,
  author  = {Nandy, Pratik and Matsoukas-Roubeas, Apollonas S.
             and Mart{\'i}nez-Azcona, Pablo and Dymarsky, Anatoly
             and del Campo, Adolfo},
  title   = {Quantum Dynamics in {K}rylov Space: Methods and Applications},
  journal = {Phys. Rep.},
  volume  = {1125},
  pages   = {1--82},
  year    = {2025},
  doi     = {10.1016/j.physrep.2025.05.001},
  eprint  = {2405.09628},
  archivePrefix = {arXiv},
}

@article{parker2019,
  author  = {Parker, Daniel E. and Cao, Xiangyu and Avdoshkin, Alexander
             and Scaffidi, Thomas and Altman, Ehud},
  title   = {A Universal Operator Growth Hypothesis},
  journal = {Phys. Rev. X},
  volume  = {9},
  pages   = {041017},
  year    = {2019},
  doi     = {10.1103/PhysRevX.9.041017},
  eprint  = {1812.08657},
  archivePrefix = {arXiv},
}

@article{rabinovici2022,
  author  = {Rabinovici, Ehud and S{\'a}nchez-Garrido, Adri{\'a}n
             and Shyam, Ronak and Sonner, Julian},
  title   = {Krylov localization and suppression of complexity},
  journal = {J. High Energy Phys.},
  volume  = {2022},
  pages   = {211},
  year    = {2022},
  doi     = {10.1007/JHEP03(2022)211},
}

@article{dymarsky2021,
  author  = {Dymarsky, Anatoly and Smolkin, Michael},
  title   = {Krylov complexity in conformal field theory},
  journal = {Phys. Rev. D},
  volume  = {104},
  pages   = {L081702},
  year    = {2021},
  doi     = {10.1103/PhysRevD.104.L081702},
  eprint  = {2104.09514},
  archivePrefix = {arXiv},
}

@article{barbon2019,
  author  = {Barb{\'o}n, J. L. F. and Rabinovici, Ehud and Shyam, Ronak
             and Sinha, Ritam},
  title   = {On the evolution of operator complexity beyond scrambling},
  journal = {J. High Energy Phys.},
  volume  = {2019},
  pages   = {264},
  year    = {2019},
  doi     = {10.1007/JHEP10(2019)264},
  eprint  = {1907.05393},
  archivePrefix = {arXiv},
}

@article{hua1990,
  author  = {Hua, Yingbo and Sarkar, Tapan K.},
  title   = {Matrix pencil method for estimating parameters of exponentially
             damped/undamped sinusoids in noise},
  journal = {IEEE Trans. Acoust. Speech Signal Process.},
  volume  = {38},
  pages   = {814--824},
  year    = {1990},
  doi     = {10.1109/29.56027},
}

@article{lanczos1950,
  author  = {Lanczos, Cornelius},
  title   = {An iteration method for the solution of the eigenvalue problem
             of linear differential and integral operators},
  journal = {J. Res. Natl. Bur. Stand.},
  volume  = {45},
  pages   = {255--282},
  year    = {1950},
}

@book{saad2003,
  author    = {Saad, Yousef},
  title     = {Iterative Methods for Sparse Linear Systems},
  edition   = {2nd},
  publisher = {SIAM},
  address   = {Philadelphia},
  year      = {2003},
}

@book{golub2009,
  author    = {Golub, Gene H. and Meurant, G{\'e}rard},
  title     = {Matrices, Moments and Quadrature with Applications},
  publisher = {Princeton University Press},
  year      = {2009},
}

@book{viswanath1994,
  author    = {Viswanath, V. S. and M{\"u}ller, G.},
  title     = {The Recursion Method: Application to Many-Body Dynamics},
  publisher = {Springer},
  address   = {Berlin},
  year      = {1994},
}

@book{chihara1978,
  author    = {Chihara, Theodore S.},
  title     = {An Introduction to Orthogonal Polynomials},
  publisher = {Gordon and Breach},
  address   = {New York},
  year      = {1978},
}

@article{lieb1972,
  author  = {Lieb, Elliott H. and Robinson, Derek W.},
  title   = {The finite group velocity of quantum spin systems},
  journal = {Commun. Math. Phys.},
  volume  = {28},
  pages   = {251--257},
  year    = {1972},
  doi     = {10.1007/BF01645779},
}

@article{nachtergaele2006,
  author  = {Nachtergaele, Bruno and Sims, Robert},
  title   = {Lieb-{R}obinson bounds and the exponential clustering theorem},
  journal = {Commun. Math. Phys.},
  volume  = {265},
  pages   = {119--130},
  year    = {2006},
  doi     = {10.1007/s00220-006-1556-1},
}

@book{gautschi2004,
  author    = {Gautschi, Walter},
  title     = {Orthogonal Polynomials: Computation and Approximation},
  publisher = {Oxford University Press},
  address   = {Oxford},
  year      = {2004}
}

@article{bartsch2009,
  author  = {Bartsch, Christian and Gemmer, Jochen},
  title   = {Dynamical Typicality of Quantum Expectation Values},
  journal = {Phys. Rev. Lett.},
  volume  = {102},
  pages   = {110403},
  year    = {2009},
  doi     = {10.1103/PhysRevLett.102.110403},
  eprint  = {0902.0927},
  archivePrefix = {arXiv}
}

@article{burgarth2012,
  author  = {Burgarth, Daniel and Yuasa, Kazuya},
  title   = {Quantum System Identification},
  journal = {Phys. Rev. Lett.},
  volume  = {108},
  pages   = {080502},
  year    = {2012},
  doi     = {10.1103/PhysRevLett.108.080502},
  eprint  = {1104.0583},
  archivePrefix = {arXiv}
}

@article{sone2017,
  author  = {Sone, Akira and Cappellaro, Paola},
  title   = {Hamiltonian identifiability assisted by a single-probe measurement},
  journal = {Phys. Rev. A},
  volume  = {95},
  pages   = {022335},
  year    = {2017},
  doi     = {10.1103/PhysRevA.95.022335},
  eprint  = {1609.09446},
  archivePrefix = {arXiv}
}

@article{chen2025probe,
  author        = {Chen, Sitan and Cotler, Jordan and Huang, Hsin-Yuan},
  title         = {Quantum Probe Tomography},
  year          = {2025},
  eprint        = {2510.08499},
  archivePrefix = {arXiv},
  primaryClass  = {quant-ph},
  journal       = {arXiv:2510.08499}
}

@article{ghosh2025local,
  author        = {Ghosh, Riddhi and Gilchrist, Alexei and Burgarth, Daniel},
  title         = {Characterizing errors in parameter estimation by local measurements},
  journal       = {Phys. Rev. Research},
  year          = {2026},
  doi           = {10.1103/4czy-q998},
  eprint        = {2505.15021},
  archivePrefix = {arXiv},
  primaryClass  = {quant-ph}
}

@article{zhang2014,
  author        = {Zhang, Jun and Sarovar, Mohan},
  title         = {Quantum Hamiltonian Identification from Measurement Time Traces},
  journal       = {Phys. Rev. Lett.},
  volume        = {113},
  pages         = {080401},
  year          = {2014},
  doi           = {10.1103/PhysRevLett.113.080401}
}

@article{liu2023krylov,
  author  = {Liu, Chang and Tang, Haifeng and Zhai, Hui},
  title   = {Krylov complexity in open quantum systems},
  journal = {Phys. Rev. Research},
  volume  = {5},
  pages   = {033085},
  year    = {2023},
  doi     = {10.1103/PhysRevResearch.5.033085},
  eprint  = {2207.13603},
  archivePrefix = {arXiv},
}

@article{bhattacharya2022open,
  author  = {Bhattacharya, Aranya and Nandy, Pratik and Nath, Pingal Pratyush
             and Sahu, Himanshu},
  title   = {Operator growth and {K}rylov construction in dissipative open
             quantum systems},
  journal = {J. High Energy Phys.},
  volume  = {2022},
  pages   = {081},
  year    = {2022},
  doi     = {10.1007/JHEP12(2022)081},
  eprint  = {2207.05347},
  archivePrefix = {arXiv},
}

@article{CortesGray2022,
  title     = {Quantum {K}rylov subspace algorithms for ground- and
               excited-state energy estimation},
  author    = {Cortes, Cristian L. and Gray, Stephen K.},
  journal   = {Phys. Rev. A},
  volume    = {105},
  pages     = {022417},
  year      = {2022},
  doi       = {10.1103/PhysRevA.105.022417},
  eprint    = {2109.06868},
  archivePrefix = {arXiv},
  primaryClass  = {quant-ph}
}

@article{Shen2023,
  title     = {Real-time {K}rylov theory for quantum computing algorithms},
  author    = {Shen, Yuning and Klymko, Katherine and Sud, Jarvis and
               Williams-Young, David B. and de Jong, Wibe A. and
               Tubman, Norm M.},
  journal   = {Quantum},
  volume    = {7},
  pages     = {1066},
  year      = {2023},
  doi       = {10.22331/q-2023-07-25-1066},
  eprint    = {2208.01063},
  archivePrefix = {arXiv},
  primaryClass  = {quant-ph}
}

@article{Kirby2024,
  title     = {Analysis of quantum {K}rylov algorithms with errors},
  author    = {Kirby, William},
  journal   = {Quantum},
  volume    = {8},
  pages     = {1457},
  year      = {2024},
  doi       = {10.22331/q-2024-08-29-1457},
  eprint    = {2401.01246},
  archivePrefix = {arXiv},
  primaryClass  = {quant-ph}
}

@article{Rabinovici2021,
  title     = {Operator complexity: a journey to the edge of {K}rylov space},
  author    = {Rabinovici, Eliezer and S\'{a}nchez-Garrido, Adri\'{a}n and
               Shir, Ruth and Sonner, Julian},
  journal   = {J. High Energy Phys.},
  volume    = {2021},
  pages     = {62},
  year      = {2021},
  doi       = {10.1007/JHEP06(2021)062},
  eprint    = {2009.01862},
  archivePrefix = {arXiv},
  primaryClass  = {hep-th}
}

@article{Simon1998,
  title     = {The classical moment problem as a self-adjoint finite difference operator},
  author    = {Simon, Barry},
  journal   = {Adv. Math.},
  volume    = {137},
  pages     = {82--203},
  year      = {1998},
  doi       = {10.1006/aima.1998.1728},
  eprint    = {math-ph/9906008},
  archivePrefix = {arXiv}
}

@article{Klymko2022,
  title     = {Real-Time Evolution for Ultracompact {H}amiltonian Eigenstates on Quantum Hardware},
  author    = {Klymko, Katherine and Mejuto-Zaera, Carlos and Cotton, Stephen J. and Wudarski, Filip and Urbanek, Miroslav and Hait, Diptarka and Head-Gordon, Martin and Whaley, K. Birgitta and Moussa, Jonathan and Wiebe, Nathan and de Jong, Wibe A. and Tubman, Norm M.},
  journal   = {PRX Quantum},
  volume    = {3},
  pages     = {020323},
  year      = {2022},
  doi       = {10.1103/PRXQuantum.3.020323},
  eprint    = {2103.08563},
  archivePrefix = {arXiv},
  primaryClass  = {quant-ph}
}

@article{Yoshioka2022,
  title     = {Generalized Quantum Subspace Expansion},
  author    = {Yoshioka, Nobuyuki and Hakoshima, Hideaki and Matsuzaki, Yuichiro and Tokunaga, Yuuki and Suzuki, Yasunari and Endo, Suguru},
  journal   = {Phys. Rev. Lett.},
  volume    = {129},
  pages     = {020502},
  year      = {2022},
  doi       = {10.1103/PhysRevLett.129.020502},
  eprint    = {2107.02611},
  archivePrefix = {arXiv},
  primaryClass  = {quant-ph}
}

@article{Rabinovici2025review,
  title     = {Krylov complexity},
  author    = {Rabinovici, Eliezer and S\'{a}nchez-Garrido, Adri\'{a}n and Shir, Ruth and Sonner, Julian},
  journal   = {arXiv:2507.06286},
  year      = {2025},
  eprint    = {2507.06286},
  archivePrefix = {arXiv},
  primaryClass  = {hep-th}
}

@inproceedings{Moitra2015,
  author        = {Moitra, Ankur},
  title         = {Super-resolution, Extremal Functions and the Condition Number of {V}andermonde Matrices},
  booktitle     = {Proceedings of the forty-seventh annual ACM symposium on Theory of Computing},
  series        = {STOC '15},
  pages         = {821–830},
  publisher     = {Association for Computing Machinery},
  year          = {2015},
  doi           = {10.1145/2746539.2746561},
  eprint        = {1408.1681},
  archivePrefix = {arXiv},
}

@book{Tsybakov2009,
  title     = {Introduction to Nonparametric Estimation},
  author    = {Tsybakov, Alexandre B.},
  series    = {Springer Series in Statistics},
  issn      = {0172-7397},
  isbn      = {9780387790527},
  year      = {2009},
  publisher = {Springer New York},
  doi       = {10.1007/b13794},
}

@book{Wainwright2019,
  title     = {High-Dimensional Statistics: A Non-Asymptotic Viewpoint},
  author    = {Wainwright, Martin J.},
  isbn      = {9781108498029},
  year      = {2019},
  publisher = {Cambridge University Press},
  doi       = {10.1017/9781108627771},
}

@book{Mehta2004,
  title     = {Random Matrices},
  author    = {Mehta, Madan Lal},
  edition   = {3rd},
  series    = {Pure and Applied Mathematics},
  volume    = {142},
  year      = {2004},
  publisher = {Elsevier/Academic Press},
  doi       = {10.1016/S0079-8169(04)80091-6},
}

@article{wedin1972,
  author  = {Wedin, Per-{\AA}ke},
  title   = {Perturbation bounds in connection with singular value decomposition},
  journal = {BIT Numerical Mathematics},
  volume  = {12},
  pages   = {99--111},
  year    = {1972},
  doi     = {10.1007/BF01932678}
}

@book{lawson1974,
  author    = {Lawson, Charles L. and Hanson, Richard J.},
  title     = {Solving Least Squares Problems},
  publisher = {Prentice-Hall},
  address   = {Englewood Cliffs, NJ},
  year      = {1974},
  doi       = {10.1137/1.9781611971217},
}

@article{ryan2010,
  author  = {Ryan, C. A. and Hodges, J. S. and Cory, D. G.},
  title   = {Robust Decoupling Techniques to Extend Quantum Coherence in Diamond},
  journal = {Phys. Rev. Lett.},
  volume  = {105},
  number  = {20},
  pages   = {200402},
  year    = {2010},
  doi     = {10.1103/PhysRevLett.105.200402},
  eprint  = {1008.2197},
  archivePrefix = {arXiv},
}

@article{kobayashi2021,
  author  = {Kobayashi, Takashi and Salfi, Joseph and Chua, Christopher
             and van der Heijden, Joost and House, Matthew G.
             and Culcer, Dmitry and Hutchison, Wayne D.
             and Johnson, Brett C. and McCallum, Julian C.
             and Riemann, Helge and Abrosimov, Nikolai V.
             and Becker, Peter and Pohl, Hans-Joachim
             and Simmons, Michelle Y. and Rogge, Sven},
  title   = {Engineering long spin coherence times of spin--orbit qubits in silicon},
  journal = {Nature Materials},
  volume  = {20},
  number  = {1},
  pages   = {38--42},
  year    = {2021},
  doi     = {10.1038/s41563-020-0743-3},
  eprint  = {1809.10859},
  archivePrefix = {arXiv},
}

@article{hetenyi2020exchange,
  author  = {Het{\'e}nyi, Bence and Kloeffel, Christoph and Loss, Daniel},
  title   = {Exchange interaction of hole-spin qubits in double quantum dots
             in highly anisotropic semiconductors},
  journal = {Phys. Rev. Research},
  volume  = {2},
  pages   = {033036},
  year    = {2020},
  doi     = {10.1103/PhysRevResearch.2.033036},
  eprint  = {2004.07658},
  archivePrefix = {arXiv},
}

@article{geyer2024anisotropic,
  author  = {Geyer, Simon and Het{\'e}nyi, Bence and Bosco, Stefano
             and Camenzind, Leon C. and Eggli, Rafael S. and Fuhrer, Andreas
             and Loss, Daniel and Warburton, Richard J. and Zumb{\"u}hl, Dominik M.
             and Kuhlmann, Andreas V.},
  title   = {Anisotropic exchange interaction of two hole-spin qubits},
  journal = {Nat. Phys.},
  volume  = {20},
  pages   = {1152--1157},
  year    = {2024},
  doi     = {10.1038/s41567-024-02481-5},
  eprint  = {2212.02308},
  archivePrefix = {arXiv},
}

@article{matsuura2026state,
  author        = {Matsuura, Shunji and Kawamura, Yoji and Salfi, Joseph and Iso, Satoshi},
  title         = {{K}rylov Break Times from an Inhomogeneous {L}ieb--{R}obinson Light Cone},
  journal       = {arXiv:2608.27399},
  year          = {2026},
  eprint        = {2608.27399},
  archivePrefix = {arXiv},
}

@article{matsuura2026hp,
  author        = {Matsuura, Shunji and Kawamura, Yoji and Salfi, Joseph and Nori, Franco and Baril, {\'E}lo{\"i}se and Iso, Satoshi},
  title         = {Coherent Quantum State Import from Uncontrolled Physical Systems},
  journal       = {in preparation},
  year          = {2026},
}

\end{document}